\documentclass[11pt]{article}

\usepackage[T1]{fontenc}
\usepackage{lmodern}
\usepackage[a4paper,margin=1in]{geometry}
\usepackage{amsmath,amssymb,amsthm}
\usepackage{mathtools}
\usepackage{bm}
\usepackage{graphicx}
\usepackage{booktabs}
\usepackage{array}
\usepackage[table]{xcolor}
\usepackage{colortbl}
\usepackage{enumitem}
\usepackage{caption}
\usepackage{tikz}
\usetikzlibrary{calc,arrows.meta,decorations.pathreplacing,positioning}
\usepackage[numbers,sort&compress]{natbib}
\usepackage[colorlinks=true,linkcolor=blue,citecolor=blue,urlcolor=blue]{hyperref}

\theoremstyle{plain}
\newtheorem{Theorem}{Theorem}
\newtheorem{Proposition}{Proposition}
\newtheorem{Lemma}{Lemma}
\newtheorem{Corollary}{Corollary}
\theoremstyle{remark}
\newtheorem{Remark}{Remark}
\theoremstyle{definition}

\newcommand{\R}{\mathbb{R}}
\newcommand{\Sphere}{\mathbb{S}}
\newcommand{\Simplex}{\Delta}
\newcommand{\JSD}{\mathrm{JSD}}
\newcommand{\KL}{D_{\mathrm{KL}}}
\newcommand{\FIM}{\mathbf{F}}
\newcommand{\dFR}{d_{\mathrm{FR}}}
\newcommand{\dang}{d_{\mathrm{ang}}}
\newcommand{\GM}{\mathrm{GM}}
\newcommand{\AM}{\mathrm{AM}}
\newcommand{\HM}{\mathrm{HM}}

\title{\bfseries An Information-Geometric Justification for Composite Coherence in Event-Based Narrative Extraction}
\author{Brian Keith-Norambuena\thanks{Department of Computing \& Systems Engineering,
Universidad Cat\'{o}lica del Norte, Antofagasta, Chile.
ORCID: 0000-0001-5734-8962. Correspondence: \texttt{brian.keith@ucn.cl}}}
\date{}

\begin{document}
\maketitle

\begin{abstract}
Graph-based narrative extraction relies on a coherence function to score
transitions between events, but the coherence metrics in current use are
defined operationally and lack an information-theoretic foundation. We
study the composite metric $C = \sqrt{A \cdot T}$, where $A$ is the
angular similarity of document embeddings and $T = 1 - d_{\mathrm{JS}}$
is the topic proximity through the Jensen-Shannon distance of soft
cluster memberships, and we provide an information-geometric reading of
this metric together with an axiomatic characterization of the
geometric-mean combinator. On the product manifold $\Sphere^{d-1}
\times \Simplex^{K-1}_+$, the negative log-coherence decomposes
additively into an angular and a topic cost. Because the Riemannian
metric tensor induced by the Jensen-Shannon distance on the simplex is
proportional to the Fisher information matrix, the topic component is
locally consistent with the Fisher-Rao metric singled out by Chentsov's
theorem. Within a parametric family of combinators (the compensability
spectrum), the geometric mean is the unique combinator consistent with
four natural axioms (a boundary/veto condition, symmetry,
log-additivity, normalization), and the construction also motivates a
proper product metric $d_\times$ that we use as a reference distance.
Experiments on four corpora spanning news and academic domains
(40~to~6{,}000 documents), three general-purpose embedding families
(GPT-4/ada-002, MPNet, MiniLM-L6) plus citation-aware SPECTER2, and
three alternative topic models (LDA, soft $k$-means, GMM) are
consistent with the framework: the Fisher identity holds with $R \ge
0.99$, the geometric mean tracks $d_\times$ closely ($\rho = 0.999$),
and a downstream LLM-as-judge consistency check shows that the
geometric mean is not empirically dominated by any alternative
combinator or single-channel baseline. Sweeping the compensability
spectrum, the bottleneck-coherence gap between extracted storylines and
random sequences splits into a symmetric component---maximized at the
geometric mean on the four corpora above and a fifth, human-navigation
corpus---and a displacement term; a cross-modal case study on a
human-curated image narrative reproduces the same effect in a second
modality. Together,
these results provide an information-geometric justification for the
composite coherence metric and articulate the conditions under which
the geometric mean is the natural choice.
\end{abstract}

\noindent\textbf{Keywords:} information geometry; Fisher-Rao metric; Jensen-Shannon divergence; narrative extraction; coherence metric; product manifold; Chentsov theorem; axiomatic characterization; scale complementarity; information theory

\medskip
\noindent\footnotesize This is a preprint of an article accepted for publication in \emph{Entropy} (MDPI). The main text and the supplementary material are combined into this single document; supplementary sections, tables, and figures are numbered with an ``S'' prefix.\normalsize

\vspace{1em}
\section{Introduction}\label{sec:intro}

Narratives are fundamental structures through which humans organize and communicate knowledge about sequences of events \citep{halverson2011master,keith2020narrative}. Computational narrative extraction seeks to automatically discover these structures from document collections, and has applications in intelligence analysis \citep{shahaf2010connecting}, journalism \citep{vossen2015newsreader}, and social science \citep{keith2023survey}.

At the core of graph-based narrative extraction methods lies a \emph{coherence function} that measures how smoothly one event transitions into another. Following the standard event-based formulation, in which each document represents a single event \citep{keith2023survey}---a common and natural assumption for news corpora, whose scope we discuss in Section~\ref{sec:prelim-narrative}---a corpus of $n$~documents induces a coherence graph $G=(V,E,C)$ with documents as nodes and edge weights $C(d_i,d_j)$ quantifying the transition quality between events $d_i$ and $d_j$. Storylines, the basic unit of a narrative, are paths through this graph that maximize some aggregate of edge coherence \citep{shahaf2010connecting,shahaf2012connecting,keith2020narrative,german2025narrative}.

The survey by Keith~Norambuena et~al. \citep{keith2023survey} documents the diverse, operational ways coherence has been defined: word-influence random walks \citep{shahaf2010connecting}, KL divergence with temporal decay \citep{yan2011evolutionary}, and, in multi-criteria path objectives, a coherence term weighted alongside auxiliary criteria such as relevance and coverage \citep{li2013evolutionary}. Each definition captures an intuitive aspect of narrative smoothness, but none of them is derived from an underlying geometry of the document space, none is shown to satisfy proper metric properties, and none is the consequence of an explicit set of axioms. These three concerns---absence of geometric structure, lack of true metric properties, and absence of an axiomatic foundation---are related but distinct, and the present paper addresses all three in a single framework.

Keith and Mitra \citep{keith2020narrative} introduced, in their Narrative Maps framework, a composite coherence metric
\begin{equation}\label{eq:coherence}
 C(d_i, d_j) \;=\; \sqrt{\,A(e_i, e_j)\;\cdot\; T(\hat{e}_i, \hat{e}_j)\,}
\end{equation}
that combines the angular similarity~$A$ of the document embeddings with the topic similarity $T = 1 - d_{\mathrm{JS}}$ (where $d_{\mathrm{JS}} = \sqrt{\JSD}$ is the Jensen-Shannon distance) of the soft cluster membership distributions, through the geometric mean. While this metric performs well empirically, an information-theoretic justification has not been formally established for it.

This paper makes four contributions:

\textbf{Geometric contribution (Section~\ref{sec:product}).} We give a product-manifold reading of the composite coherence metric on $\Sphere^{d-1} \times \Simplex^{K-1}_+$, on which the negative log-coherence decomposes additively into an angular and a topic cost. Because the Riemannian tensor induced by the Jensen-Shannon distance on the simplex is proportional to the Fisher information matrix, the topic component is locally consistent with the Fisher-Rao geometry singled out by Chentsov's theorem \citep{chentsov1982statistical}. We also define a proper product metric $d_\times$ that we use as a reference distance in the empirical analysis.

\textbf{Axiomatic contribution (Sections~\ref{sec:geomean} and~\ref{sec:properties}).} Within a parametric compensability spectrum of power-mean combinators (Min, HM, GM, AM, Max), we characterize the geometric mean as the unique combinator consistent with four natural axioms (a boundary/veto condition, symmetry, log-additivity, normalization). We discuss explicitly which axioms are load-bearing for this uniqueness statement and how dropping each one enlarges the admissible family. Adjoined to this spectrum is the Quad combinator $1 - d_\times^2$, which is not itself a power mean but arises directly as the similarity-domain counterpart of the product metric. A further proposition decomposes the bottleneck-coherence gap between coherent and incoherent storylines, swept across the compensability spectrum, into an even and an odd part: the even part is maximized at the geometric mean exactly when incoherent storylines are the more channel-imbalanced---the design premise of the coherence metric---and the odd part only displaces the observed peak. Section~\ref{sec:properties} records the formal properties of this combinator family: the power-mean compensability ordering, the position of the Quad combinator, and the metric structure of the bounded dissimilarities (notably that $1 - C_{\GM}$ is not a metric).

\textbf{Information-theoretic contribution (Section~\ref{sec:connections}).} We connect each per-channel cost to an exact classical information-theoretic quantity: the angular cost is the surprisal of a random-hyperplane (SimHash) collision, and the topic cost is a strictly increasing function of a mutual information. A data-processing argument further shows that the two channels capture complementary scales of abstraction (\emph{scale complementarity}), so that the composite is genuinely more informative than either channel alone.

\textbf{Empirical contribution (Section~\ref{sec:experiments}).} We validate the framework on six corpora drawn from the narrative-extraction literature. The metric-level and downstream analyses use four event-based corpora---Cuba, COVID, VisPub, and AMiner ($40$ to $6{,}000$ documents) from the Narrative Maps \citep{keith2020narrative} and Narrative Trails \citep{german2025narrative} repositories---together with three general-purpose embedding families plus citation-aware SPECTER2 and three alternative topic-model families (LDA, soft $k$-means, GMM). On these, the Fisher identity holds with $R \ge 0.99$, the geometric mean tracks $d_\times$ closely ($\rho = 0.999$), and a downstream LLM-as-judge consistency check shows that the geometric mean is not empirically dominated by any alternative combinator or single-channel baseline (Friedman $p \ge 0.099$). The bottleneck-gap analysis then adds the Wikispeedia human-navigation corpus \citep{west2012wikispeedia} as a non-circular, human-grounded anchor and, as a cross-modal case study, the ROGER expedition-photograph corpus \citep{german2025roger}: throughout, the metric's design premise holds, so that the symmetric component of the separation between extracted storylines and random sequences is maximized at the geometric mean. A secondary observation from the downstream comparison is that different combinators induce structurally different storylines at statistically indistinguishable aggregate quality, so the choice of combinator is a modeling decision---about which narrative structures to foreground---rather than a free parameter to be tuned for quality (Section~\ref{sec:discussion}).

The remainder of the paper is organized as follows. Section~\ref{sec:prelim} fixes notation and reviews related work. Section~\ref{sec:product} develops the product-manifold reading and its local Chentsov compatibility; Section~\ref{sec:geomean} gives the axiomatic characterization of the geometric mean together with the bottleneck-gap analysis; Section~\ref{sec:properties} records the metric's formal properties; and Section~\ref{sec:connections} connects the per-channel costs to classical information theory. Section~\ref{sec:experiments} reports the experimental validation; Section~\ref{sec:discussion} discusses implications and limitations; and Section~\ref{sec:conclusions} concludes.

\section{Preliminaries and Related Work}\label{sec:prelim}

\subsection{Event-Based Narrative Extraction}\label{sec:prelim-narrative}
We consider a temporally ordered corpus $\mathcal{D}=\{d_1,\dots,d_n\}$ in which each document represents a single event. Operationalizing events at the document level is the standard assumption in event-based narrative extraction \citep{keith2023survey,german2025narrative}: it is natural for news corpora, where an article---especially a breaking-news article---typically reports one main event, and it lets the narrative be studied at the level of inter-document transitions. Finer (sub-document or multi-event) resolutions are possible \citep{keith2023survey} but are orthogonal to the coherence question studied here. A \emph{coherence graph} $G=(V,E,C)$ has $V=\mathcal{D}$, edges $E$ connecting temporally compatible documents (i.e., $(d_i, d_j)\in E$ only if $i < j$, enforcing forward temporal ordering), and weights $C\colon V\!\times\!V\to[0,1]$ measuring transition quality. A \emph{storyline} is a path $\pi=(d_{i_1},\dots,d_{i_L})$ through $G$---a temporally ordered sequence of $L$ documents ($i_1<\dots<i_L$), so that $L$ is the number of events in the storyline. A \emph{narrative map} is a collection of overlapping storylines forming a DAG \citep{keith2020narrative}, though a DAG structure is not a strict requirement \citep{keith2023survey,german2025narrative}. The quality of $\pi$ depends on the coherence of its edges: common objectives include maximizing the minimum edge coherence (maximin paths \citep{german2025narrative}) or the geometric mean of edge coherences.

The survey by Keith~Norambuena et~al.~\citep{keith2023survey} catalogues nine distinct mathematical formulations of narrative coherence across three event resolution levels (clusters, documents, sentences), none of which is grounded in a principled mathematical framework. Existing metrics are either single-channel (e.g., word-influence random walks \citep{shahaf2010connecting,shahaf2012connecting}, minimum cosine similarity \citep{zhou2014generating}, $n$-gram overlap \citep{tran2013leveraging}) or combine multiple criteria via ad hoc rules (e.g., KL divergence with exponential temporal decay \citep{li2013evolutionary}, sigmoid-transformed JSD \citep{li2015tracking}, weighted multi-criteria scoring \citep{xu2018generating,liu2017growing,liu2020story}). Keith and Mitra \citep{keith2020narrative} introduced the composite metric $C=\sqrt{A\cdot T}$ studied in this paper, combining angular similarity of document embeddings with topic similarity via the Jensen-Shannon distance, which is the only formulation in the surveyed literature that operates on two distinct representation spaces with a multiplicative combinator. German et~al.~\citep{german2025narrative} adopted this metric in their subsequent work on Narrative Trails. Both works specify the metric operationally, without a theoretical justification. This paper provides the information-geometric justification for that specific combination.

\subsection{The Composite Coherence Metric}\label{sec:prelim-metric}
The composite coherence of Equation~\eqref{eq:coherence} combines two
components.

\textbf{Angular similarity.} Each document $d_i$ has a dense embedding $e_i\in\R^d$ obtained from a
language model. The angular similarity is
\begin{equation}\label{eq:angular}
 A(e_i, e_j) \;=\; 1 \;-\; \frac{\arccos\bigl(\cos(e_i, e_j)\bigr)}{\pi} \,,
\end{equation}
where $\cos(e_i,e_j) = e_i^\top e_j / (\|e_i\|\,\|e_j\|)$. Thus $A$ ranges from 0 (antipodal) to 1 (identical direction), and the angular distance $\dang(e_i,e_j)=\arccos(\cos(e_i,e_j))$ is the geodesic distance on the unit sphere $\Sphere^{d-1}$ (see Section~\ref{sec:prelim-sphere}).

\textbf{Topic similarity.} Soft cluster membership distributions $\hat{e}_i\in\Simplex^{K-1}$ are obtained by applying UMAP dimensionality reduction \citep{mcinnes2018umap, mcinnes2018umap_software} (\texttt{n\_neighbors}${}=32$, \texttt{n\_components}${}=48$, \texttt{min\_dist}${}=0.0$, cosine metric, fixed random seed) followed by HDBSCAN clustering \citep{campello2013density} (\texttt{min\_cluster\_size}${}=5$, excess-of-mass cluster selection) with soft membership assignment, where $\Simplex^{K-1} = \{p\in\R^K : p_k\ge 0,\;\sum_k p_k = 1\}$ denotes the probability simplex. The cluster count~$K$ is an emergent output of this configuration rather than a tuned parameter, with $K \ge 4$ the recommended operating regime (see Section~\ref{sec:exp4}, the Supplementary Materials for the cross-corpus and sensitivity analyses, and the Limitations in Section~\ref{sec:limitations}). The topic similarity is
\begin{equation}\label{eq:topic}
 T(\hat{e}_i, \hat{e}_j) \;=\; 1 \;-\; d_{\mathrm{JS}}(\hat{e}_i, \hat{e}_j)\,,
\end{equation}
where $d_{\mathrm{JS}}(p,q) = \sqrt{\JSD(p,q)}$ is the Jensen-Shannon \emph{distance} \citep{endres2003new,osterreicher2003new}, the Kullback-Leibler divergence is $\KL(p\|q) = \sum_k p_k \ln(p_k / q_k)$, and the Jensen-Shannon divergence is $\JSD(p,q) = \tfrac{1}{2}\KL(p\|\,m) + \tfrac{1}{2}\KL(q\|\,m)$ with $m=\tfrac{1}{2}(p+q)$ \citep{lin1991divergence}. Under the base-2 logarithm convention, $\JSD\in[0,1]$ and hence $d_{\mathrm{JS}}\in[0,1]$, so $T\in[0,1]$. We use the JS \emph{distance} (square root of the divergence) rather than the divergence itself for two related reasons. First, $d_{\mathrm{JS}}$ is a proper metric \citep{endres2003new,osterreicher2003new}, which is needed for the product-metric construction of Theorem~\ref{thm:metric-gm}. Second, the topic similarity then has the same first-order structure as the angular similarity: $A = 1 - \dang/\pi$ is linear in the spherical geodesic distance $\dang$, and the relation $d_{\mathrm{JS}} \approx \dFR / \sqrt{2 \ln 2}$ established in Remark~\ref{rem:fisher-connection} makes $T = 1 - d_{\mathrm{JS}}$ linear, to first order, in the Fisher-Rao geodesic distance on the simplex. The two components thus enter the log-coherence decomposition (Proposition~\ref{thm:geodesic}) at the same order in their respective geodesic distances, which is what we mean by ``aligning'' them; using the divergence $\JSD$ in place of $d_{\mathrm{JS}}$ would make the topic component second-order and break this symmetry (Remark~\ref{rem:infinitesimal}).

\textbf{Composite coherence.} With the two components $A$ and $T$ now specified in detail, the composite coherence of Equation~\eqref{eq:coherence} is their geometric mean, $C = \sqrt{A \cdot T}$. Why this particular combinator, applied to these particular components, is well-founded is the question the rest of the paper addresses.

\subsection{Information Geometry}\label{sec:prelim-ig}

The Fisher-Rao metric was introduced by Rao \citep{rao1945information} and studied systematically by Amari \citep{amari1985differential,amari2000methods}. For textbook treatments see Ay et al.\ \citep{ay2017information} and Nielsen \citep{nielsen2020elementary}.

\textbf{Information geometry in text analysis.}
Information-geometric methods have been applied to several NLP tasks, though not previously to narrative coherence. Lafferty and Lebanon \citep{lafferty2005diffusion} introduced diffusion kernels on the multinomial manifold for text classification, exploiting the Fisher-Rao geometry of word distributions. Lebanon \citep{lebanon2005riemannian} extended this to margin classifiers adapted to the simplex geometry. Colombo et al.\ \citep{colombo2022infolm} used the Fisher-Rao distance between masked language model output distributions as an evaluation metric for text generation, finding it outperforms standard $n$-gram metrics. Dhillon et al.\ \citep{dhillon2003divisive} used the generalized Jensen-Shannon divergence for information-theoretic word clustering, minimizing within-cluster JSD. In a different geometric direction, Nickel and Kiela \citep{nickel2017poincare} embedded symbolic data in hyperbolic (constant negative curvature) spaces to capture hierarchical structure. Our work differs in that the geometry arises from the coherence metric's structure rather than being imposed as a design choice: the product manifold $\Sphere^{d-1}\!\times\!\Simplex^{K-1}$ and the Fisher-Rao connection emerge from analyzing an existing metric, not from constructing a new one.

We now recall the key definitions used in the rest of the paper.

\textbf{Jensen-Shannon divergence.}
Lin \citep{lin1991divergence} popularized the JSD as a symmetrized, bounded variant of the KL divergence. It satisfies $\JSD(p,q)\ge 0$ with equality iff $p=q$. Endres and Schindelin \citep{endres2003new} and \"Osterreicher and Vajda \citep{osterreicher2003new} independently proved that $\sqrt{\JSD}$ is a metric. Fuglede and Tops{\o}e \citep{fuglede2004jensenshannon} showed that $\sqrt{\JSD}$ admits an isometric embedding into a Hilbert space.

\textbf{Fisher information matrix.} For a discrete distribution $p=(p_1,\dots,p_K)$ on a finite alphabet, the Fisher information matrix has entries
\begin{equation}\label{eq:fisher}
 \FIM(p)_{kl} \;=\; \frac{\delta_{kl}}{p_k}\,,
\end{equation}
where $\delta_{kl}$ is the Kronecker delta. This is a diagonal matrix with entries $1/p_k$.

\textbf{Fisher-Rao manifold.} The open probability simplex $\Simplex^{K-1}_+ = \{p\in\R^K : p_k>0,\; \sum_k p_k = 1\}$, equipped with the Riemannian metric tensor $g_{\mathrm{FR}}(p) = \FIM(p)$ induced by the Fisher information matrix, is the Fisher-Rao manifold $(\Simplex^{K-1}_+, g_{\mathrm{FR}})$. The Bhattacharyya angle is
\begin{equation}\label{eq:fr-distance}
 \dFR(p,q) \;=\; \arccos\Bigl(\sum_{k=1}^{K}\sqrt{p_k \, q_k}\,\Bigr)
 \;\in\; [0,\,\pi/2] \, .
\end{equation}
The Fisher-Rao manifold has constant positive sectional curvature $\kappa = 1/4$, and via the square-root parameterization $\xi_k = 2\sqrt{p_k}$ it maps isometrically onto a piece of the sphere $\Sphere^{K-1}_+$ of radius~2 \citep{amari2000methods}. Note that $\dFR$ as defined is the Bhattacharyya angle, i.e., the angular separation on the \emph{unit} sphere under $\eta_k = \sqrt{p_k}$. Infinitesimally, $\delta p^\top \FIM(p)\,\delta p = 4\,\dFR^2$. The factor of~4 arises because the metric~\eqref{eq:fisher} corresponds to the radius-2 sphere, whose geodesic distance is $2\dFR$.

\textbf{Chentsov's theorem.} Chentsov's uniqueness theorem \citep{chentsov1982statistical} is a cornerstone of information geometry. Among all Riemannian metrics on the manifold of probability distributions on a finite sample space, the Fisher-Rao metric is the unique metric (up to a global scaling constant) that is invariant under sufficient statistics (equivalently, under the congruent Markov embeddings of the sample space that they induce). See Campbell \citep{campbell1986extended} for an extension and Bauer, Bruveris, and Michor \citep{bauer2016uniqueness} for a modern proof in the infinite-dimensional setting. The scope of this uniqueness statement is important: it characterizes the metric \emph{tensor} (an infinitesimal object), not the global finite-distance distance function, so any monotone reparameterization of arc length yields a different scalar distance with the same underlying Riemannian geometry. We will use Chentsov's theorem in this restricted sense throughout.

\textbf{JSD-Fisher connection.}
The second-order expansion
\begin{equation}\label{eq:jsd-fisher}
 \JSD_{\mathrm{nat}}(p,\,p+\delta p) \;=\; \frac{1}{8}\,\delta p^\top \FIM(p)\,\delta p \;+\; O\!\bigl(\|\delta p\|^3\bigr)
\end{equation}
(equivalently $\tfrac{1}{2}\dFR^2$, since $\delta p^\top \FIM\,\delta p = 4\dFR^2$ for the Bhattacharyya angle~\eqref{eq:fr-distance}) is a well-known identity in information geometry \citep{nielsen2020elementary}. The novelty in the present paper is not the identity itself but its \emph{application}: recognizing that the topic component of an existing narrative coherence metric inherits, at the level of the Riemannian metric tensor, the Fisher-Rao geometry that Chentsov's theorem singles out among invariant metrics on the simplex (Section~\ref{sec:topic-compat}).

\textbf{Logarithm base and JS distance.}
The identity~\eqref{eq:jsd-fisher} is stated for the natural-logarithm convention, written $\JSD_{\mathrm{nat}}$. Throughout the rest of the paper, $\JSD$ (unsubscripted) denotes the base-2 Jensen-Shannon divergence, $\JSD = \JSD_{\mathrm{nat}}/\ln 2 \in [0,1]$, for which the identity becomes
\begin{equation}\label{eq:jsd-fisher-base2}
 \JSD(p,\,p+\delta p) \;=\; \frac{1}{8\ln 2}\,\delta p^\top \FIM(p)\,\delta p \;+\; O\!\bigl(\|\delta p\|^3\bigr)\,,
\end{equation}
so that $\JSD \approx \dFR^2/(2\ln 2)$ for nearby distributions. Since the topic similarity uses the Jensen-Shannon distance $d_{\mathrm{JS}} = \sqrt{\JSD}$, we obtain
\begin{equation}\label{eq:djs-fisher}
 d_{\mathrm{JS}}(p,\,p+\delta p) \;\approx\; \frac{\dFR(p,\,p+\delta p)}{\sqrt{2\ln 2}} \;+\; O\!\bigl(\|\delta p\|^2\bigr)\,,
\end{equation}
which is \emph{first}-order in the Fisher-Rao distance, paralleling the first-order relationship $A = 1 - \dang/\pi$ on the angular side. Two levels of this connection must be distinguished. At the \emph{infinitesimal} level, the relationship is \textbf{exact}: the Riemannian metric tensor induced by $d_{\mathrm{JS}}$ on $\Simplex^{K-1}_+$ is $(1/(8\ln 2))\,\FIM(p)$, i.e., proportional to the Fisher information matrix, so $d_{\mathrm{JS}}$ induces the same Riemannian geometry as the Fisher-Rao metric (up to the global scale factor $1/\sqrt{2\ln 2}$). At \emph{finite distances}, the relationship~\eqref{eq:djs-fisher} is approximate, with higher-order corrections that vanish as $\|\delta p\|\to 0$. Empirically, the finite-distance approximation achieves a Pearson correlation $R \ge 0.99$ across all corpora (Section~\ref{sec:exp4}); throughout, $R$ denotes the Pearson product-moment correlation coefficient (not its square) and $\rho$ a rank correlation.

\textbf{Finite-distance justification.}
Chentsov's theorem guarantees uniqueness of the \emph{metric tensor} (an infinitesimal object), but the coherence metric operates at finite distances. This is principled for three reasons. (a)~The metric tensor induced by $d_{\mathrm{JS}}$ on $\Simplex^{K-1}_+$ is \emph{exactly} $(1/(8\ln 2))\,\FIM(p)$, a constant multiple of the Fisher information matrix, so $d_{\mathrm{JS}}$ and $\dFR/\sqrt{2\ln 2}$ have the same unparametrized geodesics (the global constant rescales arc length and sectional curvature). (b)~The product decomposition (Proposition~\ref{thm:geodesic}) uses $d_{\mathrm{JS}}$ directly, not its Fisher-Rao approximation. (c)~Writing $\mathrm{BC}(p,q)=\sum_k\sqrt{p_kq_k}=\cos\dFR$ for the Bhattacharyya coefficient (exact, by~\eqref{eq:fr-distance}), the known bound $\tfrac{1}{2}(1-\mathrm{BC}) \le \JSD_{\mathrm{nat}} \le 1-\mathrm{BC}$ \citep{topsoe2000some} sandwiches $\JSD_{\mathrm{nat}}$ between two explicit functions of the Fisher-Rao distance at every distance, not merely infinitesimally---equivalently $\tfrac{1}{4}H^2 \le \JSD_{\mathrm{nat}} \le \tfrac{1}{2}H^2$ with $H^2 = 2(1-\mathrm{BC})$ the squared Hellinger distance (Proposition~\ref{prop:hellinger})---and provides explicit non-asymptotic control, confirmed empirically by $R \ge 0.99$ across all four corpora (Section~\ref{sec:exp4}).

\subsection{Spherical and Product-Manifold Geometry}\label{sec:prelim-sphere}
The unit sphere in $\R^d$ is denoted $\Sphere^{d-1} = \{x\in\R^d : \|x\| = 1\}$. For unit vectors $u,v\in\Sphere^{d-1}$, the angular distance $\dang(u,v) = \arccos(u^\top v)$ is the geodesic distance of the round metric on $\Sphere^{d-1}$. The sphere is a Riemannian manifold of constant positive sectional curvature~1, and the normalized angular similarity $A = 1 - \dang/\pi$ maps $[0,\pi]\to[0,1]$.

\label{sec:prelim-product}On the topic side, the relevant construction is the product Riemannian manifold: $M_1\times M_2$ with the product metric $g = g_1 \oplus g_2$ is standard in Riemannian geometry \citep{lee2018introduction}. In multi-view learning, combining similarity functions from different feature spaces is common \citep{xu2013survey}, but the standard approach is additive (kernel sums) rather than multiplicative (geometric mean). The geometric mean of positive-definite matrices has been studied extensively in matrix analysis \citep{bhatia2009positive}, but its use as a combinator of scalar similarities is less explored.

\section{The Product Manifold Structure}\label{sec:product}
We now show that the composite coherence metric admits a natural interpretation as a distance on a product Riemannian manifold. The tools we use (product Riemannian manifolds, the Fisher-Rao metric, and Chentsov's theorem) are well-established in information geometry \citep{amari2000methods,ay2017information}. The contribution is recognizing that they apply to this specific coherence metric from the narrative extraction literature, yielding consequences for its justification and diagnostic use that were not previously apparent.

\textbf{Formal results, interpretations, and empirical checks.}
Sections~\ref{sec:product} to~\ref{sec:connections} mix three kinds of statement, and we keep them typographically distinct so that the logical status of each conclusion is clear. \emph{Theorems, propositions, and corollaries} (with proofs) state what is formally established under explicitly stated hypotheses. Paragraphs headed \textbf{Interpretation} give the conceptual or geometric reading of a formal result; they are not themselves proven and should be read as motivation. Paragraphs headed \textbf{Empirical check}, and all of Section~\ref{sec:experiments}, report measurements on real corpora; these support the framework but do not extend the proofs. Where a result is exact and where it is approximate or corpus-dependent is stated explicitly at each occurrence.

\subsection{Two Geometric Spaces}\label{sec:two-spaces}
The composite coherence metric operates on two distinct Riemannian manifolds. The angular component $A = 1 - \dang/\pi$ measures proximity on $\Sphere^{d-1}$ via the geodesic distance of the round metric: after $\ell^2$-normalization, the embeddings $\bar{e}_i = e_i/\|e_i\|$ lie on $\Sphere^{d-1}$ and $\dang(\bar{e}_i,\bar{e}_j) = \arccos(\bar{e}_i^\top\bar{e}_j)$ is that geodesic distance. The topic component $T = 1 - d_{\mathrm{JS}}$ measures proximity on $\Simplex^{K-1}_+$ via the Jensen-Shannon distance $d_{\mathrm{JS}} = \sqrt{\JSD}$, which is a proper metric \citep{endres2003new,osterreicher2003new} and is locally Fisher-Rao compatible in the sense recorded in Section~\ref{sec:prelim-ig}.

\subsection{Coherence as Product-Manifold Cost}\label{sec:geodesic}
The decomposition that we record next is algebraically immediate---it
follows from $\log\sqrt{AT} = \tfrac{1}{2}(\log A + \log T)$---and its
value lies in the \emph{geometric reading} it enables. Read as a cost on a product Riemannian manifold,
the identity puts the angular and topic terms on the same footing, and
this product-manifold picture is what subsequently motivates the
product metric $d_\times$
(defined and shown to be a proper metric in Theorem~\ref{thm:metric-gm}), the diagnostic
attribution it enables (Section~\ref{sec:five-arguments}), and the
axiomatic characterization of the geometric mean
(Section~\ref{sec:axioms}). Readers familiar with product spaces and
combined distance functions may find Proposition~\ref{thm:geodesic}
unsurprising on its own; its role here is structural, as the pivot
between the metric-construction and axiomatic strands of the framework.

\begin{Proposition}[Product decomposition]\label{thm:geodesic}
 Let $\dang \in [0, \pi]$ be the angular distance on $\Sphere^{d-1}$
 and $d_{\mathrm{JS}} \in [0, 1]$ the base-2 Jensen-Shannon distance on
 $\Simplex^{K-1}_+$. For any documents $d_i$ and $d_j$ with $C(d_i,d_j) > 0$ (i.e.\ $\dang < \pi$ and $d_{\mathrm{JS}} < 1$),
 \begin{equation}\label{eq:exact-decomp}
  -\log C(d_i, d_j)
  \;=\;
  \tfrac{1}{2}\, \underbrace{\bigl[-\log\bigl(1 - \dang/\pi\bigr)\bigr]}_{\text{angular cost}}
  \;+\;
  \tfrac{1}{2}\, \underbrace{\bigl[-\log\bigl(1 - d_{\mathrm{JS}}\bigr)\bigr]}_{\text{topic cost}} \,.
 \end{equation}
 Both cost functions $h(x) = -\log(1 - x)$ are monotone increasing and convex on $[0, 1)$, so $C$ is monotone decreasing in each component distance separately; it is not, however, a monotone function of the product metric $d_\times$ (Theorem~\ref{thm:metric-gm}).
\end{Proposition}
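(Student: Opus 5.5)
The plan is to handle the three assertions of Proposition~\ref{thm:geodesic} in order: the identity, the monotonicity/convexity claim, and the non-monotonicity with respect to $d_\times$.

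\textbf{Identity.} By \eqref{eq:coherence}, \eqref{eq:angular} and \eqref{eq:topic}, $C = \sqrt{A\,T}$ with $A = 1-\dang/\pi$ and $T = 1-d_{\mathrm{JS}}$. The hypothesis $C>0$ means $\dang<\pi$ and $d_{\mathrm{JS}}<1$, so both $A$ and $T$ lie in $(0,1]$ and every logarithm is finite. Applying $-\log$ to $C=(AT)^{1/2}$ then gives $-\log C = \tfrac12(-\log A) + \tfrac12(-\log T)$ directly. This is \eqref{eq:exact-decomp} once the angular cost is written with normalized argument $x=\dang/\pi\in[0,1)$ and the topic cost with $x=d_{\mathrm{JS}}\in[0,1)$. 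The identity holds for any logarithm base.

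\textbf{Cost function.} For $h(x)=-\log(1-x)$ on $[0,1)$ we have $h'(x)=1/(1-x)>0$ and $h''(x)=1/(1-x)^2>0$, so $h$ is strictly increasing and convex. Since $-\log C$ is a positive combination of $h(\dang/\pi)$ and $h(d_{\mathrm{JS}})$, and $-\log$ is strictly decreasing, $C$ is strictly decreasing in each component distance with the other held fixed.

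\textbf{Non-monotonicity in $d_\times$.} This is the only step needing care, because $d_\times$ is defined later in Theorem~\ref{thm:metric-gm}. I will assume it is the product metric $d_\times=\sqrt{(\dang/\pi)^2+d_{\mathrm{JS}}^2}$ built from the normalized components. Writing $a=\dang/\pi$ and $t=d_{\mathrm{JS}}$, $C$ is a monotone function of $-\log C=\tfrac12(h(a)+h(t))$. Restricted to a level circle $a^2+t^2=r^2$, this quantity is not constant, and I will exhibit that with two explicit points.
\begin{itemize}[leftmargin=*]
\item Balanced point: $a=t=r/\sqrt2$.
\item Unbalanced point: $a=r$, $t=0$.
\end{itemize}
For $r=0.8$, the balanced point gives $C=1-0.8/\sqrt2\approx0.434$. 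The unbalanced point gives $C=\sqrt{0.2}\approx0.447$. These two pairs have the same $d_\times$ but different $C$. Perturbing $r$ slightly then gives pairs with $d_\times$ ordered one way and $C$ ordered the same way (or the opposite way), so no monotone function links $C$ to $d_\times$.

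The main risk is matching the exact definition of $d_\times$, in particular its normalization. Any Euclidean-type combination of $a$ and $t$ has circular or elliptical level sets along which $h(a)+h(t)$ varies, so the same two-point construction adapts to whatever normalization Theorem~\ref{thm:metric-gm} uses.
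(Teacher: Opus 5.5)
Your proof is correct and matches the paper's for the identity, which the paper establishes by the same one-line manipulation $-\log\sqrt{AT}=\tfrac12(-\log A)+\tfrac12(-\log T)$ and then stops. The paper argues monotonicity only in Theorem~\ref{thm:metric-gm} and asserts the non-monotonicity in $d_\times$ there without an example, so your derivative check and your explicit equal-$d_\times$, unequal-$C$ pair fill in what the paper leaves implicit. Two small remarks: the $1/\sqrt{2}$ factor in the paper's $d_\times$ does not affect the pair, and the perturbation step is superfluous, since equal $d_\times$ with unequal $C$ already shows that $C$ is not any function of $d_\times$.
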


\begin{proof}
 Direct algebraic identity: $-\log C = -\log\sqrt{AT} = -\tfrac{1}{2}(\log A + \log T) = \tfrac{1}{2}[-\log(1 - \dang/\pi)] + \tfrac{1}{2}[-\log(1 - d_{\mathrm{JS}})]$.
\end{proof}

\begin{figure}[t]
 \centering
\begin{tikzpicture}[scale=1.05, font=\small]
  \begin{scope}[shift={(0,0)}]
    \draw[thick] (0,0) circle (2);
    \draw[gray!60] (-2,0) arc[start angle=180, end angle=360, x radius=2, y radius=0.55];
    \draw[gray!60, dashed] (2,0) arc[start angle=0, end angle=180, x radius=2, y radius=0.55];
    \coordinate (di) at (-0.95, 1.15);
    \coordinate (dj) at (1.25, 0.85);
    \draw[blue!70!black, thick] (di) .. controls (0.0, 1.7) .. (dj);
    \fill (di) circle (2pt) node[above left] {$\bar e_i$};
    \fill (dj) circle (2pt) node[above right] {$\bar e_j$};
    \node[blue!70!black] at (0.15, 1.85) {$\dang$};
    \node[below=0.4em] at (0,-2) {$\Sphere^{d-1}$};
  \end{scope}

  \begin{scope}[shift={(6.5,-0.4)}]
    \coordinate (v1) at (-1.5, -0.6);
    \coordinate (v2) at ( 1.5, -0.6);
    \coordinate (v3) at ( 0.0,  2.0);
    \draw[thick] (v1) -- (v2) -- (v3) -- cycle;
    \fill[gray!20, opacity=0.25] (v1) -- (v2) -- (v3) -- cycle;
    \coordinate (pi) at (-0.5, 0.3);
    \coordinate (pj) at ( 0.55, 0.5);
    \draw[red!70!black, thick] (pi) .. controls (0.05, 0.62) .. (pj);
    \fill (pi) circle (2pt) node[below left] {$\hat e_i$};
    \fill (pj) circle (2pt) node[below] {$\hat e_j$};
    \node[red!70!black] at (0.25, 0.95) {$d_{\mathrm{JS}}$};
    \node[below=0.4em] at (0,-0.65) {$\Simplex^{K-1}_+$ \scriptsize (Fisher-Rao)};
  \end{scope}

  \node[anchor=west] at ([yshift=-5pt]-2.0, -3.0) {%
    $\displaystyle -\log C(d_i, d_j)
      \;=\;
      \underbrace{\tfrac{1}{2}\bigl[-\log(1 - \dang/\pi)\bigr]}_{\text{angular cost}}
      \;+\;
      \underbrace{\tfrac{1}{2}\bigl[-\log(1 - d_{\mathrm{JS}})\bigr]}_{\text{topic cost}}$%
  };
\end{tikzpicture}
 \caption{The product-manifold reading of the composite coherence metric. The angular component of the document representation lives on the unit sphere $\Sphere^{d-1}$, and the topic component lives on the open simplex $\Simplex^{K-1}_+$ equipped with the Jensen-Shannon (Fisher-Rao-compatible) geometry. Each document $i$ is represented by an angular vector $\bar e_i$ on the sphere and a topic distribution $\hat e_i$ on the simplex; for a document pair, the geodesic on each factor has length $\dang$ and $d_{\mathrm{JS}}$ respectively. The log-coherence decomposes additively into an angular cost and a topic cost, one per factor of the product manifold.\label{fig:product-manifold}}
\end{figure}

Figure~\ref{fig:product-manifold} illustrates the product-manifold reading: each document pair contributes one geodesic on the sphere factor and one geodesic on the simplex factor, and the log-coherence $-\log C$ adds the two costs with equal weight.

\begin{Remark}[Fisher-Rao connection]\label{rem:fisher-connection}
 The JSD-Fisher identity \eqref{eq:djs-fisher} gives $d_{\mathrm{JS}} \approx \dFR/\sqrt{2\ln 2}$ for nearby distributions. Substituting into the topic term of~\eqref{eq:exact-decomp} yields
 \begin{equation}\label{eq:fisher-substitution}
  -\log C
  \;\approx\;
  \tfrac{1}{2}\bigl[-\log(1 - \dang/\pi)\bigr]
  \;+\;
  \tfrac{1}{2}\bigl[-\log\bigl(1 - \dFR/\sqrt{2\ln 2}\,\bigr)\bigr] \,,
 \end{equation}
 connecting the coherence cost to the Fisher-Rao distance, which is the Riemannian metric on the statistical manifold singled out by Chentsov's theorem (Theorem~\ref{thm:chentsov}). The angular term is left exact. Only the $d_{\mathrm{JS}}\to\dFR$ substitution is approximate, and it is only defined where $\dFR/\sqrt{2\ln 2} < 1$; its finite-distance accuracy on the corpus is reported in the approximation-quality test (Section~\ref{sec:exp1}).
\end{Remark}

\begin{Remark}[Infinitesimal limit]\label{rem:infinitesimal}
 In the limit of vanishing distances ($\dang\to 0$, $d_{\mathrm{JS}}\to 0$), Taylor-expanding each term of~\eqref{eq:exact-decomp} via $-\log(1-x) = x + x^2/2 + O(x^3)$ yields the exact expansion
 \begin{equation}\label{eq:geodesic-approx}
  -\log C
  \;=\;
  \tfrac{1}{2}\bigl( \dang/\pi + d_{\mathrm{JS}} \bigr)
  \;+\;
  \tfrac{1}{4}\bigl( (\dang/\pi)^2 + d_{\mathrm{JS}}^2 \bigr)
  \;+\; O(\epsilon^3) \, .
 \end{equation}
 Both the angular and topic contributions enter at the \emph{same} order: the first-order terms are $\dang/(2\pi)$ and $d_{\mathrm{JS}}/2$. This structural symmetry arises from using the Jensen-Shannon \emph{distance} $d_{\mathrm{JS}} = \sqrt{\JSD}$ (which is first-order in $\dFR$ by eq.~\ref{eq:djs-fisher}) rather than the divergence $\JSD$ (which is second-order in $\dFR$).

 If one further substitutes $d_{\mathrm{JS}} \approx \dFR/\sqrt{2\ln 2}$ (valid at first order), the leading terms become $\tfrac{1}{2}(\dang/\pi + \dFR/\sqrt{2\ln 2})$. This first-order relation cannot, however, be substituted throughout while keeping the second-order expansion exact: since $d_{\mathrm{JS}} = \dFR/\sqrt{2\ln 2} + O(\dFR^2)$, the first-order term $\tfrac12 d_{\mathrm{JS}}$ already carries an $O(\dFR^2)$ correction---of the same order as the second-order terms---so re-expressing the whole expansion in $\dFR$ would drop a second-order contribution. (Substituting inside the second-order term itself, $d_{\mathrm{JS}}^2 \to \dFR^2/(2\ln 2)$, is by contrast harmless: it errs only at $O(\dFR^3)$.) The second-order expansion is therefore exact only in terms of $d_{\mathrm{JS}}$.

 At typical corpus distances ($\dang/\pi \approx 0.35$, see Section~\ref{sec:exp1}), the exact formula~\eqref{eq:exact-decomp} should be used.

 Since $d_\times^2 = \tfrac{1}{2}\bigl[(\dang/\pi)^2 + d_{\mathrm{JS}}^2\bigr]$ (Theorem~\ref{thm:metric-gm}), the expansion~\eqref{eq:geodesic-approx} can be written $-\log C = \tfrac{1}{2}\bigl(\dang/\pi + d_{\mathrm{JS}}\bigr) + \tfrac{1}{2}\,d_\times^2 + O(\epsilon^3)$: the squared product metric appears exactly as the second-order correction to the additive first-order cost.
\end{Remark}

\subsection{Local Chentsov Compatibility of the Topic Component}\label{sec:topic-compat}
\begin{Theorem}[Chentsov]\label{thm:chentsov}
 Among all Riemannian metrics on the manifold $\Simplex^{K-1}_+$ of probability distributions over a finite alphabet of size~$K$, the Fisher-Rao metric is the unique metric (up to a positive multiplicative constant) that is invariant under sufficient statistics---equivalently, under the congruent Markov embeddings $\Simplex^{K'-1}\hookrightarrow\Simplex^{K-1}$ ($K'\le K$) that they induce \citep{chentsov1982statistical}.
\end{Theorem}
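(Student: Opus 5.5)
Chentsov's theorem is classical and the paper cites it rather than proving it, so we only sketch the standard argument, in the form of Campbell's proof. The plan has four steps: reduce to invariance under Markov embeddings, pass to the uniform point, use symmetry to pin the tensor down to two constants, and show one constant must vanish.

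\textbf{Step 1: existence and reduction.} First we check that $g_{\mathrm{FR}}$ is invariant. A congruent Markov embedding $f\colon \Simplex^{K'-1}\to\Simplex^{K-1}$ splits each atom $i$ into several atoms $k\in B_i$ with fixed weights $q_{k|i}$. Pulling back gives
\[
\sum_{k}\frac{(\delta p_k)^2}{p_k} \;=\; \sum_i\sum_{k\in B_i}\frac{q_{k|i}^2(\delta p_i)^2}{q_{k|i}p_i} \;=\; \sum_i\frac{(\delta p_i)^2}{p_i},
\]
so the Fisher metric is preserved. For uniqueness, we work with a family of metrics $g^{(K)}$, one for each $K$, that are compatible with all such embeddings. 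Permutations of the alphabet are special Markov embeddings. Hence $g^{(K)}$ at the barycenter $u_K=(1/K,\dots,1/K)$ is invariant under the symmetric group $S_K$.

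\textbf{Step 2: the barycenter.} The tangent space at $u_K$ is the hyperplane $\{v:\sum_k v_k=0\}$. An $S_K$-invariant symmetric bilinear form on $\R^K$ has the form $a\sum_k v_kw_k + b(\sum_k v_k)(\sum_k w_k)$. The second term vanishes on tangent vectors. So $g^{(K)}_{u_K} = c_K\cdot\langle\cdot,\cdot\rangle$, and in Fisher normalization this is $(c_K/K)\,g_{\mathrm{FR}}(u_K)$. Next, the embedding $\Simplex^{K-1}\to\Simplex^{KM-1}$ that splits every atom into $M$ equal pieces maps $u_K$ to $u_{KM}$. Comparing the two expressions forces the ratio $g^{(K)}_{u_K}/g_{\mathrm{FR}}(u_K)$ to be the same constant $\lambda$ for all $K$.

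\textbf{Step 3: rational points.} Take $p\in\Simplex^{K-1}_+$ with rational coordinates $p_i=n_i/N$. Split atom $i$ into $n_i$ equal pieces. This Markov embedding sends $p$ to $u_N$. Pulling back $\lambda g_{\mathrm{FR}}(u_N)$ and using the invariance computation from Step 1 gives $g^{(K)}_p=\lambda\, g_{\mathrm{FR}}(p)$. Rational points are dense, and a Riemannian metric is continuous, so the identity extends to all of $\Simplex^{K-1}_+$.

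\textbf{Main obstacle.} The hard step is Step 2, specifically the bookkeeping showing that the constant $\lambda$ does not depend on $K$. One has to do it in the tangent-hyperplane formulation and avoid the extended-cone formulation: in the cone, an extra $(\sum v)^2$ term would survive, and this is the second constant $b$ that appears in Campbell's version. Continuity also has to be assumed explicitly for Step 3 to go through. Since the paper uses the result only as a cited tool, we would defer to \citep{chentsov1982statistical,campbell1986extended} for the full details.
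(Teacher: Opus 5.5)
The paper gives no proof of this statement. It restates Chentsov's theorem as a classical result and defers to Chentsov, Campbell, and Bauer--Bruveris--Michor, so your sketch adds content rather than taking a competing route. The sketch is the standard finite-dimensional argument, and it is correct:
\begin{itemize}
\item the pullback computation shows that $g_{\mathrm{FR}}$ is invariant;
\item permutation invariance forces $g^{(K)}_{u_K}=\lambda_K\,g_{\mathrm{FR}}(u_K)$;
\item the equal-splitting map $u_K\mapsto u_{KM}$ gives $\lambda_K=\lambda_{KM}$, hence $\lambda_K=\lambda_{KK'}=\lambda_{K'}$;
\item splitting atom $i$ into $n_i$ equal pieces sends a rational point to $u_N$, and continuity finishes the argument.
\end{itemize}

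\textbf{What your sketch actually proves.} Steps~2 and~3 both map into simplices of \emph{larger} dimension: $KM$, and $N$ with arbitrarily large denominators. So you prove the family version: a collection $\{g^{(N)}\}_{N\ge 2}$ with $f^*g^{(N)}=g^{(N')}$ for every congruent Markov embedding $f\colon\Simplex^{N'-1}\to\Simplex^{N-1}$. This is not the literal ``$K'\le K$'' formulation restated in the paper.

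That is the correct reading, and you should say so explicitly rather than switch to it silently. For a single fixed alphabet the literal version already fails at $K=2$:
\begin{itemize}
\item embeddings of the point $\Simplex^0$ impose nothing;
\item the only congruent Markov embeddings $\Simplex^1\to\Simplex^1$ are the identity and the swap;
\item so every metric $h(p)\,dp^2$ with $h(p)=h(1-p)$ qualifies, e.g.\ $h\equiv 1$.
\end{itemize}
This cross-dimensional invariance is the load-bearing hypothesis. The Step~2 bookkeeping you flag as the main obstacle is a one-line pullback.

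\textbf{Two small fixes.}
\begin{itemize}
\item In Step~2 the form lives on the hyperplane $V=\{v:\sum_k v_k=0\}$, not on $\R^K$. Apply your classification of $S_K$-invariant forms to the invariant extension $(v,w)\mapsto g^{(K)}_{u_K}(Pv,Pw)$, where $P$ is the orthogonal projection onto $V$. Alternatively, use irreducibility of $V$ as an $S_K$-representation.
\item The route you give is the simplex version of the argument, not Campbell's. Campbell works on the cone of finite measures, which is exactly where the second coefficient multiplying $(\sum_k v_k)(\sum_k w_k)$ arises, as you note yourself.
\end{itemize}
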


This is a classical result of information geometry, not a contribution of the present paper; we restate it here, without proof, only so that the local compatibility statement below can refer to it. See \citet{chentsov1982statistical} for the original, \citet{campbell1986extended} for an extension, and \citet{bauer2016uniqueness} for a modern proof.

\textbf{Interpretation.}
In the restricted (metric-tensor) sense of Section~\ref{sec:prelim-ig}, Theorem~\ref{thm:chentsov} bears on the present setting through the following compatibility statement.

\begin{Corollary}[Local Fisher-Rao compatibility of the topic component]\label{cor:canonical-topic}
 The Riemannian metric tensor induced by the Jensen-Shannon distance $d_{\mathrm{JS}}$ on $\Simplex^{K-1}_+$ is exactly $(1/(8\ln 2))\,\FIM(p)$ (identity~\eqref{eq:jsd-fisher-base2}), that is, proportional to the Fisher information matrix. The topic component $T = 1 - d_{\mathrm{JS}}$ therefore induces a constant multiple, $1/(8 \ln 2)$, of the Fisher-Rao metric tensor on the open simplex: it has the same unparametrized geodesics and the same notion of invariance under sufficient statistics, with arc length rescaled by the square root of that constant and the sectional curvature by its reciprocal (the Fisher-Rao value $1/4$ becoming $2\ln 2$). Any other smooth Riemannian distance whose metric tensor is proportional to $\FIM(p)$ shares these geodesics; at finite distances, however, it need not be a fixed function of $d_{\mathrm{JS}}$---which is itself not a function of the Fisher-Rao arc length, since two pairs with equal Bhattacharyya coefficient (hence equal $\dFR$) can have different $d_{\mathrm{JS}}$ (Section~\ref{sec:prelim-ig}).
\end{Corollary}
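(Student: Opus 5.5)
The plan is to make precise what "the Riemannian metric tensor induced by a distance" means, and then read off the tensor from the second-order expansion~\eqref{eq:jsd-fisher-base2}. For a divergence or distance $D$ on a manifold, the induced tensor $g_D$ is defined by $D(p,p+\delta p)^2 = g_D(p)(\delta p,\delta p) + o(\|\delta p\|^2)$, which is equivalent to taking the Hessian of $D^2(p,\cdot)$ at $q=p$. With $D = d_{\mathrm{JS}}$ we have $d_{\mathrm{JS}}^2 = \JSD$. The first step is therefore to verify~\eqref{eq:jsd-fisher} directly rather than only cite it.

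To do this, write $q = p+\delta p$ with $\sum_k \delta p_k = 0$ and $m = p + \delta p/2$. Expand
\[
\KL(p\|m) = \sum_k p_k\bigl[-\log(1+\tfrac{\delta p_k}{2p_k})\bigr]
\]
and
\[
\KL(q\|m) = \sum_k (p_k+\delta p_k)\log\Bigl(1+\tfrac{\delta p_k/2}{p_k+\delta p_k/2}\Bigr)
\]
to second order. The first-order terms vanish because $\sum_k\delta p_k=0$. Each KL term contributes $\tfrac18\sum_k \delta p_k^2/p_k$, so after averaging $\JSD_{\mathrm{nat}} = \tfrac18\,\delta p^\top\FIM(p)\,\delta p + O(\|\delta p\|^3)$. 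The cubic remainder is uniform on compact subsets of $\Simplex^{K-1}_+$, since every $p_k$ is bounded away from $0$ there. Dividing by $\ln 2$ gives $g_{d_{\mathrm{JS}}}(p) = \FIM(p)/(8\ln 2)$ restricted to the tangent space $\{\delta p:\sum\delta p_k=0\}$. This is the exact first claim of the corollary.

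The remaining claims are consequences of constancy of the proportionality factor $c = 1/(8\ln 2)$. First, scaling a Riemannian metric by a constant leaves the Levi-Civita connection, and hence the unparametrized geodesics, unchanged. It rescales arc length by $\sqrt c$ and sectional curvature by $1/c$, so the Fisher-Rao value $1/4$ becomes $\tfrac14\cdot 8\ln 2 = 2\ln 2$. Second, invariance under congruent Markov embeddings is a property of the tensor that is preserved under constant scaling, so Theorem~\ref{thm:chentsov} applies to $g_{d_{\mathrm{JS}}}$ verbatim. The statement about other distances with tensor proportional to $\FIM$ follows in the same way.

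The last sentence requires a counterexample showing that $d_{\mathrm{JS}}$ is not a function of $\dFR$. I would pick two pairs with equal Bhattacharyya coefficient and compute $\JSD$ for each. For instance, take a binary pair $p=(a,1-a)$, $q=(1-a,a)$, and compare it with a three-point pair whose $\mathrm{BC}$ has been tuned to match by solving one scalar equation. Checking numerically that the two $\JSD$ values differ settles the claim. A cleaner alternative is to note that along a one-parameter family of fixed $\mathrm{BC}$ the cubic term of the expansion of $\JSD$ is not constant. It is enough to exhibit one explicit numeric pair.

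The main obstacle is conceptual rather than computational. "Induced metric tensor" must be defined so that the claim is genuinely exact, namely via the Hessian of $d_{\mathrm{JS}}^2$. The analysis must also handle that $d_{\mathrm{JS}}$ itself is not smooth on the diagonal while its square is. I would state this definition explicitly before the computation.
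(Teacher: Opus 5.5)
Your proposal is correct and follows essentially the paper's route. The tensor is read off the second-order expansion~\eqref{eq:jsd-fisher-base2}; the paper cites this rather than re-deriving it, so your direct Taylor expansion of the two KL terms is a welcome addition. The geodesic, arc-length and curvature claims then follow because rescaling by the constant $c=1/(8\ln 2)$ leaves the Levi-Civita connection unchanged while scaling lengths by $\sqrt c$ and curvature by $1/c$.

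Two small corrections. First, under your own convention $g_D$ is \emph{half} the Hessian of $d_{\mathrm{JS}}^2$; the full Hessian would give $\FIM/(4\ln 2)$ and curvature $\ln 2$. Second, the ``cubic term'' shortcut for the last claim fails. Since $\JSD_{\mathrm{nat}}=\tfrac12\dFR^2+O(\|\delta p\|^4)$ (as in the proof of Proposition~\ref{prop:hellinger}), $\JSD$ varies only at quartic order along a level set of $\mathrm{BC}$. Stick with an explicit pair instead: e.g.\ $\bigl((1,0),(\tfrac12,\tfrac12)\bigr)$ and $\bigl((a,1-a),(1-a,a)\bigr)$ with $a=(2+\sqrt2)/4$ share $\mathrm{BC}=1/\sqrt2$ but have $\JSD_{\mathrm{nat}}\approx 0.216$ and $0.277$, and by continuity they can be perturbed into the open simplex.
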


The corollary is a \emph{local} statement, in the restricted sense of Section~\ref{sec:prelim-ig}: it identifies the metric \emph{tensor}, and does not claim that $d_{\mathrm{JS}}$ is a monotone function of the Fisher-Rao distance at finite distances. The role of Chentsov's theorem in our framework is therefore to single out the Fisher-Rao geometry as the invariant local geometry on the simplex, and to record that the Jensen-Shannon distance is locally consistent with it; this is what we mean by ``local Chentsov compatibility'' in what follows.

\textbf{Empirical check.}
The corollary is infinitesimal, but the linear relation $d_{\mathrm{JS}} \approx \dFR/\sqrt{2\ln 2}$ it rests on holds at the finite distances of real data (Section~\ref{sec:prelim-ig}; quantified in Section~\ref{sec:exp4}), so the local result is usable in practice. It is not part of the corollary, and the framework does not depend on the relation being exact, since the product decomposition of Proposition~\ref{thm:geodesic} uses $d_{\mathrm{JS}}$ directly.

\textbf{Why $d_{\mathrm{JS}}$ rather than another Fisher-compatible distance.}
Once Theorem~\ref{thm:chentsov} singles out the Fisher-Rao geometry, the
Hellinger distance, the Bhattacharyya distance, and the Jensen-Shannon
distance are all candidate scalar distances that share that local
geometry. The Jensen-Shannon distance $d_{\mathrm{JS}} = \sqrt{\JSD}$
has three operational advantages in the present setting, beyond the proper-metric property already required for Theorem~\ref{thm:metric-gm} (Section~\ref{sec:prelim-metric}). First, its
square $\JSD(p,q) = H(m) - \tfrac{1}{2}[H(p)+H(q)]$ has a direct
information-theoretic reading (mutual information between sample and
label under a uniform mixture; Section~\ref{sec:mi}), which connects
naturally to the notion of a narrative transition cost. Second,
$d_{\mathrm{JS}}$ is bounded and normalized under the base-2 logarithm
($d_{\mathrm{JS}}\in[0,1]$), so that the topic similarity $T = 1 -
d_{\mathrm{JS}}$ lies in $[0,1]$ without additional normalization.
Third,
$\JSD(p,q) = H(m) - \tfrac{1}{2}[H(p)+H(q)]$ with $m=(p+q)/2$ is a
function of the mixture and the per-distribution Shannon entropies, so
it is obtained from the same entropy primitive already used for the
cluster-membership vectors, without the per-coordinate products
$\sqrt{p_k q_k}$ on which the Hellinger and Bhattacharyya distances
rely.

\subsection{Unified Spherical Structure}\label{sec:unified-sphere}
\begin{Proposition}[Product of spheres]\label{prop:spheres}
Under the square-root parameterization $\xi_k = 2\sqrt{p_k}$, the Fisher-Rao manifold $(\Simplex^{K-1}_+, g_{\mathrm{FR}})$ maps isometrically onto the positive orthant of a sphere of radius~2. Both components of the composite coherence metric therefore live on spheres, and the product manifold is $\Sphere^{d-1}\!\times\!\Sphere^{K-1}_+$.
\end{Proposition}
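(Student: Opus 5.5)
The plan is to verify directly that the map $\Phi\colon \Simplex^{K-1}_+\to\R^K$, $\Phi(p)=\xi$ with $\xi_k = 2\sqrt{p_k}$, pulls back the Euclidean metric to the Fisher metric~\eqref{eq:fisher}, and that its image is exactly the positive orthant of the radius-2 sphere. For the image, note that $\sum_k \xi_k^2 = 4\sum_k p_k = 4$, so $\Phi(p)$ lies on the sphere of radius~2, and every coordinate $\xi_k$ is strictly positive. Conversely, any point $\xi$ on that sphere with all $\xi_k>0$ is hit by $p_k=\xi_k^2/4$, which lies in $\Simplex^{K-1}_+$. Hence $\Phi$ is a smooth bijection onto $\Sphere^{K-1}_+$ (radius 2), with smooth inverse $\xi\mapsto \xi^2/4$. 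In particular it is a diffeomorphism, because $\sqrt{\cdot}$ is smooth on $(0,\infty)$.

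For the isometry, take a tangent vector $\delta p$ with $\sum_k \delta p_k=0$ and differentiate: $\delta\xi_k = \delta p_k/\sqrt{p_k}$. Then
\begin{equation*}
\|\delta\xi\|^2=\sum_k \frac{\delta p_k^2}{p_k} = \delta p^\top \FIM(p)\,\delta p ,
\end{equation*}
so the pullback of the round metric on the radius-2 sphere, itself induced from $\R^K$, is exactly $g_{\mathrm{FR}}$. This is the infinitesimal identity already recorded in Section~\ref{sec:prelim-ig}. As a consistency check at finite distance, geodesic distance on the radius-2 sphere is $2\arccos(\xi^\top\eta/4)=2\arccos\bigl(\sum_k\sqrt{p_kq_k}\bigr)=2\dFR$. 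This matches the factor-of-two remark following~\eqref{eq:fr-distance}.

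One subtlety needs a sentence. The positive orthant is not geodesically convex in the sense of containing all great-circle arcs of the full sphere. Even so, any two points in it are joined by a minimizing great-circle arc $\cos t\,u+\sin t\,w$ that stays in the closed orthant, since it is a nonnegative combination of the endpoints. So the intrinsic and extrinsic distances agree, which is all the finite-distance formula needs.

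The second claim of the proposition follows immediately. The angular component lives on $\Sphere^{d-1}$ by Section~\ref{sec:two-spaces}, and the topic component now lives on $\Sphere^{K-1}_+$. Equipping the product with $g_{\mathrm{round}}\oplus g_{\mathrm{FR}}$ as in Section~\ref{sec:prelim-product}, $\mathrm{id}\times\Phi$ is an isometry onto $\Sphere^{d-1}\times\Sphere^{K-1}_+$ with the product round metric (radius 1 and radius 2 respectively). The only step requiring care is the bookkeeping of radii and constants, not any real obstacle. For $d_{\mathrm{JS}}$, Corollary~\ref{cor:canonical-topic} identifies its induced tensor as $(1/(8\ln 2))\FIM$, which corresponds to a sphere of radius $1/\sqrt{2\ln2}$ rather than~2, so I would add a remark that the statement concerns $g_{\mathrm{FR}}$ up to this rescaling.
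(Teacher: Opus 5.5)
Your proposal is correct and follows essentially the same route as the paper's proof: the differential $d\xi_k = dp_k/\sqrt{p_k}$ pulls the Euclidean metric back to $\sum_k dp_k^2/p_k$, the image satisfies $\|\xi\|=2$ with $\xi_k>0$, and the product metric is the sum of the two round metrics. Your extras are correct refinements that the paper either leaves implicit or records elsewhere: the explicit bijectivity onto the orthant, the finite-distance check $2\dFR$, the remark that minimizing arcs stay in the orthant, and the radius $1/\sqrt{2\ln 2}$ for the $d_{\mathrm{JS}}$-scaled tensor.
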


\begin{proof}
 From~\eqref{eq:fisher}, the squared Fisher-Rao line element is $ds^2_{\mathrm{FR}} = \sum_k dp_k^2 / p_k$. Under $\xi_k = 2\sqrt{p_k}$, we have $d\xi_k = dp_k / \sqrt{p_k}$, so $dp_k^2 / p_k = d\xi_k^2$. Thus $ds^2_{\mathrm{FR}} = \sum_k d\xi_k^2$, which is the Euclidean metric restricted to the positive orthant of the sphere $\|\xi\| = 2$ (since $\xi_k > 0$ and $\sum_k \xi_k^2/4 = \sum_k p_k = 1$, so $\|\xi\| = 2$). This is the round metric on a sphere of radius~2---the classical square-root isometry recalled in Section~\ref{sec:prelim-ig} \citep{amari2000methods}, derived here to make the product structure explicit.

 The embedding sphere $\Sphere^{d-1}$ carries the standard round metric. On the product $\Sphere^{d-1}\!\times\!\Sphere^{K-1}_+$, the product metric is $ds^2 = ds^2_{\Sphere^{d-1}} + ds^2_{\Sphere^{K-1}_+}$, which is the sum of two angular distances squared.
\end{proof}

\section{Why the Geometric Mean}\label{sec:geomean}
This section makes the case for the geometric mean. We situate it on the compensability spectrum of candidate combinators (Section~\ref{sec:candidates}), establish its distinguishing structural properties---log-additivity and scale invariance (Section~\ref{sec:five-arguments})---give an axiomatic characterization that singles it out under four stated conditions (Section~\ref{sec:axioms}), and close with a property of extraction output across the spectrum (Section~\ref{sec:gap-profile}).

\subsection{Candidate Combinators}\label{sec:candidates}
We consider the following combinators $F\colon [0,1]^2\to[0,1]$:
\begin{align}
 C_{\AM} &= \tfrac{1}{2}(A + T)      & &\text{(arithmetic mean),}\\
 C_{\HM} &= 2AT/(A+T)           & &\text{(harmonic mean),}\\
 C_{\GM} &= \sqrt{AT}            & &\text{(geometric mean),}\\
 C_{\mathrm{Quad}} &= 1 - d_\times^2 = \tfrac{A(2{-}A) + T(2{-}T)}{2}  & &\text{(metric-squared complement),}\\
 C_{\min} &= \min(A, T)           & &\text{(minimum),}\\
 C_{\max} &= \max(A, T)           & &\text{(maximum).}
\end{align}
More generally, the \emph{power mean} family $M_\alpha(A,T) = ((A^\alpha + T^\alpha)/2)^{1/\alpha}$ includes AM ($\alpha=1$), HM ($\alpha=-1$), and GM ($\alpha\to 0$) as special cases, with Min ($\alpha\to-\infty$) and Max ($\alpha\to+\infty$) as limits. The parameter~$\alpha$ controls a single axis, the \emph{degree of compensability}, between the two channels. The Quad combinator $C_{\mathrm{Quad}} = 1 - d_\times^2$ is \emph{not} a power mean; it arises directly from the product metric and is treated separately below.

\textbf{The compensability spectrum.}
Among the power-mean combinators, as $\alpha$ increases, a high score in one channel increasingly compensates for a low score in the other:
\[
 \underbrace{C_{\min}}_{\alpha\to-\infty}
 \;<\; \underbrace{C_{\HM}}_{\alpha=-1}
 \;<\; \underbrace{C_{\GM}}_{\alpha\to 0}
 \;<\; \underbrace{C_{\AM}}_{\alpha=1}
 \;<\; \underbrace{C_{\max}}_{\alpha\to+\infty}
\]
for all $A \ne T$ in $(0,1)$. At the extremes, Min enforces a \emph{bottleneck} (both channels must be good, AND-semantics), while Max enforces \emph{sufficiency} (either channel being good is enough, OR-semantics). Each intermediate value of~$\alpha$ corresponds to a distinct ontological claim about the relationship between angular similarity and topic similarity.
\begin{itemize}[nosep]
 \item \textbf{HM} ($\alpha = -1$). Appropriate when $A$ and $T$ are \emph{rates} (e.g., averaging speeds over equal distances). Neither coherence nor transition probability is naturally a rate. Moreover, HM exhibits pathological zero-dominance ($\HM(0.9, 0.01) = 0.020$), collapsing to near-Min behavior for asymmetric scores.
 \item \textbf{GM} ($\alpha \to 0$). Appropriate when $A$ and $T$ are \emph{multiplicatively independent}, so that the joint ``pass-through'' quality is $A \cdot T$. This is the natural model when both quantities behave like probabilities or signal transmittances.
 \item \textbf{AM} ($\alpha = 1$). Appropriate when $A$ and $T$ contribute \emph{additively} to quality, with one unit of angular similarity perfectly substitutable for one unit of topic similarity. This implicitly assumes commensurable units, a strong claim that is hard to justify when the two channels measure different geometric quantities.
\end{itemize}

\textbf{Quad as the product-metric complement.}
The Quad combinator $C_{\mathrm{Quad}} = 1 - d_\times^2 = \frac{A(2-A) + T(2-T)}{2}$ is \emph{not} a member of the power-mean family. It arises directly from the product metric: since $d_\times$ is the natural $\ell^2$-distance on the product manifold (Theorem~\ref{thm:metric-gm}), $C_{\mathrm{Quad}} = 1 - d_\times^2$ is the metric-squared complement. Two properties distinguish Quad from the power-mean combinators. First, \emph{functional equivalence with $d_\times$}: since $d_\times^2$ is a monotone transformation of $d_\times$ on $[0,1]$, maximin extraction on $C_{\mathrm{Quad}}$ yields the same paths as minimax on $d_\times$ (maximin extraction is invariant under monotone transforms of the edge weights; Corollary~\ref{cor:maximin-scale-inv}). In other words, Quad is the product metric in similarity-domain clothing. Second, \emph{additive separability}: $C_{\mathrm{Quad}} = f(A) + f(T)$ with $f(S) = S(2-S)/2$, so the total cost $\sum_e (1 - C_{\mathrm{Quad},e})$ decomposes into per-channel contributions for sum-based algorithms such as Dijkstra shortest paths and linear programs (LPs), analogous to the log-additive decomposition of GM (Proposition~\ref{prop:log-additive}) but in linear rather than log space.

Despite this geometric legitimacy, Quad operates in the Euclidean domain while the component similarities $A$ and $T$ are bounded in $[0,1]$ and behave like probabilities, for which the natural geometry is log-Euclidean. The geometric mean is the natural mean in this geometry: $\log \GM(A,T) = \frac{1}{2}(\log A + \log T)$, i.e., the arithmetic mean in log-space. Quad computes the Euclidean complement of a non-Euclidean quantity. This is reflected empirically: $C_{\GM}$ achieves the highest rank correlation with $d_\times$ among the power-mean combinators ($\rho = 0.999$, Table~\ref{tab:triangle}; Quad attains $\rho = 1.000$ trivially, as a monotone function of $d_\times$), despite not being derived from $d_\times$, and the log-transform better captures the product-manifold geometry than the linear complement does.

\begin{figure}[t]
 \centering
%
\begin{tikzpicture}[
  x=1.95cm, y=3.9cm, font=\small,
  axis/.style={->, >=Stealth, thick, gray!80!black},
  rmark/.style={circle, fill=red!70!black, draw=red!40!black, inner sep=1.3pt},
  bmark/.style={circle, fill=blue!70!black, draw=blue!40!black, inner sep=1.3pt},
  rlab/.style={font=\scriptsize, text=red!60!black, inner sep=1pt},
  blab/.style={font=\scriptsize, text=blue!55!black, inner sep=1pt},
]
  \def\xMin{-2.5}
  \def\xMax{2.5}
  \def\yMax{1.0}

  \draw[axis] (\xMin, 0) -- (\xMax+0.35, 0) node[below=2pt, right] {$\alpha$};
  \draw[axis] (\xMin, 0) -- (\xMin, \yMax+0.06);
  \node[anchor=south west] at (\xMin, \yMax+0.03) {$M_\alpha(A,T)$};

  \foreach \yv in {0.0, 0.25, 0.5, 0.75, 1.0} {
    \draw[gray!50] (\xMin,\yv) -- ({\xMin-0.06},\yv)
      node[left=1pt, font=\scriptsize] {$\yv$};
  }

  \node[font=\scriptsize, anchor=north] at (\xMin, 0) {Min};
  \node[font=\scriptsize, anchor=north] at (-1, 0) {HM};
  \node[font=\scriptsize, anchor=north] at ( 0, 0) {GM};
  \node[font=\scriptsize, anchor=north] at ( 1, 0) {AM};
  \node[font=\scriptsize, anchor=north] at (\xMax, 0) {Max};

  \draw[blue!70!black, thick, smooth]
    plot coordinates {
      (-2.5, 0.500) (-2.0, 0.515) (-1.5, 0.532) (-1.0, 0.5455)
      (-0.5, 0.547) ( 0.0, 0.5477) ( 0.5, 0.549) ( 1.0, 0.550)
      ( 1.5, 0.565) ( 2.0, 0.582) ( 2.5, 0.600)
    };

  \draw[red!70!black, thick, smooth]
    plot coordinates {
      (-2.5, 0.100) (-2.0, 0.122) (-1.5, 0.147) (-1.0, 0.180)
      (-0.5, 0.232) ( 0.0, 0.300) ( 0.5, 0.392) ( 1.0, 0.500)
      ( 1.5, 0.630) ( 2.0, 0.770) ( 2.5, 0.900)
    };

  \node[bmark] at (-2.5, 0.500) {};  \node[blab, anchor=north west] at (-2.44, 0.452) {$0.500$};
  \node[bmark] at (-1.0, 0.5455) {}; \node[blab, anchor=south]      at (-1.0,  0.575) {$0.545$};
  \node[bmark] at ( 0.0, 0.5477) {}; \node[blab, anchor=south]      at ( 0.0,  0.575) {$0.548$};
  \node[bmark] at ( 1.0, 0.550) {};  \node[blab, anchor=south]      at ( 1.0,  0.578) {$0.550$};
  \node[bmark] at ( 2.5, 0.600) {};  \node[blab, anchor=south east] at ( 2.44, 0.618) {$0.600$};

  \node[rmark] at (-2.5, 0.100) {};  \node[rlab, anchor=south west] at (-2.44, 0.135) {$0.100$};
  \node[rmark] at (-1.0, 0.180) {};  \node[rlab, anchor=north]      at (-1.0,  0.150) {$0.180$};
  \node[rmark] at ( 0.0, 0.300) {};  \node[rlab, anchor=north]      at ( 0.0,  0.270) {$0.300$};
  \node[rmark] at ( 1.0, 0.500) {};  \node[rlab, anchor=north]      at ( 1.0,  0.470) {$0.500$};
  \node[rmark] at ( 2.5, 0.900) {};  \node[rlab, anchor=south east] at ( 2.44, 0.917) {$0.900$};

  \draw[red!70!black, thick] (-2.30, 0.93) -- (-2.02, 0.93);
  \node[red!70!black, font=\scriptsize, anchor=west] at (-1.97, 0.93)
        {$(A,T) = (0.9, 0.1)$: imbalanced};
  \draw[blue!70!black, thick] (-2.30, 0.82) -- (-2.02, 0.82);
  \node[blue!70!black, font=\scriptsize, anchor=west] at (-1.97, 0.82)
        {$(A,T) = (0.6, 0.5)$: balanced};

  \draw[gray!70, decorate, decoration={brace, mirror, raise=2pt}]
        (\xMin, -0.12) -- (-0.05, -0.12)
        node[midway, below=4pt, font=\scriptsize, text=black]
        {AND-like: veto on weak channel};
  \draw[gray!70, decorate, decoration={brace, mirror, raise=2pt}]
        (0.05, -0.12) -- (\xMax, -0.12)
        node[midway, below=4pt, font=\scriptsize, text=black]
        {OR-like: one strong channel sufficient};
\end{tikzpicture}
 \caption{The compensability spectrum of power-mean combinators $M_\alpha$ evaluated at two example similarity pairs. The horizontal axis runs over the spectrum from $\mathrm{Min} = \min(A,T)$ (as $\alpha\to-\infty$) through HM, GM, and AM to $\mathrm{Max} = \max(A,T)$ (as $\alpha\to+\infty$), and each curve is drawn over the full coherence range it attains. For a balanced pair $(A,T) = (0.6, 0.5)$ (blue) the output stays within the narrow band $[0.50, 0.60]$ across the whole spectrum, so the choice of combinator hardly matters; for an imbalanced pair $(A,T) = (0.9, 0.1)$ (red) it sweeps the full range, from the minimum $0.10$ through $0.18$ (HM), $0.30$ (GM), and $0.50$ (AM) to the maximum $0.90$, so the choice of combinator is operationally consequential. The geometric mean sits at $\alpha = 0$ and balances the two regimes.\label{fig:compensability-spectrum}}
\end{figure}

Figure~\ref{fig:compensability-spectrum} illustrates how the spectrum behaves on a balanced and an imbalanced similarity pair. The geometric mean's intermediate position is what permits the veto property at the imbalanced end while preserving meaningful gradation in the balanced regime.

\subsection{Log-Additivity and Scale Invariance}\label{sec:five-arguments}
\begin{Proposition}[Log-additivity]\label{prop:log-additive}
 The geometric mean is the unique combinator in the power-mean family such that $-\log C$ decomposes additively:
 \begin{equation}
  -\log C_{\GM} = -\tfrac{1}{2}\log A - \tfrac{1}{2}\log T\,.
 \end{equation}
 This additive, cross-term-free structure parallels that of the Riemannian product metric $g = g_1 \oplus g_2$ on $\Sphere^{d-1}\times\Simplex^{K-1}_+$, whose line element satisfies $ds^2 = ds^2_1 + ds^2_2$: in both, each factor contributes a separate term, with no cross-channel interaction. The correspondence is structural---each cost $-\log(1-d_i)$ is first-order, not quadratic, in its geodesic distance (Remark~\ref{rem:infinitesimal})---and is precisely the geometric consistency between $C_{\GM}$ and $d_\times$ discussed after Theorem~\ref{thm:metric-gm}; the combinator derived from $d_\times$ itself is Quad (Section~\ref{sec:candidates}).
\end{Proposition}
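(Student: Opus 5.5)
The statement has two parts. The first is existence, the displayed identity, which is immediate: $-\log\sqrt{AT} = -\tfrac12\log A - \tfrac12\log T$, exactly as in the proof of Proposition~\ref{thm:geodesic}. The real content is uniqueness within the power-mean family $M_\alpha$, $\alpha\in[-\infty,+\infty]$.

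First I will make precise what ``$-\log C$ decomposes additively'' means. The natural formalization is that there exist functions $u,v\colon(0,1]\to\R$ with
\[
-\log M_\alpha(A,T)=u(A)+v(T)
\]
for all $A,T\in(0,1]$. Equivalently, $\Phi_\alpha(x,y)=\log M_\alpha(e^{-x},e^{-y})$ is additively separable on $x,y\ge 0$. For smooth $\Phi$ this is the same as
\[
\partial_x\partial_y\Phi_\alpha\equiv 0 .
\]

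For $\alpha\to 0$ the mixed partial vanishes identically, since $\Phi_0=-(x+y)/2$. For finite $\alpha\neq 0$ I will compute directly. Here
\[
\Phi_\alpha=\tfrac1\alpha\log\tfrac{e^{-\alpha x}+e^{-\alpha y}}{2},
\]
so
\[
\partial_x\Phi_\alpha=-\frac{e^{-\alpha x}}{e^{-\alpha x}+e^{-\alpha y}},
\qquad
\partial_y\partial_x\Phi_\alpha=-\alpha\,\frac{e^{-\alpha x}e^{-\alpha y}}{(e^{-\alpha x}+e^{-\alpha y})^2}.
\]
The last expression is nonzero whenever $\alpha\neq 0$. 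This rules out all finite $\alpha\neq0$, including HM and AM. An alternative that avoids differentiability is the rectangle test $\Phi(x_1,y_1)+\Phi(x_2,y_2)=\Phi(x_1,y_2)+\Phi(x_2,y_1)$, checked on one explicit quadruple. I will keep that as the fallback.

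The limit cases $\alpha=\pm\infty$ (Min and Max) are not smooth, so I will use the rectangle test. For Min, take $(A,T)\in\{a,b\}^2$ with $a<b$. Then
\[
\log\min(a,a)+\log\min(b,b)=\log a+\log b,
\qquad
\log\min(a,b)+\log\min(b,a)=2\log a,
\]
and these differ. Max is handled symmetrically.

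The main obstacle is definitional rather than computational. One could argue that any $M_\alpha$ trivially satisfies ``$-\log C = $ something''. The care lies in fixing separability, meaning no cross-channel term, as the meaning of the claim, and in noting that the weights $\tfrac12,\tfrac12$ are then forced by symmetry and normalization $M(A,A)=A$. I expect to state that interpretation explicitly before running the mixed-partial argument.
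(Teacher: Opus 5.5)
Your proof is correct. For the finite exponents $\alpha\neq 0$ it takes a genuinely different route from the paper.

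The paper's argument is purely algebraic. Separability of $-\tfrac{1}{\alpha}\log\tfrac{A^\alpha+T^\alpha}{2}$ is equivalent to a factorization $A^\alpha+T^\alpha=F(A)\,G(T)$. Setting $T=1$ and then $A=1$ forces $F(A)G(T)=(A^\alpha+1)(T^\alpha+1)/2$, and evaluating on the diagonal gives $(A^\alpha-1)^2=0$, hence $\alpha=0$.

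You instead compute the cross-channel interaction directly: $\partial_x\partial_y\Phi_\alpha=-\alpha\,w(1-w)$ with $w=e^{-\alpha x}/(e^{-\alpha x}+e^{-\alpha y})\in(0,1)$. One step should be stated explicitly. Your definition of separability puts no regularity on $u,v$, so either note that they inherit smoothness from $\Phi_\alpha$ (since $u(x)=\Phi_\alpha(x,y_0)-v(y_0)$), or integrate the mixed partial over a rectangle $[x_1,x_2]\times[y_1,y_2]$. The second option is cleaner. The integrand has the constant nonzero sign of $-\alpha$, so the rectangle sum $\Phi_\alpha(x_2,y_2)-\Phi_\alpha(x_2,y_1)-\Phi_\alpha(x_1,y_2)+\Phi_\alpha(x_1,y_1)$ is nonzero for every nondegenerate rectangle. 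It vanishes for any $u(x)+v(y)$, so your ``fallback'' becomes an immediate corollary of the same computation rather than a separate check.

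What your route buys is a quantitative picture of the cross-channel interaction the statement refers to. It vanishes only at $\alpha=0$. Its sign reproduces the AND-like/OR-like split of the compensability spectrum: supermodular in the log-similarities for $\alpha<0$, submodular for $\alpha>0$. The paper's route needs no calculus and uses only pointwise evaluations at $A=1$, $T=1$ and on the diagonal.

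For Min and Max, your rectangle test on $\{a,b\}^2$ is essentially the paper's four-point argument, which uses $x,y\in\{0,1\}$. You take $M_0=\sqrt{AT}$ as known, whereas the paper re-derives that limit by L'H\^opital; this is classical and harmless.
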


\begin{proof}
 Throughout, $A, T \in (0,1]$, so that all logarithms are defined. We first verify that the geometric mean is log-additive. The member at $\alpha = 0$ is defined by continuity, and the limit is classical \citep{bullen2003handbook}: applying L'H\^opital's rule to $\log M_\alpha$,
 \[
   \lim_{\alpha \to 0} \frac{\log\bigl((A^\alpha + T^\alpha)/2\bigr)}{\alpha}
   \;=\; \lim_{\alpha \to 0} \frac{A^\alpha \log A + T^\alpha \log T}{A^\alpha + T^\alpha}
   \;=\; \frac{\log A + \log T}{2}\,,
 \]
 so $M_0 = \sqrt{AT} = C_{\GM}$, and $-\log C_{\GM} = -\tfrac{1}{2}\log A - \tfrac{1}{2}\log T$ is additive.

 Next we show that no other member of the family is log-additive. For $\alpha \ne 0$,
 \[
   -\log M_\alpha \;=\; -\tfrac{1}{\alpha}\log\frac{A^\alpha + T^\alpha}{2}\,,
 \]
 which decomposes into a sum of separate functions of $\log A$ and $\log T$ if and only if $A^\alpha + T^\alpha$ factors as $F(A)\,G(T)$ for some functions $F, G$. Suppose such a factorization holds for all $A, T \in (0,1]$. Setting $T = 1$ and $A = 1$ in turn gives $F(A)\,G(1) = A^\alpha + 1$ and $F(1)\,G(T) = 1 + T^\alpha$, with $F(1)\,G(1) = 2 \ne 0$, whence
 \[
   A^\alpha + T^\alpha \;=\; F(A)\,G(T) \;=\; \frac{(A^\alpha + 1)(T^\alpha + 1)}{2}\,.
 \]
 Evaluating at $A = T$ gives $4A^\alpha = (A^\alpha + 1)^2$, i.e.\ $(A^\alpha - 1)^2 = 0$ for all $A \in (0,1]$. Hence $A^\alpha \equiv 1$, forcing $\alpha = 0$, a contradiction. Thus no power mean with $\alpha \ne 0$ is log-additive.

 Finally, the limiting members $M_{+\infty} = \max(A,T)$ and $M_{-\infty} = \min(A,T)$ of the spectrum (Section~\ref{sec:candidates}) are excluded as well. Writing $x = -\log A$ and $y = -\log T$ (so $x, y \ge 0$), we have $-\log M_{+\infty} = \min(x,y)$ and $-\log M_{-\infty} = \max(x,y)$. If $\min(x,y) = f(x) + g(y)$, then evaluating at $(x,y) = (0,0)$, $(1,0)$, $(0,1)$ gives $f(0) + g(0) = 0$ and $f(1) + g(0) = f(0) + g(1) = 0$, so $f(1) + g(1) = 0$, contradicting $\min(1,1) = 1$; the same four points rule out $\max(x,y)$, for which they would force $f(1) + g(1) = 2$ against $\max(1,1) = 1$. The geometric mean is therefore the unique log-additive member of the power-mean family.
\end{proof}

\begin{Theorem}[Associated product metric]\label{thm:metric-gm}
 Write $\dang := \dang(d_i, d_j)$ for the angular distance on $\Sphere^{d-1}$ and $d_{\mathrm{JS}} := d_{\mathrm{JS}}(\hat{e}_i, \hat{e}_j)$ for the Jensen-Shannon distance on $\Simplex^{K-1}_+$. The product distance
 \begin{equation}\label{eq:product-metric}
  d_\times(d_i, d_j)
  \;=\;
  \frac{1}{\sqrt{2}}\, \sqrt{ (\dang/\pi)^2 + d_{\mathrm{JS}}^{\,2} \,}
 \end{equation}
 is a proper metric on the document space\footnote{Throughout, ``metric on the document space'' means a metric on the representation space $\Sphere^{d-1}\times\Simplex^{K-1}_+$; on documents it is a pseudometric, since two distinct documents with identical representations lie at distance zero.} taking values in $[0, 1]$: it is the $\ell^2$-product of two proper metrics, normalized by the diameter $\sqrt{2}$ of the unit square $[0, 1]^2$. The coherence $C_{\GM}$ is monotone decreasing in $\dang$ and in $d_{\mathrm{JS}}$ separately. However, $C_{\GM}$ is not a function of $d_\times$ alone: pairs at the same $d_\times$ can have different coherences depending on how the distance is distributed between channels.
\end{Theorem}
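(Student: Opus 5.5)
The plan has three parts, taken in order: the metric axioms for $d_\times$, the separate monotonicity of $C_{\GM}$, and a counterexample showing that $C_{\GM}$ is not a function of $d_\times$ alone.

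\textbf{Metric axioms.} I work on the representation space $\Sphere^{d-1}\times\Simplex^{K-1}_+$. The two factor distances are $\delta_1 = \dang/\pi$ and $\delta_2 = d_{\mathrm{JS}}$.

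First, $\delta_1$ is a metric on $\Sphere^{d-1}$ with values in $[0,1]$, because $\dang$ is the round geodesic distance and $\dang \le \pi$. Second, $\delta_2$ is a metric on the simplex with values in $[0,1]$ under the base-2 convention; this is the Endres--Schindelin / \"Osterreicher--Vajda result recalled in Section~\ref{sec:prelim-ig}.

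Non-negativity and symmetry of $d_\times$ are then immediate. For the identity of indiscernibles, $d_\times = 0$ forces $\delta_1 = \delta_2 = 0$, hence equal representations.

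\textbf{Triangle inequality.} This is the only step needing an argument, and it is the Minkowski inequality in $\R^2$. Put $u = (\delta_1(x,y), \delta_2(x,y))$ and $v = (\delta_1(y,z), \delta_2(y,z))$. Each factor triangle inequality gives, componentwise,
\[
\bigl(\delta_1(x,z), \delta_2(x,z)\bigr) \le u + v .
\]
The Euclidean norm is monotone on the nonnegative orthant, so
\[
\|(\delta_1(x,z), \delta_2(x,z))\| \le \|u+v\| \le \|u\| + \|v\| .
\]
Dividing by $\sqrt2$ gives $d_\times(x,z) \le d_\times(x,y) + d_\times(y,z)$.

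The range bound $d_\times \in [0,1]$ follows from $\delta_1^2 + \delta_2^2 \le 2$. The footnote's pseudometric remark on documents is then automatic, since distinct documents can share a representation.

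\textbf{Monotonicity of $C_{\GM}$.} $C_{\GM} = \sqrt{(1-\delta_1)(1-\delta_2)}$ is a composition of decreasing affine maps with the increasing function $\sqrt{\cdot}$, so it is non-increasing in each argument. It is strictly decreasing when the other factor is positive. Alternatively this can be cited from Proposition~\ref{thm:geodesic}.

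\textbf{Not a function of $d_\times$.} I exhibit two pairs with the same $d_\times$ but different $C_{\GM}$. For the balanced pair take $(\delta_1,\delta_2) = (1/2, 1/2)$, giving $C_{\GM} = 1/2$. For the imbalanced pair take $(\delta_1,\delta_2) = (1/\sqrt2, 0)$, which has the same $\ell^2$ norm and gives $C_{\GM} = \sqrt{1 - 1/\sqrt2} \approx 0.54$.

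These values must actually be realizable by document representations. Any $\delta_1 \in [0,1]$ is realizable by choosing two unit vectors at the corresponding angle. Any $\delta_2 \in [0,1)$ is realizable on the open simplex, because $d_{\mathrm{JS}}$ is continuous, equals $0$ at $p=q$, and approaches $1$ as the two distributions approach disjoint supports; the intermediate value theorem then covers everything in between.

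The main (minor) obstacle is only this realizability point for the counterexample. I would handle it by that explicit continuity and intermediate-value argument.
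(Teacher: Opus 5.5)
Your proof is correct and follows the paper's argument: $d_\times$ is the normalized $\ell^2$-product of the metrics $\dang/\pi$ and $d_{\mathrm{JS}}$, and monotonicity follows from $C_{\GM}=\sqrt{(1-\dang/\pi)(1-d_{\mathrm{JS}})}$; where the paper cites the $\ell^2$-product step as standard, you write out the Minkowski argument. You also give an explicit, realizable counterexample, $(\dang/\pi, d_{\mathrm{JS}})=(1/2,1/2)$ versus $(1/\sqrt2,0)$, both at $d_\times=1/2$ but with $C_{\GM}=0.5$ versus $\approx 0.54$, which proves the ``not a function of $d_\times$ alone'' clause; the paper states that clause but its proof does not establish it.
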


\begin{proof}
 Both $\dang/\pi$ and $d_{\mathrm{JS}}$ \citep{endres2003new,osterreicher2003new} are proper metrics on their respective spaces with values in $[0,1]$. The $\ell^2$-product of two metrics is a metric on the Cartesian product (a standard result in metric geometry \citep{burago2001course}), so $\sqrt{(\dang/\pi)^2 + d_{\mathrm{JS}}^{\,2}}$ is a proper metric on the product space. Dividing by $\sqrt{2}$ (the diameter of $[0,1]^2$ under the $\ell^2$ norm) gives $d_\times \in [0,1]$, which is also a proper metric (positive scaling preserves all metric axioms).

 Monotonicity: $C = \sqrt{A\cdot T} = \sqrt{(1-\dang/\pi)(1-d_{\mathrm{JS}})}$. Both $A = 1-\dang/\pi$ and $T = 1-d_{\mathrm{JS}}$ are monotone decreasing in their respective distances, so $C$ is monotone decreasing in each component distance separately.
\end{proof}

\textbf{Interpretation: three levels of relationship between $C_{\GM}$ and $d_\times$.}
The result above shows that $d_\times$ is a proper metric on the document space and that $C_{\GM}$ is monotone in each component distance, but the relationship between $C_{\GM}$ and $d_\times$ is more nuanced than ``$C_{\GM}$ is derived from $d_\times$.'' Three distinct levels of relationship are worth separating, and we use this terminology consistently in the rest of the paper.
\begin{itemize}[leftmargin=*, nosep, topsep=2pt]
 \item \textbf{Monotone compatibility.} The rank correlation $\rho(1 - C_{\GM}, d_\times) = 0.999$ reported in the combinator comparison (Section~\ref{sec:exp2}) is an empirical observation about the ordering of document pairs, not a derivation. Two scalar functions can produce essentially the same ranking while still being mathematically different objects.
 \item \textbf{Geometric consistency.} Both $C_{\GM}$ and $d_\times$ live on the same product manifold $\Sphere^{d-1} \times \Simplex^{K-1}_+$ and respond in compatible ways to changes in the two component distances: the log-additive decomposition of $C_{\GM}$ mirrors the additive Riemannian product metric that $d_\times$ instantiates (the correspondence stated with Proposition~\ref{prop:log-additive}).
 \item \textbf{Metric derivation.} A metric derivation would be a stronger statement, namely that $C_{\GM}$ is a function of $d_\times$ alone. This is \emph{not} the case in our setting, as Theorem~\ref{thm:metric-gm} records. The Quad combinator $1 - d_\times^2$ is the combinator that \emph{is} derived from $d_\times$, and it differs from $C_{\GM}$ in exactly this respect (Section~\ref{sec:candidates}).
\end{itemize}
The high rank correlation between $C_{\GM}$ and $d_\times$ supports monotone compatibility and geometric consistency; it does not establish metric derivation. We will use this distinction below when discussing why $C_{\GM}$ remains the recommended combinator despite $d_\times$ being a proper metric in its own right.

\textbf{Empirical check.}
Theorem~\ref{thm:metric-gm} is exact, so the quantity worth measuring is the one it does \emph{not} fix: the rank correlation $\rho(1 - C_{\GM}, d_\times) = 0.999$ (the combinator comparison, Section~\ref{sec:exp2}), which measures how closely $C_{\GM}$ tracks the product metric (monotone compatibility). This tracking is \emph{not} implied by the theorem: the theorem holds on any document space, whereas the strength with which $C_{\GM}$ follows $d_\times$---without being a function of it---is a property of the corpora studied.

\textbf{Path cost and diagnostic attribution.}
The log-cost $-\log C$ is the natural edge weight for path extraction. The additive decomposition $-\log C = -\tfrac{1}{2}\log A - \tfrac{1}{2}\log T$ (Proposition~\ref{prop:log-additive}) means that the total cost of any extracted path separates cleanly:
\[
  \sum_{e\in\text{path}} \bigl(-\log C_e\bigr)
  \;=\;
  \underbrace{\frac{1}{2}\sum_{e\in\text{path}} \bigl(-\log A_e\bigr)}_{\text{angular cost}}
  \;+\;
  \underbrace{\frac{1}{2}\sum_{e\in\text{path}} \bigl(-\log T_e\bigr)}_{\text{topic cost}}.
\]
This enables \emph{diagnostic attribution}. Given an extracted storyline, one can identify whether its total cost is dominated by semantic dissimilarity or topical divergence, and localize expensive edges to one channel or the other. The proper metric $d_\times$ is retained for the formal metric-space guarantees needed by triangle-inequality-dependent constructions.

The \emph{operational} force of log-additivity depends on the extraction algorithm. For sum-based extractors (Dijkstra shortest paths, or the linear-programming formulations with coverage and connectivity constraints used in structured narrative extraction \citep{shahaf2010connecting,shahaf2012connecting,keith2020narrative}), $-\log C$ serves as an additive edge cost and the per-channel budgets above directly shape which paths are optimal. For bottleneck-based extractors (maximin/minimax), any monotone transformation of $C$ preserves path optimality (Corollary~\ref{cor:maximin-scale-inv}), so log-additivity is irrelevant to path \emph{selection} but remains valuable as a post-hoc diagnostic.

A heuristic information-theoretic intuition for why the log-additive
structure and the geometric-mean summary fit together can be developed
from a rate-distortion analogy under the assumptions of approximate
channel independence and logarithmic per-channel rate-distortion
functions; we record this analogy in the Supplementary Materials
(Section~S6) as a conceptual reading rather than a derivation, since the
geometric mean is already characterized by Theorem~\ref{thm:axioms} at
the level of an exact algebraic identity.

\begin{Proposition}[Scale invariance]\label{prop:scale-inv}
 For $\lambda > 0$, consider the common power rescaling $A' = A^\lambda$, $T' = T^\lambda$ of the similarity components---equivalently, a common multiplicative rescaling of the per-channel costs, $-\log A' = \lambda\,(-\log A)$ and $-\log T' = \lambda\,(-\log T)$. Call a combinator \emph{ranking-invariant} if, for every $\lambda > 0$, the ranking of document pairs it induces on the rescaled components coincides with the ranking it induces on the originals. Within the power-mean spectrum of Section~\ref{sec:candidates}, the ranking-invariant members are exactly the geometric mean and the limits $M_{\pm\infty}$; the geometric mean is the unique ranking-invariant member with finite exponent---equivalently, the unique one that is strictly increasing in both channels.
\end{Proposition}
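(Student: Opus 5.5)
I will show that the geometric mean and the limits $M_{\pm\infty}$ are ranking-invariant, and that every finite $\alpha\ne 0$ fails. I will then deduce the ``strictly increasing'' reformulation.

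\textbf{Invariant members.} Work in the cost coordinates $x=-\log A$, $y=-\log T$ with $x,y\in[0,\infty)$, so that the rescaling becomes $(x,y)\mapsto(\lambda x,\lambda y)$. For the geometric mean, $-\log M_0(A^\lambda,T^\lambda)=\lambda\cdot\tfrac12(x+y)$ by the limit computed in the proof of Proposition~\ref{prop:log-additive}. This is a positive multiple of the original cost, hence a strictly increasing transform of $M_0$, so every ranking is preserved. Similarly, $-\log M_{+\infty}=\min(x,y)$ and $-\log M_{-\infty}=\max(x,y)$ are positively homogeneous of degree one, so the same argument applies. More simply, $t\mapsto t^\lambda$ is increasing and commutes with $\min$ and $\max$.

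\textbf{Non-invariance for finite $\alpha\ne0$.} Here I use that $M_\alpha(A^\lambda,T^\lambda)=\bigl(M_{\alpha\lambda}(A,T)\bigr)^\lambda$. Since $t\mapsto t^\lambda$ is increasing, the ranking induced by $M_\alpha$ on the rescaled data equals the ranking induced by $M_{\alpha\lambda}$ on the original data. Invariance for all $\lambda$ would therefore force every $M_\beta$ with $\beta$ of the sign of $\alpha$ to induce the same ranking as $M_\alpha$. To refute this, I exhibit two pairs whose order flips between $\alpha$ and $\alpha\lambda$. Take a balanced pair $P=(s,s)$ and an imbalanced pair $Q=(a,t)$ with $a\ne t$. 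For fixed $Q$, the map $\beta\mapsto M_\beta(a,t)$ is strictly increasing and continuous, by the compensability ordering of Section~\ref{sec:candidates}, which I take as the standard strict power-mean inequality for unequal arguments. Pick $\beta_1<\beta_2$ of the same sign as $\alpha$, namely $\beta_1=\alpha$ and $\beta_2=2\alpha$ (or the reverse if $\alpha<0$). Then choose $s$ strictly between $M_{\beta_1}(a,t)$ and $M_{\beta_2}(a,t)$. Now $P$ beats $Q$ under one exponent and loses under the other, so taking $\lambda=2$ already breaks invariance. A concrete numerical instance, such as $(a,t)=(0.9,0.1)$ with $s$ chosen between the two power means, can be given to make the argument fully explicit.

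\textbf{Equivalent formulation and the main obstacle.} The rescaling identity $M_\alpha(A^\lambda,T^\lambda)=(M_{\alpha\lambda})^\lambda$ holds by direct computation. The main subtlety is the final equivalence. Among the three invariant members, $\min$ and $\max$ are not strictly increasing in both channels, because changing the non-binding channel leaves the value unchanged. Every finite-exponent power mean, by contrast, is strictly increasing on $(0,1]^2$: its partial derivatives are positive, and at $\alpha=0$ this is immediate from $\sqrt{AT}$. So among the invariant members, strict monotonicity singles out exactly $M_0$, which gives both characterizations. I must also make sure that ``ranking'' is read with ties counted. This does not affect the counterexample, which uses a strict reversal, but it matters for the invariance claims, which preserve ties as well as strict order.
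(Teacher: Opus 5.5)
Your proposal is correct and follows the paper's proof essentially step for step. Invariance of $M_0$ and $M_{\pm\infty}$ comes from the strictly increasing map $t\mapsto t^\lambda$, and failure for finite $\alpha\ne 0$ comes from the exponent-shift identity $M_\alpha(A^\lambda,T^\lambda)=M_{\lambda\alpha}(A,T)^\lambda$ with $\lambda=2$, where a balanced pair $(s,s)$ is ranked oppositely to an imbalanced one. The only cosmetic difference is how the balanced pair is chosen: you take $s$ directly between $M_\alpha(a,t)$ and $M_{2\alpha}(a,t)$, whereas the paper fixes an exponent $q^\ast$ between $\alpha$ and $2\alpha$ and sets $s=M_{q^\ast}(a,t)$, which amounts to the same thing.
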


\begin{proof}
 \emph{Invariance of the geometric mean.} $C_{\GM}' = (A^\lambda T^\lambda)^{1/2} = (AT)^{\lambda/2}$, and $x \mapsto x^{\lambda/2}$ is strictly increasing for $\lambda > 0$, so the ranking by $C_{\GM}'$ coincides with the ranking by $C_{\GM} = \sqrt{AT}$. In cost terms, $-\log C_{\GM}' = \lambda\,(-\log C_{\GM})$ by the log-additivity of Proposition~\ref{prop:log-additive}.

 \emph{Invariance of the limits.} A strictly increasing map commutes with $\min$ and $\max$, so $\min(A^\lambda, T^\lambda) = \min(A,T)^\lambda$ and $\max(A^\lambda, T^\lambda) = \max(A,T)^\lambda$: each rescaled combinator is a strictly increasing transform of the original, and the induced rankings are unchanged. We note that $M_{\pm\infty}$ are ranking-invariant under any common strictly increasing reparameterization, being order statistics of the two channels; neither, however, is strictly increasing in both channels.

 \emph{Non-invariance for finite $\alpha \ne 0$.} For $\lambda > 0$ and $\alpha \ne 0$,
 \begin{equation}\label{eq:exponent-shift}
   M_\alpha(A^\lambda, T^\lambda)
   \;=\; \Bigl(\tfrac{A^{\lambda\alpha} + T^{\lambda\alpha}}{2}\Bigr)^{1/\alpha}
   \;=\; \Bigl[\Bigl(\tfrac{A^{\lambda\alpha} + T^{\lambda\alpha}}{2}\Bigr)^{1/(\lambda\alpha)}\Bigr]^{\lambda}
   \;=\; M_{\lambda\alpha}(A,T)^{\lambda},
 \end{equation}
 so the ranking on rescaled components is the ranking induced by $M_{\lambda\alpha}$. Take $\lambda = 2$, so $\alpha$ and $2\alpha$ are distinct, nonzero, and of the same sign. By the strict power-mean inequality \citep{bullen2003handbook}, for fixed $x \ne y$ in $(0,1)$ the map $q \mapsto M_q(x,y)$ is strictly increasing on $\mathbb{R}$. Fix $q^*$ strictly between $\alpha$ and $2\alpha$ and $t_- < t_+$ in $(0,1)$, and set $a := M_{q^*}(t_-, t_+)$, $e_1 := (a,a)$, $e_2 := (t_-, t_+)$. Then $M_q(e_1) = a$ for every $q$, while $q \mapsto M_q(e_2)$ is strictly increasing with $M_{q^*}(e_2) = a$. Since exactly one of $\alpha$, $2\alpha$ lies below $q^*$ and the other above, $M_\alpha$ and $M_{2\alpha}$ rank the pair $\{e_1, e_2\}$ oppositely, so the ranking is not invariant under $\lambda = 2$. As a concrete illustration with the arithmetic mean ($\alpha = 1$, $\lambda = 2$): the pairs $(A,T) = (0.9, 0.1)$ and $(0.52, 0.52)$ satisfy $C_{\AM} = 0.50 < 0.52$ before rescaling but $C_{\AM}' = 0.41 > 0.27$ after---a strict reversal (for $\lambda > 1$, convexity of $x^\lambda$ shifts the AM ranking toward imbalanced pairs; for $\lambda < 1$, concavity shifts it toward balanced ones).
\end{proof}

\begin{Corollary}[Scale invariance of maximin extraction]\label{cor:maximin-scale-inv}
 Consider the uniform rescaling $A' = A^\lambda$, $T' = T^\lambda$ with $\lambda > 0$ of Proposition~\ref{prop:scale-inv}. On every coherence graph, the maximin (widest-path) preorder on storylines induced by the geometric mean---together with its set of optimal storylines and the bottleneck edge of every path---is invariant under this rescaling for every $\lambda > 0$, and the same holds for the limits $M_{\pm\infty}$. Conversely, for every power mean $M_\alpha$ with finite $\alpha \ne 0$ there exist a coherence graph and an exponent $\lambda > 0$ for which the rescaling changes the extracted maximin storyline. The geometric mean is thus the unique finite-exponent power mean whose maximin extraction is scale-invariant in this sense---equivalently, the unique member of the spectrum that combines this invariance with strict monotonicity in both channels.
\end{Corollary}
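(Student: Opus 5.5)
The plan has two halves: the invariance claims and the converse counterexample. For the invariance half, I will first record the general fact that maximin extraction depends only on the order of edge weights. If $\phi\colon[0,1]\to[0,1]$ is strictly increasing, then for every path $\pi$ we have $\min_{e\in\pi}\phi(w_e)=\phi(\min_{e\in\pi}w_e)$. The argmin edge, i.e.\ the bottleneck, is therefore the same edge (with ties preserved), and the preorder $\pi\preceq\pi'$ iff $\min_\pi w\le\min_{\pi'}w$ is unchanged. So is the set of optimal storylines.

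It then suffices to show that each rescaled combinator is a strictly increasing function of the original one, edge by edge. For the geometric mean, Proposition~\ref{prop:scale-inv} gives $C_{\GM}'=(C_{\GM})^{\lambda}$ with $\phi(x)=x^{\lambda}$. For the limits, $M_{\pm\infty}(A^\lambda,T^\lambda)=M_{\pm\infty}(A,T)^\lambda$, which is also shown in that proof. Both conclusions follow immediately.

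For the converse, fix a finite $\alpha\ne0$ and take $\lambda=2$. By the exponent-shift identity~\eqref{eq:exponent-shift}, the rescaled weights equal $M_{2\alpha}(A,T)^2$. Since squaring is strictly increasing, maximin on the rescaled graph coincides with maximin under $M_{2\alpha}$. I then reuse the two edges $e_1=(a,a)$ and $e_2=(t_-,t_+)$ from the proof of Proposition~\ref{prop:scale-inv}, on which $M_\alpha$ and $M_{2\alpha}$ disagree in order.

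The graph is built as follows. Take four documents with a source $s$ and a sink $t$, and two internally disjoint temporally ordered paths $s\to u\to t$ and $s\to v\to t$. Path 1 carries the channel pair $e_1$ on one edge, path 2 carries $e_2$ on one edge, and every remaining edge carries $(1,1)$, so its coherence is $1$ under any combinator. The bottleneck of each path is then its single nontrivial edge. Hence $M_\alpha$ selects one path, and $M_{2\alpha}$ (equivalently, the rescaled $M_\alpha$) selects the other. Strict opposite ordering guarantees a unique optimum in each case, so the extracted storyline really changes.

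The step I expect to be the main obstacle is realizability: I must check that such a graph is admissible. Edge weights are defined from representations, not freely assigned, so I need documents whose pairwise $(A,T)$ values realize the prescribed pairs. If this is an issue, I would note that the coherence-graph framework of Section~\ref{sec:prelim-narrative} treats $C$ as an arbitrary weight function on a DAG. Failing that, I would pick the $(1,1)$ edges as values close to $1$ but above $\max(a,t_+)$, and choose embeddings and topic distributions on low-dimensional spheres and simplices that approximate the targets. Continuity preserves the strict reversal under small perturbations.

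The final clause, uniqueness among finite-exponent members and equivalence with strict monotonicity, follows by combining the two halves with the remark that $M_{\pm\infty}$ are not strictly increasing in both channels while $M_0$ is.
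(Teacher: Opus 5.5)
Your proposal is correct and follows the paper's proof step for step: the ordinality of maximin under a strictly increasing $\varphi$, the identity $M(A^\lambda,T^\lambda)=M(A,T)^\lambda$ for $M_0$ and $M_{\pm\infty}$, and the $\lambda=2$ two-path gadget whose closing edges carry $(1,1)$, on which $M_\alpha$ and $M_{2\alpha}$ rank $e_1=(a,a)$ and $e_2=(t_-,t_+)$ oppositely. The paper never raises realizability and treats channel pairs as freely assignable edge labels, exactly as in your primary route; if you pursue your fallback with closing-edge channel values $b>t_+$, the triangle inequality in each channel forces $|A(s,u)-A(s,v)|,\,|T(s,u)-T(s,v)|\le 2(1-b)$, so you must at least choose $t_+-t_-\le 2(1-b)$.
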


\begin{proof}
 \emph{Step 1 (maximin extraction is ordinal).} Let $G$ be a coherence graph with edge-coherence map $w\colon E \to [0,1]$; the maximin score of a storyline $\pi$ is $\sigma(\pi) = \min_{e\in\pi} w(e)$, and the extractor returns a storyline of maximal score between the prescribed endpoints. If $\varphi\colon [0,1]\to[0,1]$ is strictly increasing, then $\min_{e\in\pi}\varphi(w(e)) = \varphi\bigl(\min_{e\in\pi} w(e)\bigr)$ for every path, since a strictly increasing map commutes with $\min$. Hence $\sigma(\pi)\ge\sigma(\pi') \iff \varphi(\sigma(\pi))\ge\varphi(\sigma(\pi'))$: replacing $w$ by $\varphi\circ w$ leaves the maximin preorder on storylines, its set of maximizers, and the bottleneck edge $\arg\min_{e\in\pi} w(e)$ of every path unchanged.

 \emph{Step 2 (invariance of $M_0$ and $M_{\pm\infty}$).} For each of these three combinators the rescaling sends every edge coherence to its $\lambda$-th power: $M_0(A^\lambda, T^\lambda) = (A^\lambda T^\lambda)^{1/2} = M_0(A,T)^\lambda$, and, since a strictly increasing map commutes with order statistics, $\min(A^\lambda, T^\lambda) = \min(A,T)^\lambda$ and $\max(A^\lambda, T^\lambda) = \max(A,T)^\lambda$ (as in the proof of Proposition~\ref{prop:scale-inv}). In all three cases the rescaled weight map is $\varphi \circ w$ with $\varphi(x) = x^\lambda$ strictly increasing, so by Step 1 the maximin preorder, the set of optimal storylines, and every bottleneck edge are unchanged, on every coherence graph.

 \emph{Step 3 (non-invariance for finite $\alpha \ne 0$).} Fix a finite $\alpha \ne 0$ and take $\lambda = 2$. By the proof of Proposition~\ref{prop:scale-inv}, there exist channel pairs $e_1 = (a,a)$ and $e_2 = (t_-, t_+)$ with components in $(0,1)$ that $M_\alpha$ and $M_{2\alpha}$ rank oppositely; since their components lie in $(0,1)$, $M_q(e_i) < 1$ for every exponent $q$. Build $G$ on temporally ordered vertices $s < m_1 < m_2 < t$ with exactly two $s$--$t$ storylines $\pi_i = (s, m_i, t)$, assigning the channel pair $e_i$ to edge $(s, m_i)$ and $(1,1)$ to edge $(m_i, t)$. Because $M_q(1,1) = 1 > M_q(e_i)$, the bottleneck of $\pi_i$ is its first edge, so the extractor under $M_\alpha$ returns the storyline with the larger $M_\alpha(e_i)$. After the rescaling $A' = A^2$, $T' = T^2$, the exponent-shift identity \eqref{eq:exponent-shift} gives edge coherences $M_{2\alpha}(\cdot)^2$, and by Step 1 (with $\varphi(x) = x^2$) the extraction is governed by the $M_{2\alpha}$-ordering of $\{e_1, e_2\}$; since that ordering is opposite to the $M_\alpha$-ordering, the extracted storyline differs before and after the rescaling.

 Steps 2 and 3 together establish the corollary: within the spectrum, maximin extraction is rescaling-invariant exactly for $M_0$ and $M_{\pm\infty}$, and among these only the geometric mean is strictly increasing in both channels.
\end{proof}

Operationally, Corollary~\ref{cor:maximin-scale-inv} says that the storylines extracted under the geometric mean are insensitive to a uniform multiplicative rescaling of the per-channel costs---the otherwise arbitrary calibration of how strongly raw dissimilarity is penalized---and that it is the only power mean responsive to both channels with this property. The limits Min and Max share the insensitivity, but for the degenerate reason that order statistics commute with any common monotone recalibration, at the price of responding to only one channel at each comparison; for every other combinator the cost scale must be fixed before the extraction output is well-determined.

\subsection{Conditional Axiomatic Characterization}\label{sec:axioms}
The geometric mean is the unique combinator consistent with the four axioms stated below. The characterization is conditional on those axioms; the discussion after the proof identifies which axioms do the most work and what is admitted when each one is dropped.

\begin{Theorem}[Axiomatic characterization]\label{thm:axioms}
 Let $F\colon [0,1]^2 \to [0,1]$ be a coherence combinator
 satisfying:
 \begin{enumerate}
  \item[\textbf{A1}] \textup{(Boundary/veto)} $F(0,t) = F(t,0) = 0$ for all $t\in[0,1]$, and $F(1,1)=1$.
  \item[\textbf{A2}] \textup{(Symmetry)} $F(a,t) = F(t,a)$ for all $a,t\in(0,1]$.
  \item[\textbf{A3}] \textup{(Log-additivity)} There exist functions $f,g\colon[0,\infty)\to[0,\infty)$ such that, for all $a,t\in(0,1]$, $-\log F(a,t) = f(-\log a) + g(-\log t)$ (in particular $F>0$ on $(0,1]^2$, so the left-hand side is finite).
  \item[\textbf{A4}] \textup{(Normalization)} $F(a,a)=a$ for all
   $a\in[0,1]$.
 \end{enumerate}
 Then $F(a,t) = \sqrt{a\cdot t}$.
\end{Theorem}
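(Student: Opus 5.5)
The plan is to work entirely in log-coordinates, where A3 turns $F$ into a separable function, and then show that A2 and A4 pin down the two separable pieces exactly. No regularity assumption (continuity, measurability) on $f$ and $g$ should be needed: the argument is pure algebra on the functional equation. Set $x=-\log a$ and $y=-\log t$. As $a,t$ range over $(0,1]$, the pair $(x,y)$ ranges over all of $[0,\infty)^2$, and A3 gives
\[
 -\log F(a,t) \;=\; f(x)+g(y)\qquad\text{for all }x,y\ge 0 .
\]

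First I will use symmetry. A2 says $F(a,t)=F(t,a)$, which in log-coordinates becomes $f(x)+g(y)=f(y)+g(x)$ for all $x,y\ge 0$. Fixing $y=0$ gives $f(x)-g(x)=f(0)-g(0)=:c$ for every $x$, so $f=g+c$ with $c$ a constant.

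Second I will use normalization. A4 with $a=t\in(0,1]$ gives $-\log F(a,a)=-\log a$, that is, $f(x)+g(x)=x$ for all $x\ge 0$. Solving this together with $f-g=c$ yields
\[
 f(x)=\tfrac12(x+c),\qquad g(x)=\tfrac12(x-c).
\]
(Nonnegativity of $f,g$ at $x=0$ in fact forces $c=0$, but this is not needed.) The constant $c$ cancels in the sum, so
\[
 -\log F(a,t)=\tfrac12(x+y)=-\tfrac12\log(at),
\]
and hence $F(a,t)=\sqrt{at}$ on $(0,1]^2$.

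Finally I will treat the boundary separately, since A3 only covers the open-at-zero square. If $a=0$ or $t=0$, A1 gives $F=0=\sqrt{a t}$. The condition $F(1,1)=1$ is consistent with the formula already derived, as is A4 at $a=0$. This establishes $F(a,t)=\sqrt{a\cdot t}$ on all of $[0,1]^2$.

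The main obstacle here is bookkeeping rather than depth. The points to handle carefully are:
\begin{itemize}
 \item the map $(a,t)\mapsto(x,y)$ is onto $[0,\infty)^2$, so the functional identities hold on the full domain of $f$ and $g$;
 \item the zero boundary lies outside the reach of A3 and must be handled by A1 alone.
\end{itemize}
I would also remark, as the discussion after the proof does, which axiom removes which freedom:
\begin{itemize}
 \item A3 is what excludes non-separable combinators such as $M_\alpha$ with $\alpha\neq 0$ (compare Proposition~\ref{prop:log-additive});
 \item A2 removes the asymmetric weights (weighted geometric means $a^wt^{1-w}$ satisfy A1, A3, A4);
 \item A4 fixes the overall scale (powers $(at)^{\beta}$ satisfy A1--A3).
\end{itemize}
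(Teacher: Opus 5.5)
Your proof is correct and follows the paper's argument essentially step for step: passing to log-coordinates, using A2 to make $f-g$ constant, applying A4 on the diagonal, and invoking A1 for the axes. The one difference is minor. The paper first uses $F(1,1)=1$ and $f,g\ge 0$ to force $f(0)=g(0)=0$, and hence $f=g$. You instead observe that the constant $c$ cancels in $f(x)+g(y)$, which shows that neither ingredient is needed to obtain $\sqrt{at}$ on $(0,1]^2$.
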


\begin{proof}
 From \textbf{A3}, letting $x=-\log a$ and $y=-\log t$ (so $x,y\ge 0$), we have $-\log F = f(x) + g(y)$ with $f,g\colon[0,\infty)\to[0,\infty)$, so in particular $f,g\ge 0$. The corner value $F(1,1)=1$ from \textbf{A1} gives $f(0)+g(0)=0$, and since $f,g\ge 0$ this forces $f(0)=g(0)=0$.

 From \textbf{A2}, $f(x)+g(y)=f(y)+g(x)$ for all $x,y\ge 0$, which implies $f-g$ is constant. Since $f(0)-g(0)=0$, we conclude $f=g$.

 From \textbf{A4}, $F(a,a)=a$ for all $a\in(0,1]$, so $-\log a = f(-\log a) + g(-\log a) = 2f(-\log a)$. Dividing both sides by~2: $f(x) = x/2$ for all $x\ge 0$.

 Substituting: $-\log F(a,t) = \frac{-\log a}{2} + \frac{-\log t}{2} = -\frac{1}{2}\log(at)$, so $F(a,t) = (at)^{1/2} = \sqrt{at}$ on the open square $(0,1]^2$. The boundary axiom \textbf{A1} then fixes the axes, $F(0,t)=F(t,0)=0=\sqrt{0\cdot t}$, so $F(a,t)=\sqrt{at}$ on all of $[0,1]^2$.
\end{proof}

\textbf{Scope of the characterization.}
Theorem~\ref{thm:axioms} is stated for an arbitrary $F\colon[0,1]^2\to[0,1]$, so axioms \textbf{A1}--\textbf{A4} single out the geometric mean among \emph{all} coherence combinators---not only within the power-mean compensability spectrum used to introduce it. That spectrum (Min, HM, GM, AM, Max) situates the geometric mean among familiar combinators and grounds the compensability discussion of Section~\ref{sec:candidates}, but the uniqueness statement itself imposes no power-mean restriction.

\textbf{No regularity is assumed.}
The characterization requires no continuity, monotonicity, or measurability of $f$ and $g$. Functional-equation characterizations of means---of Cauchy, Pexider, or bisymmetry type \citep{aczel1989functional}---typically impose such hypotheses to exclude pathological (e.g., non-measurable) solutions; here the value $f(x)$ is pinned pointwise for every $x \ge 0$ (axiom~\textbf{A4} on the diagonal, with $f=g$ from~\textbf{A2}, gives $2f(-\log a) = -\log a$, and $a \mapsto -\log a$ maps $(0,1]$ onto $[0,\infty)$) rather than left to satisfy a functional equation, so pathological solutions cannot arise.

\textbf{Interpretation: each axiom and its role.}
Each axiom has independent motivation. The boundary/veto condition (\textbf{A1}) enforces that a zero in either component yields zero coherence while perfect similarity on both yields one\footnote{The clause $F(1,1)=1$ is implied by axiom~\textbf{A4} (at $a=1$); it is retained in~\textbf{A1} so that the boundary axiom remains self-contained when \textbf{A4} is ablated (Table~\ref{tab:axiom-ablation}).}---a single failed channel cannot be compensated. Symmetry (\textbf{A2}) reflects that the two representation spaces play symmetric roles in our setting. Log-additivity (\textbf{A3}) is needed because sum-based narrative extraction algorithms (Dijkstra on $-\log C$ edge costs, LP formulations) require the total path cost to decompose into independent per-factor contributions. Normalization (\textbf{A4}) is a calibration requirement: when both components agree on a value $a$, the composite must equal $a$.

\textbf{Which axioms are load-bearing.}
Each of the four axioms does independent work: dropping any one admits a combinator that is not the geometric mean, as Table~\ref{tab:axiom-ablation} records. Log-additivity (\textbf{A3}) and normalization (\textbf{A4}) carry the most weight---\textbf{A3} alone cuts the entire compensability spectrum down to a single member, and \textbf{A4} fixes the calibration exponent---while symmetry (\textbf{A2}) fixes the equal-weight balance between the two channels and the boundary/veto condition (\textbf{A1}) fixes the behavior on the axes, without which the characterization pins $F$ only on the open square $(0,1]^2$.

\begin{table}[h]
 \centering
 \small
 \caption{Axiom ablation for Theorem~\ref{thm:axioms}. For each row, the axiom in column~1 is dropped while the other three are retained, and column~2 gives at least one combinator that satisfies the remaining axioms but is not the geometric mean. The geometric mean is the unique combinator that satisfies all four.\label{tab:axiom-ablation}}
 \begin{tabular}{@{}l p{0.45\linewidth} p{0.30\linewidth}@{}}
  \toprule
  \textbf{Axiom dropped} & \textbf{A combinator admitted by the three remaining axioms} & \textbf{Role of the dropped axiom} \\
  \midrule
  \textbf{A1} Boundary/veto & The surviving-channel extension $F(0,t)=F(t,0)=t$, equal to $\sqrt{at}$ on the open square but returning the non-zero channel on the axes (only $F(0,0)=0$ is forced, by A4); the veto $F=0$ at a zero channel is no longer enforced. & Fixes the axes (the veto); without it $F$ is determined only on the open square. \\
  \textbf{A2} Symmetry & Weighted GM $F_w(a,t)=a^w t^{1-w}$, any $w \in (0,1)$, $w \neq 1/2$. & Fixes equal weights ($w = 1/2$). \\
  \textbf{A3} Log-additivity & Veto-respecting non-GM power means, e.g.\ $\HM$ ($\alpha=-1$) and the minimum ($\alpha\to-\infty$); i.e.\ $M_\alpha$ with $\alpha<0$. & Picks GM out of the veto-respecting compensability spectrum. \\
  \textbf{A4} Normalization & $F_\beta(a,t)=(at)^{\beta}$, any $\beta > 0$, $\beta \neq 1/2$. & Fixes the exponent to $\beta = 1/2$. \\
  \bottomrule
 \end{tabular}
\end{table}

The empirical weight-sensitivity analysis in the Supplementary Materials (Section~S2) corroborates the role of \textbf{A2}: across cluster-count configurations, the optimal angular--topic weight $w^\ast$ (the exponent of the weighted geometric mean $A^w T^{1-w}$, distinct from the compensability exponent $\alpha$) tracks the equal-weight value $0.5$ closely (Cohen's $d \le 0.25$ away from the degenerate $K = 2$ boundary), so the data are consistent with the symmetric weighting that \textbf{A2} imposes. Theorem~\ref{thm:axioms} is therefore best understood as a \emph{conditional} characterization: it identifies the geometric mean as the unique combinator simultaneously consistent with log-additivity, calibration, symmetry, and the boundary/veto condition. The intent is not to argue that the geometric mean is universally optimal, only that it is the natural choice under the stated operational requirements.

\subsection{The Bottleneck-Gap Profile}\label{sec:gap-profile}
The arguments above are properties of the geometric mean as a combinator. We close Section~\ref{sec:geomean} with a property of extraction \emph{output}: across the compensability spectrum $M_\alpha$, which exponent most sharply separates coherent storylines from incoherent ones, measured on the maximin (bottleneck) objective that narrative extractors optimize?

Fix $\varepsilon\in(0,1)$ and let $D_{\mathrm H},D_{\mathrm R}$ be Borel probability measures on $[\varepsilon,1]^2$---the laws of the channel pair $(A,T)$ of the \emph{bottleneck edge} (the minimum-coherence edge under the geometric mean $M_0$, held fixed across $\alpha$) of, respectively, a coherent and an incoherent storyline. With $M_0=\sqrt{AT}$ the geometric mean, define the \emph{bottleneck-gap profile}
\begin{equation}\label{eq:gap-profile}
 g(\alpha)\;=\;\mathbb{E}_{D_{\mathrm H}}[M_\alpha(A,T)]\;-\;\mathbb{E}_{D_{\mathrm R}}[M_\alpha(A,T)],\qquad\alpha\in\R .
\end{equation}

\begin{Proposition}[Bottleneck-gap profile]\label{prop:gap-profile}
 Let $g=E+O$ be the decomposition of~\eqref{eq:gap-profile} into an even and an odd function of $\alpha$. Write $\mu_\bullet=\mathbb{E}_{D_\bullet}[M_0]$, $\Delta=\log A-\log T$, and $\varphi_\alpha=\log M_\alpha-\log M_0$. Then:
 \begin{enumerate}
  \item[\textup{(a)}] $\varphi_\alpha$ is odd in $\alpha$, and $E(\alpha)=\mathbb{E}_{\mathrm H}[M_0\cosh\varphi_\alpha]-\mathbb{E}_{\mathrm R}[M_0\cosh\varphi_\alpha]$, $O(\alpha)=\mathbb{E}_{\mathrm H}[M_0\sinh\varphi_\alpha]-\mathbb{E}_{\mathrm R}[M_0\sinh\varphi_\alpha]$.
  \item[\textup{(b)}] $E(0)=\mu_{\mathrm H}-\mu_{\mathrm R}$, and $E(0)-E(\alpha)=\mathbb{E}_{\mathrm R}[Y_\alpha]-\mathbb{E}_{\mathrm H}[Y_\alpha]$ with $Y_\alpha=M_0(\cosh\varphi_\alpha-1)\ge 0$. Hence the even part is maximized at $\alpha=0$ (the geometric mean) if and only if $\mathbb{E}_{\mathrm R}[Y_\alpha]\ge\mathbb{E}_{\mathrm H}[Y_\alpha]$ for every $\alpha$.
  \item[\textup{(c)}] $g$ is smooth, with $g'(0)=\tfrac18\bigl(\mathbb{E}_{\mathrm H}[M_0\Delta^2]-\mathbb{E}_{\mathrm R}[M_0\Delta^2]\bigr)$ and $g''(0)=\tfrac1{64}\bigl(\mathbb{E}_{\mathrm H}[M_0\Delta^4]-\mathbb{E}_{\mathrm R}[M_0\Delta^4]\bigr)$. Suppose $g''(0)<0$, and fix (by continuity) $\delta>0$ and $m>0$ with $g''\le -m$ on $[-\delta,\delta]$. If $|g'(0)|<m\delta$, then $g$ has a unique critical point $\alpha^\ast$ in $(-\delta,\delta)$; it is a strict local maximum and satisfies $|\alpha^\ast|\le|g'(0)|/m$ together with
  \[
   \Bigl|\alpha^\ast+\frac{g'(0)}{g''(0)}\Bigr|
   \;\le\;\frac{L_3}{2m^3}\,g'(0)^2,
   \qquad L_3:=\max_{[-\delta,\delta]}\lvert g'''\rvert .
  \]
  In particular
  \[
   \alpha^\ast=-\frac{g'(0)}{g''(0)}+O\bigl(g'(0)^2\bigr)
   =-\frac{8\bigl(\mathbb{E}_{\mathrm H}[M_0\Delta^2]-\mathbb{E}_{\mathrm R}[M_0\Delta^2]\bigr)}
          {\mathbb{E}_{\mathrm H}[M_0\Delta^4]-\mathbb{E}_{\mathrm R}[M_0\Delta^4]}
    +O\bigl(g'(0)^2\bigr),
  \]
  with implied constant $L_3/(2m^3)$.
 \end{enumerate}
\end{Proposition}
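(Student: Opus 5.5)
The plan rests on one algebraic identity that exhibits $M_\alpha$ as a multiplicative perturbation of $M_0$. With $\Delta=\log A-\log T$, write $A^\alpha=(AT)^{\alpha/2}e^{\alpha\Delta/2}$ and $T^\alpha=(AT)^{\alpha/2}e^{-\alpha\Delta/2}$. Then $(A^\alpha+T^\alpha)/2=(AT)^{\alpha/2}\cosh(\alpha\Delta/2)$, so for $\alpha\neq 0$
\[
 \varphi_\alpha=\log M_\alpha-\log M_0=\frac{1}{\alpha}\log\cosh\!\Bigl(\frac{\alpha\Delta}{2}\Bigr),
\]
extended by $\varphi_0=0$ as in the limit computed in the proof of Proposition~\ref{prop:log-additive}. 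Since $u\mapsto\log\cosh u$ is even, $\varphi_\alpha$ is odd in $\alpha$.

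\textbf{Part (a).} Write $M_\alpha=M_0e^{\varphi_\alpha}=M_0\cosh\varphi_\alpha+M_0\sinh\varphi_\alpha$. Because $\varphi$ is odd in $\alpha$, the first term is even in $\alpha$ and the second is odd. Taking $\mathbb{E}_{\mathrm H}-\mathbb{E}_{\mathrm R}$ and invoking uniqueness of the even/odd decomposition gives the stated formulas for $E$ and $O$.

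\textbf{Part (b).} Setting $\alpha=0$ gives $E(0)=\mu_{\mathrm H}-\mu_{\mathrm R}$. Subtracting $E(\alpha)$ from $E(0)$ yields $\mathbb{E}_{\mathrm R}[Y_\alpha]-\mathbb{E}_{\mathrm H}[Y_\alpha]$, and $Y_\alpha\ge0$ because $\cosh\ge1$. Hence $E(0)\ge E(\alpha)$ for all $\alpha$ holds exactly under the stated inequality.

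\textbf{Part (c), Taylor coefficients.} On $[\varepsilon,1]^2$ we have $|\Delta|\le-\log\varepsilon$, so the variables range over a compact set. The function $(\alpha,A,T)\mapsto M_\alpha(A,T)$ is jointly $C^\infty$ there, because $\varphi_\alpha$ is given by the convergent series $\log\cosh u=u^2/2-u^4/12+\cdots$ divided by $\alpha$. Consequently every $\alpha$-derivative is bounded on $[-R,R]\times[\varepsilon,1]^2$ for each $R$. Dominated convergence then lets us differentiate $g$ under the expectations, so $g$ is smooth.

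Expanding the series gives $\varphi_\alpha=\alpha\Delta^2/8-\alpha^3\Delta^4/192+O(\alpha^5)$. Hence
\[
 M_\alpha=M_0\Bigl(1+\frac{\alpha\Delta^2}{8}+\frac{\alpha^2\Delta^4}{128}+O(\alpha^3)\Bigr),
\]
and reading off coefficients gives $g'(0)=\tfrac18(\cdots)$ and $g''(0)=2\cdot\tfrac{1}{128}(\cdots)=\tfrac1{64}(\cdots)$, as claimed.

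\textbf{Part (c), the critical point.} Assume $g''\le-m$ on $[-\delta,\delta]$. Then $g'$ is strictly decreasing there. By the mean value theorem, $g'(\delta)\le g'(0)-m\delta<0$ and $g'(-\delta)\ge g'(0)+m\delta>0$ when $|g'(0)|<m\delta$. So $g'$ has exactly one zero $\alpha^\ast$ in $(-\delta,\delta)$, and it is a strict local maximum since $g''(\alpha^\ast)<0$. The mean value theorem again gives $|g'(0)|=|g'(0)-g'(\alpha^\ast)|\ge m|\alpha^\ast|$.

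Next, Taylor's theorem with Lagrange remainder gives $0=g'(\alpha^\ast)=g'(0)+g''(0)\alpha^\ast+R$ with $|R|\le L_3\alpha^{\ast2}/2$. Dividing by $|g''(0)|\ge m$ and using $|\alpha^\ast|\le|g'(0)|/m$ yields
\[
 \Bigl|\alpha^\ast+\frac{g'(0)}{g''(0)}\Bigr|\le\frac{L_3}{2m^3}\,g'(0)^2 .
\]
Substituting the explicit moment formulas for $g'(0)$ and $g''(0)$ gives the displayed asymptotic for $\alpha^\ast$.

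\textbf{Main obstacle.} The delicate step is justifying smoothness of $g$ and the exchange of derivative and expectation uniformly in $\alpha$. I would handle it by exhibiting $\varphi_\alpha$ as the entire function $\psi(\alpha\Delta/2)\,\Delta/2$, where $\psi(u)=\log\cosh(u)/u$ is smooth on $\mathbb{R}$. Together with the boundedness of $\Delta$, this gives uniform bounds on all $\alpha$-derivatives on compact $\alpha$-intervals.
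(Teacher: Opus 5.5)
Your proof is correct and follows the paper's route. Your closed form $\varphi_\alpha=\alpha^{-1}\log\cosh(\alpha\Delta/2)$ is the paper's cumulant-generating-function representation $K(\alpha)/\alpha$ with the mean $\tfrac12(\log A+\log T)$ removed, and you get oddness from the evenness of $\log\cosh$ where the paper uses the reflection identity $M_\alpha M_{-\alpha}=M_0^2$; the even/odd split, the Taylor coefficients and the critical-point estimate then match the paper step for step.

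One small correction: $\psi(u)=\log\cosh(u)/u$ is real-analytic on $\mathbb{R}$ but not entire, because $\cosh$ vanishes at $\pm i\pi/2$ and so the series has radius $\pi/2$. This is harmless, since your argument only uses smoothness of $\psi$ on $\mathbb{R}$ (the series near $\alpha=0$, the closed form elsewhere).
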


\begin{proof}
 A direct computation gives the reflection identity $M_\alpha M_{-\alpha}=AT=M_0^2$, so $\log M_\alpha+\log M_{-\alpha}=2\log M_0$ and $\varphi_{-\alpha}=-\varphi_\alpha$: $\varphi_\alpha$ is odd in $\alpha$. With $u=\log A,\,v=\log T\in[\log\varepsilon,0]$ one has $\log M_\alpha=K(\alpha)/\alpha$, where $K(\alpha)=\log\bigl(\tfrac12(e^{\alpha u}+e^{\alpha v})\bigr)$ is the cumulant generating function of the uniform law on $\{u,v\}$; $K$ is real-analytic on $\R$ with $K(0)=0$, so $\log M_\alpha$ extends real-analytically across $\alpha=0$ and $M_\alpha$ is real-analytic in $\alpha$. On the compact domain $[\varepsilon,1]^2$, $M_\alpha$ and all its $\alpha$-derivatives are bounded uniformly on compact $\alpha$-intervals, so $g$ is $C^\infty$ and its derivatives may be taken under the integral sign (dominated convergence).

 \emph{(a)} Since $M_{\pm\alpha}=M_0\,e^{\pm\varphi_\alpha}$, equation~\eqref{eq:gap-profile} gives $g(\alpha)=\mathbb{E}_{\mathrm H}[M_0e^{\varphi_\alpha}]-\mathbb{E}_{\mathrm R}[M_0e^{\varphi_\alpha}]$, and $\tfrac12\bigl(g(\alpha)\pm g(-\alpha)\bigr)$ yields the stated $E$ and $O$; $\cosh\varphi_\alpha$ is even and $\sinh\varphi_\alpha$ odd in $\alpha$ because $\varphi_\alpha$ is odd.

 \emph{(b)} $\cosh\varphi_0=1$ gives $E(0)=\mu_{\mathrm H}-\mu_{\mathrm R}$, and $E(0)-E(\alpha)=\mathbb{E}_{\mathrm R}[M_0(\cosh\varphi_\alpha-1)]-\mathbb{E}_{\mathrm H}[M_0(\cosh\varphi_\alpha-1)]$. As $\cosh\ge 1$, $Y_\alpha\ge 0$, so $E(\alpha)\le E(0)$ for all $\alpha$ if and only if $\mathbb{E}_{\mathrm R}[Y_\alpha]\ge\mathbb{E}_{\mathrm H}[Y_\alpha]$ throughout.

 \emph{(c)} From $K(0)=0$ and $K''(0)=\operatorname{Var}\{u,v\}=\tfrac14\Delta^2$, the analytic odd function $\varphi_\alpha$ satisfies $\varphi_\alpha=\dot\varphi\,\alpha+O(\alpha^3)$ with $\dot\varphi=\tfrac12K''(0)=\tfrac18\Delta^2$. Since $E$ is even and $O$ odd, $g'(0)=O'(0)$ and $g''(0)=E''(0)$; differentiating the closed forms of~(a) at $\alpha=0$ (using $\varphi_0=0$, $\varphi'_0=\dot\varphi$, $\varphi''_0=0$) gives $O'(0)=\mathbb{E}_{\mathrm H}[M_0\dot\varphi]-\mathbb{E}_{\mathrm R}[M_0\dot\varphi]$ and $E''(0)=\mathbb{E}_{\mathrm H}[M_0\dot\varphi^2]-\mathbb{E}_{\mathrm R}[M_0\dot\varphi^2]$, which are the stated $g'(0)$, $g''(0)$ after substituting $\dot\varphi=\Delta^2/8$. If $g''(0)<0$, continuity yields $\delta>0$ and $m>0$ with $g''\le -m$ on $[-\delta,\delta]$, so $g'$ is strictly decreasing there. Assume $|g'(0)|<m\delta$. If $g'(0)\ge 0$ then $g'(\delta)\le g'(0)-m\delta<0$; if $g'(0)\le 0$ then $g'(-\delta)\ge g'(0)+m\delta>0$. In either case $g'$ changes sign on $(-\delta,\delta)$ and, being strictly decreasing, has a unique zero $\alpha^\ast$ there, with $g'>0$ to its left and $g'<0$ to its right, so $\alpha^\ast$ is a strict local maximum of $g$. Integrating $g''\le -m$ between $0$ and $\alpha^\ast$ gives $|g'(0)|\ge m|\alpha^\ast|$, i.e.\ $|\alpha^\ast|\le|g'(0)|/m$. Finally, a second-order Taylor expansion of $g'$ gives $0=g'(\alpha^\ast)=g'(0)+g''(0)\,\alpha^\ast+\tfrac12 g'''(\xi)\,\alpha^{\ast 2}$ for some $\xi$ between $0$ and $\alpha^\ast$, whence $\alpha^\ast+g'(0)/g''(0)=-\bigl(g'''(\xi)/2g''(0)\bigr)\alpha^{\ast 2}$; using $|g''(0)|\ge m$ and $|\alpha^\ast|\le|g'(0)|/m$ yields the stated bound.
\end{proof}

\textbf{Interpretation: the geometric mean and extraction output.}
Proposition~\ref{prop:gap-profile} isolates what the geometric mean does for extraction \emph{output}. The even part of the gap---its symmetric, combinator-balanced component---is maximized \emph{exactly} at the geometric mean precisely when $\mathbb{E}_{\mathrm R}[Y_\alpha]\ge\mathbb{E}_{\mathrm H}[Y_\alpha]$, i.e.\ when incoherent storylines carry more channel imbalance than coherent ones. This is not an additional assumption: it is the design premise of the coherence metric itself---a coherent transition is good in \emph{both} channels (balanced), an incoherent one fails at least one (imbalanced; a transition coherent in only one channel is vetoed, $C_{\GM}=0$, Section~\ref{sec:properties}). The odd part only \emph{displaces} the observed peak, by $-g'(0)/g''(0)$, an explicit ratio of imbalance moments of the data. Unlike the axiomatic characterization (Theorem~\ref{thm:axioms}), this argument refers to extracted storylines rather than to the combinator in isolation.

\textbf{Empirical check.}
Experiment~7 (Section~\ref{sec:exp7}) measures $g(\alpha)$ on five corpora. The design-premise condition $\mathbb{E}_{\mathrm R}[Y_\alpha]\ge\mathbb{E}_{\mathrm H}[Y_\alpha]$ holds on every one; the even part is accordingly maximized at the geometric mean, and the observed peak of $g$ itself is interior to the spectrum, displaced from $\alpha=0$ by the odd part. The decomposition $g=E+O$ and the closed forms of Proposition~\ref{prop:gap-profile} are exact; that the even part is maximized at $\alpha=0$ is conditional on the design premise, which Experiment~7 verifies rather than assumes.

\section{Formal Properties}\label{sec:properties}
The compensability ordering of Section~\ref{sec:candidates} is the classical power-mean inequality: for all $A,T\in[0,1]$ (with the convention $2AT/(A+T):=0$ at $A=T=0$),
\begin{equation}\label{eq:pm-chain}
 \min(A,T)\;\le\;\tfrac{2AT}{A+T}\;\le\;\sqrt{AT}\;\le\;\tfrac{A+T}{2}\;\le\;\max(A,T),
\end{equation}
with equality throughout iff $A=T$. This is the monotonicity of $q\mapsto M_q(A,T)$ in the exponent \citep{bullen2003handbook}, evaluated at $q=-\infty,-1,0,1,+\infty$---the same fact used in the proof of Corollary~\ref{cor:maximin-scale-inv}.

\textbf{Interpretation.} The geometric mean occupies a balanced middle ground: stricter than the arithmetic mean (penalizing imbalance between the components) but less harsh than the minimum (which ignores the stronger channel entirely).

\begin{Lemma}[Position of the Quad combinator]\label{lem:quad-position}
 The Quad combinator $C_{\mathrm{Quad}} = 1 - d_\times^2$ is not a power mean and does not join the chain~\eqref{eq:pm-chain} at a fixed position. For all $A,T\in[0,1]$,
 \begin{equation}\label{eq:quad-am}
  C_{\mathrm{Quad}}-C_{\AM}\;=\;\tfrac12\bigl[A(1-A)+T(1-T)\bigr]\;\ge\;0,
 \end{equation}
 so $C_{\GM}\le C_{\AM}\le C_{\mathrm{Quad}}$ everywhere, with $C_{\AM}=C_{\mathrm{Quad}}$ iff $A,T\in\{0,1\}$. Its position relative to the maximum, by contrast, depends on the channel imbalance: writing $S_1=\max(A,T)$ and $S_2=\min(A,T)$,
 \begin{equation}\label{eq:quad-max}
  C_{\mathrm{Quad}}\;\le\;C_{\max}\iff S_1^2+(1-S_2)^2\;\ge\;1,
 \end{equation}
 which holds for sufficiently imbalanced pairs (e.g.\ $(0.9,0.1)$, where $C_{\mathrm{Quad}}=0.59<0.9$) and fails for every balanced interior pair: $A=T\in(0,1)$ gives $C_{\mathrm{Quad}}=A(2-A)>A=C_{\max}$.
\end{Lemma}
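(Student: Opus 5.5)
The plan is to work entirely from the closed form $C_{\mathrm{Quad}} = \tfrac12[A(2-A)+T(2-T)]$ of Section~\ref{sec:candidates}, which follows from $d_\times^2 = \tfrac12[(1-A)^2+(1-T)^2]$ by Theorem~\ref{thm:metric-gm}.

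\textbf{Step 1: identity~\eqref{eq:quad-am}.} I would subtract $C_{\AM}=\tfrac12(A+T)$ directly:
\[
C_{\mathrm{Quad}}-C_{\AM}=\tfrac12\bigl[(2A-A^2-A)+(2T-T^2-T)\bigr]=\tfrac12\bigl[A(1-A)+T(1-T)\bigr].
\]
Each summand is nonnegative on $[0,1]$ and vanishes iff its variable lies in $\{0,1\}$. Combining this with the classical inequality $C_{\GM}\le C_{\AM}$ from the chain~\eqref{eq:pm-chain} gives $C_{\GM}\le C_{\AM}\le C_{\mathrm{Quad}}$, with the stated equality case.

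\textbf{Step 2: criterion~\eqref{eq:quad-max}.} Write $C_{\max}=S_1$, and note that $C_{\mathrm{Quad}}$ is symmetric, so it can be written in terms of $S_1,S_2$. I would rewrite $A(2-A)=1-(1-A)^2$, so that $C_{\mathrm{Quad}}=1-\tfrac12[(1-S_1)^2+(1-S_2)^2]$. Then $C_{\mathrm{Quad}}\le S_1$ is equivalent to
\[
(1-S_1)^2+(1-S_2)^2\ \ge\ 2(1-S_1).
\]
Expanding $(1-S_1)^2-2(1-S_1)=S_1^2-1$ turns this into $S_1^2+(1-S_2)^2\ge 1$, which is exactly~\eqref{eq:quad-max}. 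This algebraic rearrangement is the only step needing care.

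\textbf{Step 3: examples and non-power-mean claim.}
\begin{itemize}
\item For $(0.9,0.1)$: $0.81+0.81=1.62\ge1$, and $C_{\mathrm{Quad}}=\tfrac12(0.99+0.19)=0.59<0.9$.
\item For $A=T\in(0,1)$: $C_{\mathrm{Quad}}=A(2-A)>A$, since $A(1-A)>0$.
\item Every power mean satisfies $M_\alpha(a,a)=a$, whereas $C_{\mathrm{Quad}}(a,a)=a(2-a)\ne a$ for $a\in(0,1)$. So Quad is not a power mean, and it violates the normalization~\textbf{A4}.
\end{itemize}

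Finally, Quad sits below Max at imbalanced pairs and above Max at balanced ones. It therefore has no fixed slot in the chain~\eqref{eq:pm-chain}, since it cannot be placed consistently relative to $C_{\max}$. I do not expect any real obstacle beyond keeping the $S_1,S_2$ bookkeeping straight in Step 2.
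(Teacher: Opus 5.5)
Your proposal is correct and follows essentially the paper's proof: the same direct subtraction gives \eqref{eq:quad-am}, and for \eqref{eq:quad-max} you use the same rewriting $A(2-A)=1-(1-A)^2$ (the paper assumes $A\ge T$ and simplifies $C_{\mathrm{Quad}}-A=\tfrac12[1-(1-T)^2-A^2]$, which is your computation in $S_1,S_2$ notation). Your diagonal argument for the non-power-mean claim, $M_\alpha(a,a)=a\neq a(2-a)=C_{\mathrm{Quad}}(a,a)$, is a small but useful addition that the paper leaves implicit in its balanced-pair example.
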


\begin{proof}
 With $C_{\mathrm{Quad}}=\tfrac12[A(2-A)+T(2-T)]$, equation~\eqref{eq:quad-am} is immediate and vanishes iff each nonnegative term does. For~\eqref{eq:quad-max}, assume $A\ge T$, so $C_{\max}=A$; then $C_{\mathrm{Quad}}-A=\tfrac12\bigl[T(2-T)-A^2\bigr]=\tfrac12\bigl[1-(1-T)^2-A^2\bigr]$, which is $\le 0$ iff $A^2+(1-T)^2\ge 1$.
\end{proof}

Intuitively, squaring a value in $[0,1]$ shrinks it, so the metric-squared complement $1-d_\times^2$ is the most generous combinator in the table over balanced pairs---and the same squaring, applied to the proper metric $d_\times$, is what breaks the triangle inequality for $1-C_{\mathrm{Quad}}=d_\times^2$ (the $8.21\%$ violation rate in the combinator comparison, Section~\ref{sec:exp2}).

\begin{Proposition}[Metric structure of bounded dissimilarities]\label{prop:bounded-metrics}
 Let $d_1 = \dang/\pi$ and $d_2 = d_{\mathrm{JS}}$ be the normalized component distances. Then:
 \begin{enumerate}[nosep]
  \item $1 - C_{\AM} = (d_1 + d_2)/2$ is a proper metric (scaled $\ell^1$ product);
  \item $1 - C_{\min} = \max(d_1, d_2)$ is a proper metric ($\ell^\infty$ product);
  \item $1 - C_{\GM} = 1 - \sqrt{(1-d_1)(1-d_2)}$ is \emph{not} a metric in general.
 \end{enumerate}
\end{Proposition}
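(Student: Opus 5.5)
Parts 1 and 2 follow the same pattern as the proof of Theorem~\ref{thm:metric-gm}. Both $d_1=\dang/\pi$ and $d_2=d_{\mathrm{JS}}$ are proper metrics on their factors with values in $[0,1]$ \citep{endres2003new,osterreicher2003new}. The $\ell^p$-product $\|(d_1,d_2)\|_p$ of two metrics is a metric on the Cartesian product for every $p\in[1,\infty]$; the triangle inequality comes from applying the factorwise triangle inequalities and then the Minkowski inequality for $\|\cdot\|_p$ on $\R^2$, which is monotone on the nonnegative orthant. Taking $p=1$ and dividing by $2$ shows that $1-C_{\AM}=1-\tfrac12[(1-d_1)+(1-d_2)]=(d_1+d_2)/2$ is a metric. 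Taking $p=\infty$ shows that $1-C_{\min}=1-\min(1-d_1,1-d_2)=\max(d_1,d_2)$ is a metric. Symmetry and the identity axiom are inherited directly: the value is zero iff $d_1=d_2=0$, i.e.\ iff both representations coincide. As in the footnote to Theorem~\ref{thm:metric-gm}, on documents this gives a pseudometric.

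For part 3 I will give an explicit counterexample to the triangle inequality. Write $D(x,y)=1-\sqrt{(1-x)(1-y)}$ for the dissimilarity as a function of the two component distances. Near the origin, $D\approx (x+y)/2$, which is itself a metric, so a violation must come from the nonlinearity at larger distances. The cleanest source is the veto: $D(x,1)=1$ for every $x$. The plan is to choose three points $P,Q,R$ in the product space with $D(P,R)=1$ while $D(P,Q)+D(Q,R)<1$. That requires the large distance in one channel to split into two halves, each costing noticeably less than $1/2$. Since $D$ is concave in each coordinate along such paths, splitting a single channel alone does not obviously help. I will therefore mix the channels: $P$ to $Q$ moves only in channel 1, $Q$ to $R$ moves only in channel 2, and $P$ to $R$ moves in both.

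The concrete choice is $Q$ differing from $P$ only in the angular component, with $d_1=a$ and $d_2=0$, and $R$ differing from $Q$ only in the topic component, with $d_1=0$ and $d_2=b$. Then $D(P,Q)=1-\sqrt{1-a}$, $D(Q,R)=1-\sqrt{1-b}$, and $D(P,R)=1-\sqrt{(1-a)(1-b)}$. Setting $u=\sqrt{1-a}$ and $v=\sqrt{1-b}$, the triangle inequality would require $1-uv\le 2-u-v$, i.e.\ $(1-u)(1-v)\ge 0$. This always holds, so this configuration fails to give a violation.

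The main obstacle is therefore to find a configuration that genuinely breaks the inequality. The plan is to exploit that $D$ is not subadditive along a single channel: $D(x,0)=1-\sqrt{1-x}$ is convex in $x$. On the sphere factor, take $P,Q,R$ on a great circle with $Q$ at the midpoint, so that $d_1(P,R)=2a$, $d_1(P,Q)=d_1(Q,R)=a$, and all topic distances are zero. The requirement becomes $1-\sqrt{1-2a}\le 2(1-\sqrt{1-a})$. With $a=0.45$ this gives $1-\sqrt{0.1}\approx 0.684 > 2(1-\sqrt{0.55})\approx 0.517$, which violates the inequality. The angular distance is exactly additive along geodesics, so this construction is realizable, and it shows that $1-C_{\GM}$ is not a metric. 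I would close by noting the same mechanism as in Lemma~\ref{lem:quad-position}: a convex increasing transform of a metric is not, in general, a metric.
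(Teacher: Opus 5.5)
Your argument is correct. Parts 1 and 2 are the same as the paper's ($\ell^1$ and $\ell^\infty$ products of the two component metrics). Your counterexample for part 3, however, takes a genuinely different route.

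The paper keeps the angular distances tiny ($0.01,0.01,0.02$) and builds the violation from the veto. The outer pair has $d_{\mathrm{JS}}=1$, hence $D=1$, while the intermediate point is topically far from one endpoint and close to the other. Realizing this requires $K=2$ near-vertex memberships, explicit base-2 JS computations, and a continuity argument to move into the open simplex $\Simplex^{1}_{+}$.

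You instead freeze the topic channel ($T\equiv 1$, so $1-C_{\GM}=1-\sqrt{1-d_1}$). You then use only the strict convexity of $f(x)=1-\sqrt{1-x}$ and the exact additivity of $\dang$ along a great circle. This needs no perturbation and no entropy arithmetic, and the points lie in $\Sphere^{d-1}\times\Simplex^{K-1}_+$ once the common topic vector is taken in the open simplex.

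Your construction is stronger than you state. Strict convexity with $f(0)=0$ gives $f(2a)>2f(a)$ for every $a\in(0,1/2]$. So every geodesic-midpoint triple sharing a topic distribution violates the triangle inequality, at every scale, with margin $\approx a^2/4$ for small $a$. Thus $1-C_{\GM}$ already fails to be a metric on the sphere factor alone, and extreme channel imbalance is not needed for failure. What the paper's example buys in exchange is a large margin ($\approx 0.10$) at small angular distances, and a direct illustration of the veto/imbalance mechanism behind the violations actually observed at $K=2$.

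A few things to clean up:
\begin{itemize}
\item Delete the sentence asserting that $D$ is concave in each coordinate. For fixed $y<1$, the map $x\mapsto 1-\sqrt{1-y}\,\sqrt{1-x}$ is convex, which is exactly what your final step uses.
\item Drop the heuristic that a violation ``must come from the nonlinearity at larger distances''; the observation above refutes it.
\item The convex-transform mechanism you cite is the comment following Lemma~\ref{lem:quad-position} (about squaring $d_\times$), not the lemma itself.
\end{itemize}
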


\begin{proof}
 For~(1): $1-\tfrac{A+T}{2}=\tfrac{(1-A)+(1-T)}{2}=\tfrac{d_1+d_2}{2}$; a sum of metrics is a metric, and positive scaling preserves all metric axioms. For~(2): $1-\min(A,T)=1-\min(1-d_1,1-d_2)=\max(d_1,d_2)$, and the $\ell^\infty$-product of metrics is a metric: for any $x,y,z$, $\max_i d_i(x,z)\le\max_i\bigl(d_i(x,y)+d_i(y,z)\bigr)\le\max_i d_i(x,y)+\max_i d_i(y,z)$.

 For~(3), consider first an abstract three-point space $\{x,y,z\}$ with $d_1(x,y)=d_1(y,z)=0.01$, $d_1(x,z)=0.02$ and $d_2(x,y)=0.99$, $d_2(y,z)=0.01$, $d_2(x,z)=1.00$. Both components satisfy the triangle inequality (with equality in $d_2$, which is permitted). Then $D_{\GM}(x,z)=1.000$ while $D_{\GM}(x,y)+D_{\GM}(y,z)=0.901+0.010=0.911<1.000$, violating the triangle inequality; symmetry and the identity axiom hold, so only the triangle inequality fails. The configuration is realized on the actual component spaces: place the three embeddings along a great circle of $\Sphere^{d-1}$ at normalized angular distances $\dang/\pi = 0.01,\,0.01,\,0.02$ (so $A_{xy}=A_{yz}=0.99$, $A_{xz}=0.98$) and take $K=2$ topic distributions $\hat e_x=(1,0)$, $\hat e_y=(0.01,0.99)$, $\hat e_z=(0,1)$. The base-2 Jensen--Shannon distances are then $d_2(x,y)=0.980$, $d_2(y,z)=0.071$, $d_2(x,z)=1$, giving $D_{\GM}(x,y)=0.858$, $D_{\GM}(y,z)=0.041$, $D_{\GM}(x,z)=1.000$ and the violation $1.000>0.858+0.041=0.899$. A sufficiently small perturbation of $\hat e_x$ and $\hat e_z$ into the open simplex $\Simplex^{1}_{+}$ preserves the strict violation by continuity (the margin at the vertex configuration is $\approx 0.10$). In all cases the violation arises when the intermediate point $y$ has extreme channel imbalance: close on one channel, far on the other.
\end{proof}

\textbf{Empirical check.}
In practice, HDBSCAN soft membership concentrates mass on one or two clusters but retains numerically small probabilities across the others, so membership vectors rarely approach the simplex vertices closely enough to produce the extreme channel imbalance the counterexample requires. Accordingly, no violations were detected among $10^5$ sampled triplets in corpora with $K\ge 11$ (the cross-corpus validation, Section~\ref{sec:exp4}); violations appear only under degenerate clustering ($K=2$, where membership vectors concentrate near the simplex vertices).

\begin{figure}[t]
 \centering
%
\begin{tikzpicture}[
  font=\small,
  axis/.style={->, >=Stealth, thick, gray!80!black},
  cpt/.style={circle, fill=black, inner sep=1.4pt},
  lead/.style={gray!55, thin},
]
  \draw[axis] (0,0) -- (7.5,0) node[right, inner sep=3pt] {$d_1 = \dang/\pi$};
  \draw[axis] (0,0) -- (0,5.7) node[anchor=south west, inner sep=2pt] {$d_2 = d_{\mathrm{JS}}$};

  \draw[gray!60] (3.2,0) -- (3.2,-0.13) node[anchor=north, font=\scriptsize] {$0.01$};
  \draw[gray!60] (6.4,0) -- (6.4,-0.13) node[anchor=north, font=\scriptsize] {$0.02$};
  \draw[gray!60] (0,2.5) -- (-0.13,2.5) node[anchor=east, font=\scriptsize] {$0.5$};
  \draw[gray!60] (0,5.0) -- (-0.13,5.0) node[anchor=east, font=\scriptsize] {$1.0$};

  \coordinate (x) at (0.00, 5.00);   
  \coordinate (y) at (3.20, 0.05);   
  \coordinate (z) at (6.40, 0.00);   

  \draw[blue!70!black, thick] (x) -- (y);
  \draw[blue!70!black, thick] (y) -- (z);
  \draw[red!70!black, very thick] (x) -- (z);

  \node[blue!70!black, font=\scriptsize, sloped, anchor=south]
        at ($(x)!0.5!(y)$) {$D_{\GM}(x,y) = 0.901$};
  \node[red!70!black, font=\scriptsize, sloped, anchor=north]
        at ($(x)!0.62!(z)$) {$D_{\GM}(x,z) = 1.000$};
  \draw[lead] (4.80,0.025) -- (4.80,-0.60);
  \node[blue!70!black, font=\scriptsize, anchor=north] at (4.80,-0.64)
        {$D_{\GM}(y,z) = 0.010$};

  \node[cpt] at (x) {};
  \node[cpt] at (y) {};
  \node[cpt] at (z) {};
  \node[anchor=south west, inner sep=2.5pt] at (x) {$x$};
  \node[anchor=south west, inner sep=3pt]   at (y) {$y$};
  \node[anchor=south west, inner sep=3pt]   at (z) {$z$};

  \node[anchor=north west, align=left, font=\scriptsize,
        draw=gray!55, rounded corners, fill=gray!8, inner sep=5pt]
    at (2.55,4.95) {%
      \textbf{Triangle inequality violated}\\[1pt]
      $D_{\GM}(x,z) = 1.000$\\
      $\;>\; D_{\GM}(x,y) + D_{\GM}(y,z)$\\
      $\;=\; 0.901 + 0.010 = 0.911$};

  \node[gray!55!black, font=\scriptsize\itshape, anchor=north]
        at (1.55,-0.16) {$d_1$ axis exaggerated};
\end{tikzpicture}
 \caption{Geometry of the triangle-inequality counterexample for $1 - C_{\GM}$. The three points $x, y, z$ are plotted in the normalized component-distance plane $(d_1, d_2) = (\dang/\pi, d_{\mathrm{JS}})$; the $d_1$ axis is shown on a greatly exaggerated scale, since the three points span only $d_1 \in [0, 0.02]$. The intermediate point $y$ is close to both endpoints on the angular channel ($d_1 \approx 0.01$ on both edges) and asymmetric on the topic channel ($d_2(x,y) = 0.99$ but $d_2(y,z) = 0.01$). This channel imbalance is what produces the violation $D_{\GM}(x, z) = 1.000 > 0.911 = D_{\GM}(x, y) + D_{\GM}(y, z)$.\label{fig:triangle-violation}}
\end{figure}

Figure~\ref{fig:triangle-violation} renders the configuration geometrically: the violation requires an intermediate point that is close on one channel and far on the other, which is exactly the channel-imbalance regime that scale complementarity (Remark~\ref{def:complementarity}) and smooth clustering rule out in practice.

\begin{Proposition}[Connection to the Hellinger distance]\label{prop:hellinger}
 Let $H(p,q)=\bigl(\sum_k(\sqrt{p_k}-\sqrt{q_k})^2\bigr)^{1/2}$ be the Hellinger distance and $\mathrm{BC}(p,q)=\sum_k\sqrt{p_k q_k}$ the Bhattacharyya coefficient. Then, exactly and at all distances,
 \begin{equation}\label{eq:hellinger-chord}
  H^2 \;=\; 2\,(1-\mathrm{BC})
  \qquad\text{and}\qquad
  H \;=\; 2\sin\!\bigl(\dFR/2\bigr),
 \end{equation}
 i.e.\ $H$ is the chordal distance subtended by the Fisher--Rao arc under the square-root embedding. Moreover, for all $p,q\in\Simplex^{K-1}$,
 \begin{equation}\label{eq:hellinger-sandwich}
  \frac{H}{2\sqrt{\ln 2}} \;\le\; d_{\mathrm{JS}} \;\le\; \frac{H}{\sqrt{2\ln 2}}\,,
 \end{equation}
 with the upper bound tight to first order as $p\to q$. Consequently, wherever $H<\sqrt{2\ln 2}$, the topic cost of the exact decomposition~\eqref{eq:exact-decomp} satisfies
 \begin{equation}
  -\log\Bigl(1-\tfrac{H}{2\sqrt{\ln 2}}\Bigr)
  \;\le\; -\log\bigl(1-d_{\mathrm{JS}}\bigr) \;\le\;
  -\log\Bigl(1-\tfrac{H}{\sqrt{2\ln 2}}\Bigr),
 \end{equation}
 and to first order in $\dFR$, $-\log(1-d_{\mathrm{JS}})\approx-\log\bigl(1-H/\sqrt{2\ln 2}\bigr)$.
\end{Proposition}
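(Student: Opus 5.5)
The proof splits into three pieces: the exact identities \eqref{eq:hellinger-chord}, the sandwich \eqref{eq:hellinger-sandwich}, and the transfer to the topic cost.

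\textbf{The exact identities.} Expanding the square gives
\[
H^2=\sum_k p_k+\sum_k q_k-2\sum_k\sqrt{p_kq_k}=2(1-\mathrm{BC}),
\]
since both distributions sum to one. For the chordal form, note that under $\eta_k=\sqrt{p_k}$ both points lie on the unit sphere, and $\eta(p)\cdot\eta(q)=\mathrm{BC}=\cos\dFR$ by \eqref{eq:fr-distance}. Hence
\[
H^2=\|\eta(p)-\eta(q)\|^2=2-2\cos\dFR=4\sin^2(\dFR/2).
\]
Taking nonnegative square roots, using $\dFR\in[0,\pi/2]$, gives $H=2\sin(\dFR/2)$. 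This step is routine.

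\textbf{The sandwich.} I would invoke the Tops{\o}e bound already quoted in Section~\ref{sec:prelim-ig}:
\[
\tfrac12(1-\mathrm{BC})\le\JSD_{\mathrm{nat}}\le 1-\mathrm{BC}.
\]
Substituting $1-\mathrm{BC}=H^2/2$ gives $H^2/4\le\JSD_{\mathrm{nat}}\le H^2/2$. Dividing by $\ln 2$ to pass to the base-2 $\JSD$ and taking square roots yields $H/(2\sqrt{\ln2})\le d_{\mathrm{JS}}\le H/\sqrt{2\ln2}$.

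For tightness of the upper bound, combine \eqref{eq:djs-fisher}, i.e.\ $d_{\mathrm{JS}}=\dFR/\sqrt{2\ln2}+O(\|\delta p\|^2)$, with $H=2\sin(\dFR/2)=\dFR+O(\dFR^3)$. The ratio $d_{\mathrm{JS}}\sqrt{2\ln2}/H$ then tends to $1$ as $p\to q$.

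\textbf{Main obstacle.} If I wanted a self-contained argument rather than citing Tops{\o}e, the hard part would be the lower bound $\JSD_{\mathrm{nat}}\ge\tfrac12(1-\mathrm{BC})$ at finite distances. I would attempt it coordinatewise. The task is to show that
\[
\tfrac12 a\ln\tfrac{2a}{a+b}+\tfrac12 b\ln\tfrac{2b}{a+b}\ \ge\ \tfrac14\bigl(\sqrt a-\sqrt b\bigr)^2
\]
for $a,b\ge0$, and then sum over $k$.

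By homogeneity this reduces to a one-variable inequality in $t=b/a$, to be checked by calculus: equality at $t=1$ and a derivative sign argument. The upper bound follows the same way, with the reverse coordinatewise inequality. Given the paper's conventions, citing \citep{topsoe2000some} is acceptable, so I would cite it and leave the coordinatewise check as a remark.

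\textbf{Topic-cost bounds.} The map $x\mapsto-\log(1-x)$ is strictly increasing on $[0,1)$. The hypothesis $H<\sqrt{2\ln2}$ ensures that both outer arguments, $H/(2\sqrt{\ln2})$ and $H/\sqrt{2\ln2}$, lie in $[0,1)$. Applying this map to the sandwich preserves the inequalities and gives the displayed bounds on the topic cost.

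The final first-order statement follows from first-order tightness of the upper bound, $d_{\mathrm{JS}}=H/\sqrt{2\ln2}+O(\dFR^2)$, together with smoothness of $-\log(1-x)$ near $0$.
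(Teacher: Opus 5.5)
Your proposal is correct and follows the paper's proof essentially step for step. It expands the square and uses $\mathrm{BC}=\cos\dFR$ for the chord identity, divides Tops{\o}e's bound $\tfrac12(1-\mathrm{BC})\le\JSD_{\mathrm{nat}}\le 1-\mathrm{BC}$ by $\ln 2$ and takes square roots, gets first-order tightness from the near-diagonal Fisher expansion, and transfers the bounds to the cost via monotonicity of $x\mapsto-\log(1-x)$. The one immaterial difference is that the paper compares $\JSD_{\mathrm{nat}}$ and $\tfrac12 H^2$ directly with $\tfrac12\dFR^2$, whereas you take the ratio of square roots.

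One caveat applies equally to the paper's argument: because the tightness step rests on the Fisher expansion, it holds for $p$ in the open simplex $\Simplex^{K-1}_+$. For example, with $p=(1,0)$ and $q=(1-\varepsilon,\varepsilon)$ the ratio $d_{\mathrm{JS}}\sqrt{2\ln 2}/H$ tends to $\sqrt{\ln 2}$ rather than $1$.
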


\begin{proof}
 Expanding the square, $H^2=\sum_k p_k-2\sum_k\sqrt{p_kq_k}+\sum_k q_k=2(1-\mathrm{BC})$. By the definition~\eqref{eq:fr-distance} of the Bhattacharyya angle, $\mathrm{BC}=\cos\dFR$, so $H^2=2(1-\cos\dFR)=4\sin^2(\dFR/2)$, proving~\eqref{eq:hellinger-chord}; under $\eta_k=\sqrt{p_k}$ the distributions lie on the unit sphere, where $\dFR$ is the arc and $H$ the chord. For~\eqref{eq:hellinger-sandwich}, the bound $\tfrac12(1-\mathrm{BC})\le\JSD_{\mathrm{nat}}\le 1-\mathrm{BC}$ \citep{topsoe2000some} reads $\tfrac14H^2\le\JSD_{\mathrm{nat}}\le\tfrac12H^2$; dividing by $\ln 2$ and taking square roots gives the claim. Near the diagonal, $\JSD_{\mathrm{nat}}=\tfrac12\dFR^2+O(\dFR^4)$ (identity~\eqref{eq:jsd-fisher}, the cubic term vanishing) while $\tfrac12H^2=\tfrac12\dFR^2+O(\dFR^4)$, so the upper bound is tight to first order. The cost sandwich follows from the monotonicity of $x\mapsto-\log(1-x)$; the lower expression is defined for all $H\le\sqrt2$ since $H/(2\sqrt{\ln 2})\le\sqrt2/(2\sqrt{\ln 2})<1$, the upper for $H<\sqrt{2\ln 2}$.
\end{proof}

\textbf{Interpretation.}
Both $d_{\mathrm{JS}}$ and $H/\sqrt{2\ln 2}$ are first-order in the Fisher--Rao distance with the same constant, and~\eqref{eq:hellinger-sandwich} bounds their finite-distance discrepancy by a factor $\sqrt2$. They are not interchangeable as similarity measures, however: $1-H/\sqrt{2\ln 2}$ becomes negative for near-disjoint supports ($H>\sqrt{2\ln 2}$), so the Hellinger substitute is not $[0,1]$-bounded---concretely the boundedness advantage that Section~\ref{sec:topic-compat} cites in favour of $d_{\mathrm{JS}}$. Combining~\eqref{eq:hellinger-chord} and~\eqref{eq:hellinger-sandwich} with $2\sin(\dFR/2)\le\dFR$ also gives the pointwise exact bound $d_{\mathrm{JS}}\le\dFR/\sqrt{2\ln 2}$: the scatter of Figure~\ref{fig:approx} can only lie on or below the identity line, as observed.

\textbf{Behavior at extremes.}
If $A=0$ (antipodal embeddings) or $T=0$ ($d_{\mathrm{JS}}=1$, maximally different topics), then $C=0$, so the geometric mean enforces that \emph{both} channels must contribute non-trivially. This ``veto'' property is desirable for coherence because a transition that is semantically coherent but topically incoherent (or vice versa) should receive zero coherence. In practice, the veto is redundant for unconstrained maximin extraction, because such paths inherently avoid low-weight edges when alternative routes exist. It does, however, provide a hard safety guarantee for extraction algorithms that impose global constraints (the coverage, flow, and cardinality requirements of the Narrative Maps linear program) which may force otherwise-avoidable edges.

\section{Connections to Classical Information Theory}\label{sec:connections}
This section connects the composite coherence metric to classical information-theoretic concepts: the data processing inequality, and the identification of each per-channel cost with an exact information-theoretic quantity. These connections are not required by the core results of Sections~\ref{sec:product} to~\ref{sec:properties}, but they make precise why the two channels, despite being structurally coupled (topics are derived from embeddings), capture genuinely different information.

\subsection{Multi-Scale Complementarity}\label{sec:dpi}
The pipeline for computing topic distributions involves successive information-reducing transformations:

\begin{equation}
 e_i \in \R^d
 \;\xrightarrow{\;\mathrm{UMAP}\;}
 \tilde{e}_i \in \R^p
 \;\xrightarrow{\;\mathrm{HDBSCAN}\;}
 \hat{e}_i \in \Simplex^{K-1}\,.
\end{equation}

Since the topic channel is derived from the embedding channel via a lossy pipeline, statistical independence between $A$ and $T$ is structurally impossible. The relevant property is not independence but \emph{complementarity}: the two channels should capture information at different \emph{scales of abstraction}, from fine-grained semantic direction ($A$) to coarse-grained thematic membership ($T$), so that combining them yields a richer similarity than either alone.

\begin{Proposition}[Information hierarchy]\label{prop:dpi}
 Let $(e_i, e_j)$ be a pair of document embeddings with arbitrary joint distribution (for instance, a uniformly random pair from the corpus, or the endpoints of a random transition), and let $\tilde{e} = \mathrm{UMAP}(e)$ and $\hat{e} = \mathrm{HDBSCAN}(\tilde{e})$ be the reduced representations and topic distributions, the fitted pipeline being treated as a fixed, measurable, deterministic map (it is fit once on the corpus, with fixed seed, Section~\ref{sec:prelim-metric}; the randomness refers only to the draw of the pair). Then
 \begin{equation}
  I(e_i; e_j) \;\ge\; I(\tilde{e}_i; \tilde{e}_j) \;\ge\;
  I(\hat{e}_i; \hat{e}_j)\,.
 \end{equation}
\end{Proposition}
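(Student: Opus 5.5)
The plan is to apply the data processing inequality twice, once per stage of the pipeline, using only that the fitted maps are fixed deterministic measurable functions applied coordinatewise to each member of the pair. Write $\Phi=\mathrm{UMAP}$ and $\Psi=\mathrm{HDBSCAN}$ for these maps, so that $\tilde e_i=\Phi(e_i)$, $\tilde e_j=\Phi(e_j)$, $\hat e_i=\Psi(\tilde e_i)$ and $\hat e_j=\Psi(\tilde e_j)$. The key structural point is that processing acts on \emph{both} arguments of the mutual information. The argument therefore goes through the ordinary single-sided data processing inequality twice for each stage.

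For the first inequality, consider the Markov chain $\tilde e_i \to e_i \to e_j$. It holds trivially because $\tilde e_i$ is a deterministic function of $e_i$, so that $\tilde e_i$ is conditionally independent of $e_j$ given $e_i$. The data processing inequality then gives $I(e_i;e_j)\ge I(\tilde e_i;e_j)$. Next, by symmetry of mutual information, consider the chain $\tilde e_i \to e_j \to \tilde e_j$. This is Markov because $\tilde e_j=\Phi(e_j)$ is determined by $e_j$, and it gives $I(\tilde e_i;e_j)\ge I(\tilde e_i;\tilde e_j)$. Chaining these two bounds yields $I(e_i;e_j)\ge I(\tilde e_i;\tilde e_j)$. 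The second inequality follows by the identical two-step argument with $\Psi$ in place of $\Phi$ and $(\tilde e_i,\tilde e_j)$ in place of $(e_i,e_j)$. Equivalently, one can note that $(\hat e_i,\hat e_j)$ is the image of $(\tilde e_i,\tilde e_j)$ under the product map $\Psi\times\Psi$ and invoke the two-sided form $I(X;Y)\ge I(f(X);g(Y))$ directly.

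The main obstacle is measure-theoretic rather than combinatorial. The embeddings live in $\R^d$ and may have continuous or even singular joint laws, so $I(e_i;e_j)$ could be $+\infty$. A further complication is that $e_i$ and $e_j$ may coincide with positive probability when the pair is drawn from a finite corpus. I would handle this by working with the general definition $I(X;Y)=\KL(P_{XY}\,\|\,P_X\otimes P_Y)$, which allows the value $+\infty$. For this definition, the inequality $I(X;Y)\ge I(f(X);g(Y))$ for measurable $f,g$ is the monotonicity of relative entropy under the measurable map $f\times g$. That monotonicity holds because pushforward is a Markov kernel, and it sends $P_X\otimes P_Y$ to $P_{f(X)}\otimes P_{g(Y)}$, so it applies directly. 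With this formulation the statement holds with the convention that $\infty\ge\infty$, and no density assumptions are needed. Measurability of the fitted UMAP and soft-HDBSCAN maps is assumed in the statement, so nothing further must be checked there.

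Finally, the construction only requires that the maps be fixed. If the pipeline were refit per pair, so that it was random and coupled to the sample, the Markov property would fail. This is precisely why the statement insists on a single fitted pipeline with a fixed seed.
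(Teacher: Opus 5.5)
Your proof is correct and follows the paper's route: it uses the two Markov chains $\tilde e_i - e_i - e_j$ and $\tilde e_i - e_j - \tilde e_j$ (the paper's $X'-X-Y$ and $X'-Y-Y'$) to get $I(\tilde e_i;\tilde e_j)\le I(\tilde e_i;e_j)\le I(e_i;e_j)$, then repeats this for the HDBSCAN stage. Your measure-theoretic remark, which writes mutual information as a relative entropy valued in $[0,\infty]$ and uses its monotonicity under the product map $f\times g$, spells out what the paper states in a single clause.
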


\begin{proof}
 We use the following consequence of the data processing inequality \citep{cover2006elements}: if $X' = f(X)$ and $Y' = g(Y)$ are deterministic functions, then $X' - X - Y$ and $X' - Y - Y'$ are Markov chains, so applying the inequality twice,
 \[
  I(X'; Y') \;\le\; I(X'; Y) \;\le\; I(X; Y)
 \]
 (all quantities are well defined in $[0,\infty]$ and the inequalities hold there). Applying this with $(X, Y) = (e_i, e_j)$ and $f = g = \mathrm{UMAP}$ gives $I(\tilde{e}_i; \tilde{e}_j) \le I(e_i; e_j)$; applying it again with $(X, Y) = (\tilde{e}_i, \tilde{e}_j)$ and $f = g = \mathrm{HDBSCAN}$ gives $I(\hat{e}_i; \hat{e}_j) \le I(\tilde{e}_i; \tilde{e}_j)$.
\end{proof}

\textbf{Interpretation.} The proposition fixes the \emph{direction} of information flow: the topic channel, computed from the $K$-dimensional clustered representation ($K \ll d$), cannot carry more pairwise information than the angular channel, computed from the full $d$-dimensional embedding---it is a genuine compression. The inequality bounds the \emph{amount} of information retained, not its content; that the discarded information is specifically fine-grained directional structure is an interpretation, while the load-bearing empirical fact is the separately measured non-redundancy of the two channels (the scale-complementarity test, Section~\ref{sec:exp3}).

\begin{Remark}[Scale complementarity as a design principle]\label{def:complementarity}
 Two similarity channels $A$ and $T$ derived from the same underlying representations exhibit \emph{scale complementarity} when three qualitative conditions hold.
 \begin{enumerate}[nosep]
  \item \textbf{Non-redundancy:} $\mathrm{NMI}(A,T) \ll 1$, i.e., the channels share a small fraction of their information content despite the structural coupling. In our experiments, $\mathrm{NMI} \approx 0.03$ (well below $0.05$) across all corpora with $K \ge 11$ (Section~\ref{sec:exp3}).
  \item \textbf{Scale separation:} $T$ is obtained from $A$'s representation space via a many-to-one mapping (compression) that discards fine-grained structure while preserving coarse-grained structure, so the channels capture different levels of abstraction. This is verified structurally by the UMAP$\to$HDBSCAN pipeline (Proposition~\ref{prop:dpi}).
  \item \textbf{Composite informativeness:} The composite $C = F(A,T)$ is not a function of $A$ alone, nor of $T$ alone. For $C_{\GM} = \sqrt{AT}$ this is immediate: two pairs with equal $A$ but different $T$ receive different $C_{\GM}$.
 \end{enumerate}
 Scale complementarity is a \emph{design principle}, not a formal prerequisite: the axiomatic characterization (Theorem~\ref{thm:axioms}) and product metric (Theorem~\ref{thm:metric-gm}) hold at any correlation level. It guides practitioners toward pipelines where the composite is genuinely more informative than either channel alone; the scale-complementarity test (Section~\ref{sec:exp3}) confirms this for the present pipeline.
\end{Remark}

Condition~1 is verified empirically in Section~\ref{sec:exp3}, Condition~2 holds structurally because the pipeline is a deterministic compression (Proposition~\ref{prop:dpi}), and Condition~3 holds for $C_{\GM}$ by the elementary argument above; the joint $(A,T)$ distribution (Figure~\ref{fig:correlation}) further shows the corpus pairs populate a genuinely two-dimensional region rather than collapsing onto a curve $T = f(A)$, so the non-redundancy is exercised in practice and not merely possible in principle.

\subsection{Information-Theoretic Content of the Per-Channel Costs}\label{sec:mi}
The log-coherence cost decomposition
\begin{equation}\label{eq:mi-decomp}
 -\log C_{\GM} \;=\; \tfrac{1}{2}(-\log A) \;+\; \tfrac{1}{2}(-\log T)
\end{equation}
is an algebraic identity (Proposition~\ref{prop:log-additive}) that holds at any correlation level. Each of the two per-channel costs admits an exact information-theoretic identity. Both are classical results; we record them because together they pin down the information-theoretic content of~\eqref{eq:mi-decomp}, in place of an informal reading of $-\log A$ and $-\log T$ as ``rates.''

\textbf{The angular cost.} The angular similarity is exactly a hash-collision probability.

\begin{Proposition}[Angular similarity as collision probability; \citealp{goemans1995improved,charikar2002similarity}]\label{prop:simhash}
 Let $u, v \in \R^d \setminus \{0\}$ be two document embeddings separated by angle $\theta \in [0,\pi]$, so $A = 1 - \theta/\pi$. Let $g \in \R^d$ be isotropically random with $\Pr[g = 0] = 0$ (e.g.\ a standard Gaussian), inducing the one-bit hash $h_g(x) = \operatorname{sign}\langle g, x\rangle$. Then
 \begin{equation}\label{eq:simhash}
  A \;=\; \Pr\nolimits_{g}\bigl[\,h_g(u) = h_g(v)\,\bigr],
 \end{equation}
 the probability that the uniformly random hyperplane $g^\perp$ through the origin does not separate $u$ and $v$.
\end{Proposition}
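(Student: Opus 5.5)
I will prove the identity by the classical Goemans--Williamson/Charikar reduction: isolate the two-dimensional plane spanned by $u$ and $v$, and use isotropy to make the relevant direction of $g$ uniform on a circle.

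\emph{Step 1: reduction to the plane.} Since $h_g$ depends only on the sign of $\langle g,x\rangle$, I may replace $u,v$ by the unit vectors $\bar u=u/\|u\|$ and $\bar v=v/\|v\|$. These span a plane $P$, or a line if $\theta\in\{0,\pi\}$.
\begin{itemize}
\item If $\theta=0$, then $\bar u=\bar v$. The two hashes agree whenever $\langle g,\bar u\rangle\neq 0$, which holds almost surely, so $\Pr=1=A$.
\item If $\theta=\pi$, then $\bar v=-\bar u$. The signs are opposite almost surely, so $\Pr=0=A$.
\end{itemize}
The only point needing care here is the null event $\langle g,\bar u\rangle=0$. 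For an isotropic $g$ with no atom at the origin, this event is a great sphere of the direction $g/\|g\|$ and so has probability zero. I will state this explicitly, because the convention $\operatorname{sign}0$ is otherwise ambiguous.

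\emph{Step 2: the projected direction is uniform.} For $0<\theta<\pi$, write $g=g_P+g_{P^\perp}$. Then $\langle g,\bar u\rangle=\langle g_P,\bar u\rangle$, and likewise for $\bar v$, so only $g_P$ matters. Isotropy means the law of $g$ is invariant under orthogonal maps. Rotations of $P$ that fix $P^\perp$ pointwise are such maps, so the law of $g_P$ is rotation-invariant in $P$. Moreover $g_P\neq 0$ almost surely: the set $\{g_P=0\}$ is a subspace of codimension $2$, which has measure zero under the uniform law of $g/\|g\|$, given $\Pr[g=0]=0$. Hence the angle $\phi$ of $g_P$ in $P$ is uniform on $[0,2\pi)$.

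I expect this step to be the main obstacle, because ``isotropic'' must be read as invariance under orthogonal transformations. For a Gaussian the claim is immediate, since $g_P$ is a standard $2$-dimensional Gaussian. In general I would argue through the uniform direction on $\Sphere^{d-1}$, whose projection onto $P$ has a rotation-invariant law in the plane.

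\emph{Step 3: counting separating lines.} Place $\bar u$ at angle $0$ and $\bar v$ at angle $\theta$ in $P$. Then
\[
\langle g_P,\bar u\rangle=\|g_P\|\cos\phi,\qquad \langle g_P,\bar v\rangle=\|g_P\|\cos(\phi-\theta).
\]
The hashes differ exactly when the line $g_P^\perp\cap P$ passes strictly between $\bar u$ and $\bar v$. That line is at angle $\phi\pm\pi/2$, so this occurs for $\phi$ in two arcs of length $\theta$ each, total measure $2\theta$ out of $2\pi$. By uniformity of $\phi$, $\Pr[h_g(u)\neq h_g(v)]=\theta/\pi$, and therefore
\[
\Pr[h_g(u)=h_g(v)]=1-\theta/\pi=A .
\]
This is the hyperplane-rounding lemma of \citet{goemans1995improved}, reused by \citet{charikar2002similarity}.
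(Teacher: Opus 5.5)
Your proof is correct and follows essentially the same route as the paper's. You dispose of $\theta\in\{0,\pi\}$, reduce to the projection of $g$ onto $\operatorname{span}(u,v)$, use plane rotations extended by the identity on the complement to make the projected direction uniform on the circle, and count the two separating arcs of length $\theta$ each. The only cosmetic difference is that you obtain the almost-sure non-vanishing of the inner products from the uniform law of $g/\|g\|$, whereas the paper conditions on the radius of $g$; the two arguments are equivalent.
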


\begin{proof}
 Conditioning on the radius of $g$, its law is a mixture of uniform measures on spheres, each of which assigns zero mass to any proper subspace; hence $\langle g, u\rangle \ne 0 \ne \langle g, v\rangle$ almost surely and the hashes take values in $\{\pm 1\}$. If $\theta \in \{0, \pi\}$ the identity is immediate: $v$ is a positive (resp.\ negative) multiple of $u$, so the signs agree (resp.\ differ) almost surely, matching $A = 1$ (resp.\ $A = 0$). Assume $0 < \theta < \pi$, so that $\operatorname{span}(u,v)$ is a plane. The signs $\operatorname{sign}\langle g,u\rangle$ and $\operatorname{sign}\langle g,v\rangle$ depend on $g$ only through its orthogonal projection onto this plane, which is almost surely nonzero and whose direction is uniform on the unit circle of the plane: in-plane rotations, extended by the identity on the orthogonal complement, commute with the projection and preserve the law of $g$. The hyperplane $g^\perp$ separates $u$ and $v$ exactly when that direction falls in the symmetric difference of the two half-circles $\{w : \langle w, u\rangle > 0\}$ and $\{w : \langle w, v\rangle > 0\}$, a union of two arcs of length $\theta$ each---total measure $2\theta$ out of $2\pi$. Hence $\Pr[\text{separate}] = \theta/\pi$ and $\Pr\bigl[h_g(u) = h_g(v)\bigr] = 1 - \theta/\pi = A$.
\end{proof}

This is the random-hyperplane (SimHash) identity. It gives the angular cost an exact reading: $-\log A$ is the Shannon surprisal of a hash collision---the ideal codeword length for the event that a random hyperplane fails to separate the two documents.

\textbf{The topic cost.} The Jensen-Shannon divergence underlying the topic channel is exactly a mutual information.

\begin{Proposition}[Topic dissimilarity as mutual information; \citealp{lin1991divergence}]\label{prop:jsd-mi}
 Let $p, q \in \Simplex^{K-1}$ be two topic distributions (with the convention $0\log 0 = 0$). Introduce a latent label $Z \sim \mathrm{Bernoulli}(\tfrac{1}{2})$ and a sample $X \in \{1,\dots,K\}$ drawn from $p$ if $Z = 0$ and from $q$ if $Z = 1$. Then, with base-2 logarithms,
 \begin{equation}\label{eq:jsd-mi}
  \JSD(p,q) \;=\; I(X;Z),
 \end{equation}
 the mutual information between the sample and the label, and $0 \le I(X;Z) \le 1$ bit, with $I(X;Z) = 0$ iff $p = q$ and $I(X;Z) = 1$ iff $p$ and $q$ have disjoint support.
\end{Proposition}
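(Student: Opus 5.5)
The plan is a direct computation of $I(X;Z)$ through the entropy decomposition $I(X;Z)=H(X)-H(X\mid Z)$, followed by a comparison with the defining formula for $\JSD$.

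\textbf{Step 1: identify the two entropies.} The marginal law of $X$ is the mixture $m=\tfrac12(p+q)$, because $\Pr[X=k]=\tfrac12 p_k+\tfrac12 q_k$. Hence $H(X)=H(m)$. Conditionally on $Z=0$ (resp.\ $Z=1$), $X$ has law $p$ (resp.\ $q$). Hence $H(X\mid Z)=\tfrac12 H(p)+\tfrac12 H(q)$. Combining these gives
\[
I(X;Z)=H(m)-\tfrac12\bigl[H(p)+H(q)\bigr],
\]
which is the entropy form of $\JSD$ already quoted in Section~\ref{sec:topic-compat}.

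\textbf{Step 2: match the KL definition.} To avoid relying on that quoted form, I will check it against $\JSD(p,q)=\tfrac12\KL(p\|m)+\tfrac12\KL(q\|m)$. Expand $\sum_k p_k\log(p_k/m_k)+\sum_k q_k\log(q_k/m_k)$ and use $\sum_k(p_k+q_k)\log m_k=2\sum_k m_k\log m_k$. The convention $0\log0=0$ handles coordinates where $m_k>0$ but $p_k=0$. Coordinates with $m_k=0$ force $p_k=q_k=0$, so they contribute nothing. Equivalently, one can write $I(X;Z)=\sum_z\Pr[Z=z]\,\KL(P_{X\mid Z=z}\|P_X)$, which reproduces the KL form term by term.

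\textbf{Step 3: the bounds and equality cases.}
\begin{itemize}
\item Nonnegativity holds because mutual information is nonnegative.
\item The upper bound is $I(X;Z)\le H(Z)=1$ bit, since $Z$ is a fair coin and we use base-2 logarithms.
\item $I(X;Z)=0$ iff $X$ and $Z$ are independent. That means the conditional laws coincide, i.e.\ $p=q$.
\item $I(X;Z)=1$ iff $H(Z\mid X)=0$, i.e.\ $Z$ is a deterministic function of $X$ on the support of $X$. This says that no $k$ has both $p_k>0$ and $q_k>0$, which is exactly disjoint supports. Conversely, disjoint supports make $Z$ recoverable from $X$, so $H(Z\mid X)=0$.
\end{itemize}

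\textbf{Main obstacle.} The argument is essentially classical. The only step needing care is the bookkeeping of zero-probability coordinates in Step 2 and in the equality case $I=1$. I would handle it by restricting every sum to the support of $m$ before manipulating logarithms.
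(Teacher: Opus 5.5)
Your proposal is correct and follows essentially the same route as the paper. It uses the entropy decomposition $I(X;Z)=H(X)-H(X\mid Z)=H(m)-\tfrac12[H(p)+H(q)]$, the bounds $0\le I(X;Z)\le H(Z)=1$, and the equality cases via independence ($p=q$) and via $H(Z\mid X)=0$; the paper phrases the last step as the posterior $p_k/(p_k+q_k)$ lying in $\{0,1\}$ on the support of $m$, and your Step~2 check of the entropy form against the KL definition of $\JSD$ is a small verification the paper leaves implicit.
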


\begin{proof}
 The marginal law of $X$ is the mixture $M = \tfrac{1}{2}p + \tfrac{1}{2}q$, so $I(X;Z) = H(X) - H(X \mid Z) = H(M) - \tfrac{1}{2}H(p) - \tfrac{1}{2}H(q)$, the uniformly-weighted Jensen-Shannon divergence. The bounds follow from $0 \le I(X;Z) \le H(Z) = 1$. The lower bound is attained iff $X \perp Z$; since both values of $Z$ have positive probability, this holds iff the two conditional laws coincide, i.e.\ $p = q$. The upper bound is attained iff $H(Z \mid X) = 0$, i.e.\ $Z$ is almost surely determined by $X$; the posterior $\Pr[Z = 0 \mid X = k] = p_k/(p_k + q_k)$ for $k$ in the support of $M$ lies in $\{0,1\}$ for every such $k$ iff no $k$ has $p_k q_k > 0$, i.e.\ iff the supports are disjoint.
\end{proof}

\begin{Corollary}[Topic cost as distinguishability]\label{cor:topic-cost}
 On $I(X;Z) \in [0,1)$, the topic cost satisfies $-\log T = -\log\bigl(1 - \sqrt{I(X;Z)}\bigr)$, a strictly increasing function of $I(X;Z)$: it is an exact, strictly increasing function of the information a single observation carries about which of the two topic distributions generated it. As $I(X;Z) \to 1$ (disjoint supports), $T \to 0$ and the cost diverges---the information-theoretic face of the veto property (Section~\ref{sec:properties}).
\end{Corollary}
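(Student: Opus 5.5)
The plan is to reduce the corollary to Proposition~\ref{prop:jsd-mi} and elementary monotonicity. By that proposition, with base-2 logarithms, $\JSD(p,q)=I(X;Z)$ holds exactly. Since $d_{\mathrm{JS}}=\sqrt{\JSD}$ and $T=1-d_{\mathrm{JS}}$, substitution gives $T=1-\sqrt{I(X;Z)}$. Hence, wherever $T>0$,
\[
-\log T=-\log\bigl(1-\sqrt{I(X;Z)}\bigr).
\]
Positivity of $T$ is equivalent to $I(X;Z)<1$, so the identity holds precisely on the stated range $I(X;Z)\in[0,1)$. By Proposition~\ref{prop:jsd-mi}, this range is exactly the case where $p$ and $q$ do not have disjoint supports.

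Next I will establish strict monotonicity. I write the cost as the composition $\psi=h\circ s$ on $[0,1)$, where $s(x)=\sqrt{x}$ maps $[0,1)$ onto $[0,1)$ and $h(y)=-\log(1-y)$. The map $s$ is strictly increasing on this interval. The map $h$ is strictly increasing on $[0,1)$, as recorded in Proposition~\ref{thm:geodesic}. A composition of strictly increasing maps is strictly increasing, so $\psi$ is strictly increasing. The interpretive clause, that $I(X;Z)$ is the information one observation $X$ carries about the generating label $Z$, is simply the definition of the mutual information in Proposition~\ref{prop:jsd-mi}, so nothing further needs proving there.

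For the limiting statement, let $I(X;Z)\to1^-$. Then $\sqrt{I(X;Z)}\to1$, so $T\to0^+$, and $-\log T\to+\infty$ because $-\log y\to\infty$ as $y\to0^+$. At the endpoint, Proposition~\ref{prop:jsd-mi} says that $I(X;Z)=1$ exactly when the supports are disjoint. In that case $d_{\mathrm{JS}}=1$ and $T=0$, which matches the veto behavior $C=0$ described in Section~\ref{sec:properties}.

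The one step needing care is bookkeeping rather than mathematics. I must keep the logarithm base consistent, since the identity $\JSD=I(X;Z)$ and the normalization $d_{\mathrm{JS}}\in[0,1]$ both rely on base 2. I must also exclude the endpoint $I=1$, where the cost is undefined, as the statement does. No other obstacle is expected.
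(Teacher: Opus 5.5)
Your proposal is correct and follows essentially the same route as the paper: substitute $\JSD = I(X;Z)$ from Proposition~\ref{prop:jsd-mi} into $T = 1 - \sqrt{\JSD}$, then write the cost as the composition of the strictly increasing maps $t\mapsto\sqrt{t}$ and $s\mapsto -\log(1-s)$ on $[0,1)$. Your explicit treatment of the limit $I(X;Z)\to 1$ and of the endpoint (disjoint supports, $T=0$) is a small addition that the paper leaves implicit; it does not change the argument.
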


\begin{proof}
 $T = 1 - d_{\mathrm{JS}} = 1 - \sqrt{\JSD} = 1 - \sqrt{I(X;Z)}$ by Proposition~\ref{prop:jsd-mi}, so $-\log T = (g \circ f)\bigl(I(X;Z)\bigr)$ with $f(t) = \sqrt{t}$ and $g(s) = -\log(1-s)$, a composition of the strictly increasing maps $f\colon [0,1) \to [0,1)$ and $g\colon [0,1) \to [0,\infty)$.
\end{proof}

\textbf{Reading the decomposition.} Equation~\eqref{eq:mi-decomp} is thus an identity between three information-theoretic costs: the composite cost is the arithmetic mean of an angular collision-surprisal and a topic cost monotone in a mutual information. The arithmetic mean (not a sum) is forced by the calibration axiom~\textbf{A4}: $A = T = a$ must give composite cost $-\log a$, fixing the prefactor $\tfrac{1}{2}$; summation is instead the aggregation \emph{across edges of a path}, $\sum_e -\log C_e$. That the two identities are of different type---a collision probability and a mutual information---is faithful, not incidental: the angular channel measures geometric agreement of directions and the topic channel distributional distinguishability, the two ``levels of abstraction'' of scale complementarity (Remark~\ref{def:complementarity}). Under scale complementarity (Section~\ref{sec:exp3}) the costs are also approximately uncorrelated, so they can be diagnostically separated along extracted paths (Section~\ref{sec:five-arguments})---a contingent benefit of the corpus statistics, since~\eqref{eq:mi-decomp} holds at any correlation level.

\textbf{A heuristic rate-distortion reading.} The decomposition~\eqref{eq:mi-decomp} can also be read informally through a rate-distortion analogy, in which approximately independent channels with logarithmic rate-distortion functions make $(D_A D_T)^{1/2}$ a natural scalar distortion summary. That analogy rests on two non-factual assumptions (independence and a logarithmic rate-distortion form) and is not a derivation; we record it in the Supplementary Materials (Section~S6). The identities of this subsection, by contrast, are exact.

\subsection{Soft Membership and Topic Resolution}\label{sec:maxent}
\begin{Remark}[Soft membership preserves manifold structure]\label{prop:soft-membership}
 If $\hat{e}_i \in \Simplex^{K-1}_+$ has full support (all $\hat{e}_{ik} > 0$), it lies in the interior of the simplex where the Fisher-Rao metric is well-defined and non-degenerate, $T$ takes values in the full interval $(0, 1]$, and $d_\times$ inherits the full Riemannian structure. By contrast, hard assignment restricts membership vectors to the vertices of $\Simplex^{K-1}$, where the metric tensor degenerates ($1/p_k \to \infty$ as $p_k \to 0$) and $T$ collapses to a binary same/different signal ($d_{\mathrm{JS}} \in \{0,1\}$) that provides no graded similarity information.
\end{Remark}

HDBSCAN's soft membership assignment \citep{campello2013density} produces membership vectors with full support in practice, so the topic channel operates on the interior of the statistical manifold. This is a prerequisite for the product-manifold framework of Section~\ref{sec:product}: hard assignment would collapse the Fisher-Rao geometry entirely. A component ablation in the Supplementary Materials (Section~S4) confirms this empirically: replacing the soft membership vectors with their hard one-hot counterparts collapses the rank correlation between $C_{\GM}$ and the product metric $d_\times$ from $0.999$ to $0.65$, and replacing the Jensen-Shannon distance with the divergence raises the triangle-inequality violation rate from $0\%$ to $7.5\%$.

\section{Experimental Validation}\label{sec:experiments}
We validate the theoretical results with seven experiments on real-world corpora, with further supporting analyses in the Supplementary Materials (Sections~S1--S8).

\subsection{Datasets}\label{sec:dataset}
The primary corpus is a subset of the Cuban news corpus from German et al.
\citep{german2025narrative}: 418~documents with 1536-dimensional embeddings from a language model (GPT-4). Soft cluster membership distributions are computed via the UMAP$\to$HDBSCAN pipeline described in Section~\ref{sec:prelim-metric}. This yields $K$ topic clusters with per-document membership vectors $\hat{e}_i\in\Simplex^{K-1}$. Experiments~1 to 3 use this corpus. Experiment~4 (Section~\ref{sec:exp4}) extends the validation to three additional corpora from the Narrative Maps \citep{keith2020narrative} and Narrative Trails \citep{german2025narrative} repositories, covering different domains, sizes, and embedding models. Experiment~7 (Section~\ref{sec:exp7}) additionally draws on the Wikispeedia navigation corpus \citep{west2012wikispeedia}, which supplies human storyline data absent from the other corpora.

\subsection{Experiment 1: Approximation Quality}\label{sec:exp1}
Proposition~\ref{thm:geodesic} establishes the exact decomposition $-\log C = \tfrac{1}{2}[-\log(1-\dang/\pi) - \log(1-d_{\mathrm{JS}})]$, where the topic term connects to the Fisher-Rao distance via Remark~\ref{rem:fisher-connection}. We test the Fisher identity and the resulting coherence approximation on all $\binom{418}{2} = 87{,}153$ document pairs, restricted to the $80{,}950$ pairs with $C > 10^{-10}$ (i.e., neither channel collapsed to zero). The excluded pairs correspond to documents in disjoint topic clusters where $d_{\mathrm{JS}}$ saturates at~$1$ and $\dFR$ saturates at $\pi/2$. These are boundary artifacts of the metric range, and neither side of the approximation is finite there.

Figure~\ref{fig:approx}(a) directly tests the first-order Fisher metric identity $d_{\mathrm{JS}} \approx \dFR/\sqrt{2\ln 2}$ (eq.~\ref{eq:djs-fisher}). The Pearson correlation is $R = 0.994$, confirming a strong linear relationship. The scatter resolves into two visually distinct branches, which are separated by the minimum Shannon entropy of the two cluster-membership vectors in each pair (the Panel~(a) colormap). The \emph{upper branch}, consisting of pairs where both $p$ and $q$ are diffuse (high minimum entropy, mass spread across most of the $K = 11$ clusters), hugs $y = x$ closely: for these pairs, both distributions sit in the interior of the simplex where the Fisher-Rao metric tensor is well-conditioned, and the Taylor expansion of $\JSD$ around $p = q$ is accurate. The \emph{lower branch}, where at least one of $p, q$ is peaked on a small number of clusters (low minimum entropy), dips systematically below $y = x$: here the distributions approach the simplex vertices, the Fisher-Rao metric tensor begins to degenerate ($1/p_k \to \infty$ as $p_k \to 0$; Remark~\ref{prop:soft-membership}), and the linear approximation $\dFR/\sqrt{2\ln 2}$ overshoots the truncated range of $d_{\mathrm{JS}}$. The horizontal hook at the top, where $d_{\mathrm{JS}}$ has saturated at~$1$ while $\dFR/\sqrt{2\ln 2}$ continues to $\approx 1.33$, is the asymptotic limit of the lower branch: pairs whose topic distributions approach disjoint-support deltas. The two-branch structure is thus the empirical signature of the simplex-boundary degeneration the theory warns about, not an artifact. That the entire scatter lies on or below the diagonal---the upper branch tight against $y = x$, the lower branch below it, none above---is moreover mandated pointwise rather than merely observed: the first-order identity is in truth the one-sided bound $d_{\mathrm{JS}} \le \dFR/\sqrt{2\ln 2}$, exact at all distances (Proposition~\ref{prop:hellinger}). It also explains why the overall Pearson correlation is high despite a visible bias, since the branches rejoin at both endpoints (origin and saturation) and only separate in the middle of the range.

Figure~\ref{fig:approx}(b) evaluates the Fisher substitution~\eqref{eq:fisher-substitution}, which replaces $d_{\mathrm{JS}}$ with $\dFR/\sqrt{2\ln 2}$ in the exact decomposition while keeping the angular term exact. The substitution is only mathematically defined when $\dFR/\sqrt{2\ln 2} < 1$ (otherwise $-\log(1 - \dFR/\sqrt{2\ln 2})$ is undefined), so we restrict Panel~(b) to the $51{,}342$ pairs that satisfy this condition in addition to the $C > 10^{-10}$ filter. The filter retains all $36{,}242$ upper-branch pairs of Panel~(a) but removes about two thirds of the lower-branch pairs, specifically the most extreme ones, where at least one cluster distribution is essentially a delta and $\dFR$ is near $\pi/2$. The remaining $15{,}100$ lower-branch pairs form the vertical tail seen at the upper-left of Panel~(b), where the $-\log(1-\cdot)$ transform amplifies the Panel~(a) bias toward the log-singularity as $\dFR/\sqrt{2\ln 2}$ approaches $1$. The diagonal agreement band in the lower-left is the upper-branch population where the substitution is faithful. The overall Pearson correlation on Panel~(b) is $R = 0.916$, lower than Panel~(a) because the nonlinear transform stretches the lower-branch residual vertically. The exact decomposition~\eqref{eq:exact-decomp} remains perfect at all distances and is what the rest of the paper uses. Panel~(b) is reported as a diagnostic of the first-order identity, not as an alternative computational path.

The exact decomposition~\eqref{eq:exact-decomp} holds to machine precision ($< 10^{-15}$), as it is an algebraic identity rather than an approximation. The infinitesimal expansion~\eqref{eq:geodesic-approx} requires $\dang/\pi$ and $d_{\mathrm{JS}}$ to be small, but in our corpus, typical angular distances are $\dang/\pi \approx 0.35$, well outside this regime. The practical strength of the exact decomposition is that it provides the additive cost structure (angular cost plus topic cost) at \emph{all} distances, not just infinitesimally.

\begin{figure*}[t]
 \centering
 \includegraphics[width=\textwidth]{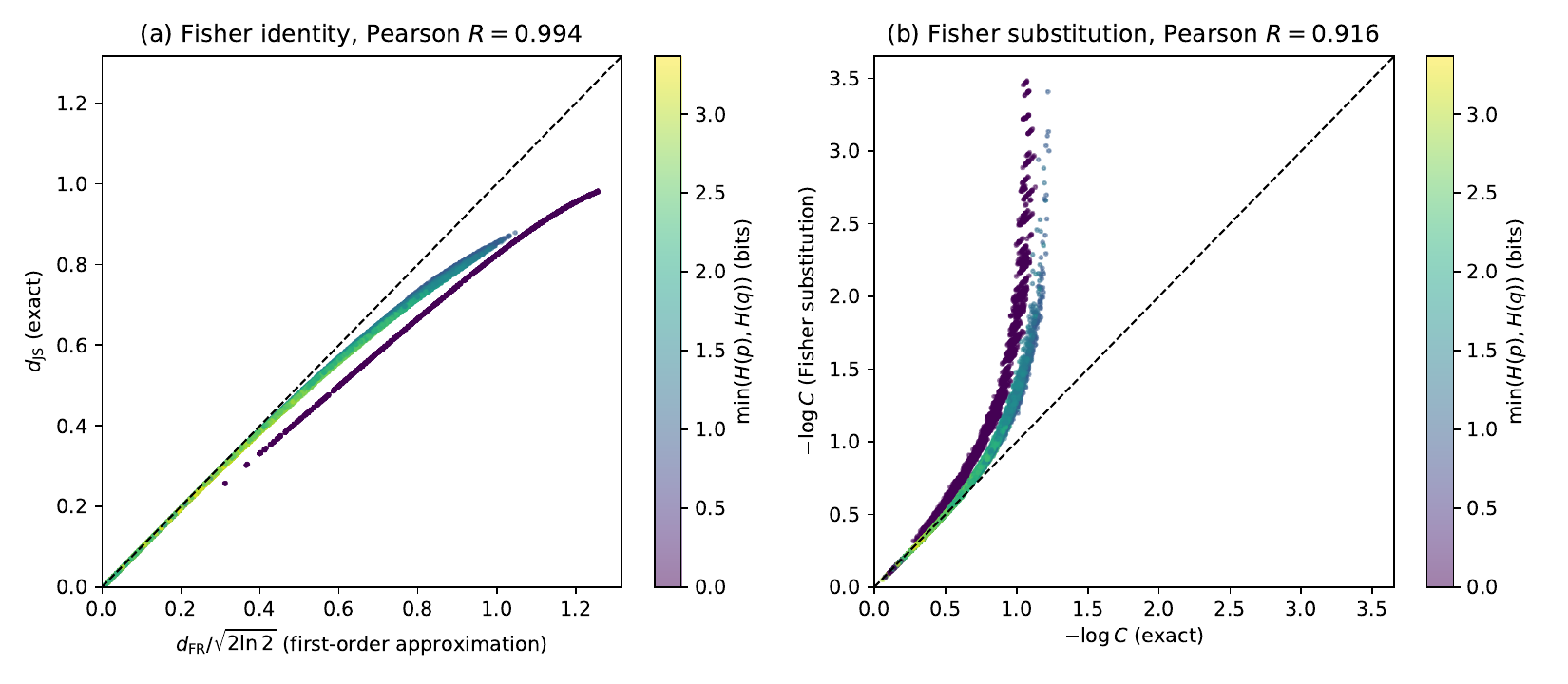}
 \caption{Approximation quality on Cuba, both panels colored by the minimum base-2 Shannon entropy of the two cluster-membership vectors in each pair. (\textbf{a})~First-order Fisher metric identity: exact $d_{\mathrm{JS}}$ vs.\ the linear approximation $\dFR/\sqrt{2\ln 2}$ on the $80{,}950$ pairs with $C > 10^{-10}$. Pearson $R = 0.994$. The scatter resolves into two branches: an upper branch (yellow/green, high minimum entropy, diffuse distributions) hugging $y = x$ where the first-order identity is tight, and a lower branch (dark, low minimum entropy, at least one near-delta distribution) dipping below $y = x$ as the Fisher-Rao metric tensor degenerates near the simplex boundary (Remark~\ref{prop:soft-membership}). The horizontal hook at the top is where $d_{\mathrm{JS}}$ has saturated at its base-2 ceiling of~$1$. (\textbf{b})~Fisher substitution~\eqref{eq:fisher-substitution}: exact angular term with $d_{\mathrm{JS}}$ replaced by $\dFR/\sqrt{2\ln 2}$, on the $51{,}342$ pairs where the substitution is defined ($\dFR/\sqrt{2\ln 2} < 1$). Pearson $R = 0.916$. The same two-branch structure appears here: the diagonal agreement band at the lower-left is the upper-branch population (high entropy, faithful substitution), and the vertical tail is the surviving $\approx 34\%$ of lower-branch pairs whose substituted cost is amplified toward the log-singularity by the $-\log(1-\cdot)$ transform. The dashed line is $y=x$.}\label{fig:approx}
\end{figure*}

\subsection{Experiment 2: Combinator Comparison}\label{sec:exp2}
We compare six combinators (GM, AM, HM, Quad, Min, Max) using the product-metric framework from Theorem~\ref{thm:metric-gm}. Two evaluation criteria align the experiment with the theoretical foundations: (a)~triangle inequality violation rate for the bounded dissimilarity $1 - F$, and (b)~rank correlation of $1 - F$ with the product metric $d_\times$.

\textbf{Product metric sanity check.} The product metric $d_\times$ (defined in Theorem~\ref{thm:metric-gm}) is a proper metric by construction (the normalized $\ell^2$-product of the angular and Jensen-Shannon distances, with $d_\times \in [0,1]$), so it cannot violate the triangle inequality. The $0.00\%$ violation rate it records on $10^5$ random triplets is accordingly not a test of the theorem but a check on the implementation---confirming that the computed $d_\times$ matches the construction, not validating a result that is already exact.

\textbf{Bounded triangle inequality.} We measure violations on the bounded dissimilarity $D_F = 1 - F(A_{ij}, T_{ij}) \in [0,1]$, which is defined for all triplets (the alternative $-\log F$ diverges at $F = 0$). Proposition~\ref{prop:bounded-metrics} establishes that $1 - \mathrm{AM}$ and $1 - \min$ are proper metrics by construction ($L^1$ and $L^\infty$ products, respectively), while $1 - \mathrm{GM}$ is not a metric in general, though no violations are observed empirically.

\textbf{Rank correlation with $d_\times$.} For each combinator, we compute the Spearman rank correlation between $1 - F$ and $d_\times$ across all $\binom{n}{2}$ document pairs. This measures how faithfully each combinator's ordering of document pairs matches the product-manifold geometry recorded in Theorem~\ref{thm:metric-gm}.

\begin{table}[t]
 \centering
 \small
 \caption{Combinator comparison on Cuba. Triangle inequality violations for the bounded dissimilarity $1 - F$ from $10^5$ sampled triplets, and Spearman rank correlation with the product metric $d_\times$ over all $\binom{n}{2}$ pairs. Wilson $95\%$ binomial intervals for the violation rates are within $\pm 0.06\%$ of the reported rates at the $0$--$1\%$ regime and $\pm 0.18\%$ at the $8\%$ regime (Quad); Spearman bootstrap CIs (B=200, document-level resampling) are within $\pm 0.001$ of the reported $\rho(d_\times)$ values for GM, AM, HM, Quad, and Min and $\pm 0.01$ for Max.\label{tab:triangle}}
 \begin{tabular}{lrr}
  \toprule
  \textbf{Combinator} & \textbf{Viol.\ (\%)} & \textbf{$\rho(d_\times)$} \\
  \midrule
  Geometric mean (GM) & 0.00 & 0.999 \\
  Arithmetic mean (AM) & 0.00 & 0.997 \\
  Harmonic mean (HM)  & 0.07 & 0.998 \\
  Metric-squared (Quad) & 8.21 & 1.000 \\
  Minimum       & 0.00 & 0.995 \\
  Maximum       & 0.53 & 0.533 \\
  \bottomrule
 \end{tabular}
\end{table}

Table~\ref{tab:triangle} reads as a contrast between the combinators whose metric status is settled and those whose is not. AM and Min are proper metrics (Proposition~\ref{prop:bounded-metrics}), so their $0.00\%$ violation rates are required rather than measured findings, and confirm only the implementation. The informative entries are the combinators with no such guarantee: GM shows $0.00\%$ empirically despite not being a metric in general, because the pathological configurations that violate the triangle inequality (extreme channel imbalance at an intermediate point) do not arise in this corpus. Quad shows $8.21\%$ violations because $1 - C_{\mathrm{Quad}} = d_\times^2$, and squaring a metric does not preserve the triangle inequality. Max shows $0.53\%$ violations, reflecting its failure to penalize weakness in either channel.

The rank correlations with $d_\times$ show that GM ($\rho = 0.999$), HM ($0.998$), AM ($0.997$), and Min ($0.995$) all track the product-manifold geometry closely, and the empirical gap among these four is narrow ($\Delta\rho < 0.005$). Quad achieves $\rho = 1.000$ by construction: since $1 - C_{\mathrm{Quad}} = d_\times^2$ is a monotone transformation of $d_\times$, Spearman rank correlation is perfect. This confirms that Quad is functionally equivalent to the product metric for ranking purposes (Section~\ref{sec:candidates}). Only the maximum combinator ($\rho = 0.533$) is clearly inferior, losing substantial geometric information by ignoring the weaker channel. The narrow gap among GM, HM, AM, and Min is itself informative: it shows that the product-manifold geometry is robust to the choice of combinator, and that the geometric structure of the document space, not the combinator, is the primary determinant of pair ordering. What distinguishes GM from the other three is not rank correlation but \emph{boundary behavior}. GM enforces $C = 0$ whenever either channel is zero (the ``veto'' property), blocking transitions between topically unrelated events regardless of semantic similarity. AM permits nonzero coherence when one channel is zero and so lacks the veto, whereas HM and Min, like GM, enforce it; HM, however, collapses pathologically toward zero for asymmetric inputs, and Min ignores the stronger channel entirely, discarding information. The GM thus occupies a unique position: it is the only combinator among the five that (a)~enforces the veto property, (b)~uses information from both channels, and (c)~satisfies the axiomatic characterization of Theorem~\ref{thm:axioms}. A veto analysis on both the full (dense) graph and MST-pruned graph confirms that unconstrained maximin extraction never traverses low-$T$ edges (0/100 paths on either graph for both corpora), making the veto redundant for that algorithm class; under the cardinality-constrained LP extraction regime it instead becomes binding (Supplementary Materials, Section~S9).

\textbf{GM vs.\ $d_\times$ for path extraction.} Since $d_\times \in [0,1]$, we define the coherence-space counterpart $C_{d_\times} = 1 - d_\times \in [0,1]$, so that all six methods (five combinators plus $d_\times$) produce coherence matrices in $[0,1]$ and use the same maximum-capacity-path algorithm. Maximin on $C_{d_\times}$ is equivalent to minimax on $d_\times$ because maximin extraction is invariant under monotone transformations of the edge weights (Corollary~\ref{cor:maximin-scale-inv}). The downstream evaluation in Section~\ref{sec:exp5} compares all methods head-to-head using this unified framework.

\subsection{Experiment 3: Scale Complementarity}\label{sec:exp3}
Scale complementarity (Remark~\ref{def:complementarity}) posits that $A$ and $T$ capture different levels of abstraction. Since $T$ is derived from the same embeddings as $A$ via lossy compression, true statistical independence is structurally impossible, and a positive correlation is guaranteed. The question is whether the channels are nonetheless \emph{non-redundant}: does the compression discard enough fine-grained information that $T$ captures something genuinely different from $A$? We test this by computing $A_{ij}$ and $T_{ij}$ for all $\binom{418}{2} = 87{,}153$ document pairs.

Figure~\ref{fig:correlation} shows the joint distribution. The Spearman rank correlation is $\rho = 0.283$ ($p < 10^{-100}$), a weak positive association, consistent with the structural coupling but far from redundancy. Since Spearman captures only monotonic dependence, we also estimate the mutual information $I(A;T)$ via histogram binning (30~bins per axis). The normalized mutual information $\mathrm{NMI} = 2I(A;T) / [H(A) + H(T)] = 0.027$ captures arbitrary nonlinear dependencies, confirming that $A$ and $T$ share less than $3\%$ of their information content, even accounting for the nonlinear UMAP+HDBSCAN pipeline. Together with the multi-scale structure verified by the data processing inequality (Proposition~\ref{prop:dpi}), these results confirm scale complementarity (Remark~\ref{def:complementarity}): the channels are non-redundant (condition~1) and capture different levels of abstraction (condition~2).

Scale complementarity validates the pipeline design by confirming that the UMAP$\to$\allowbreak{}HDBSCAN compression produces a topic channel distinct from the angular channel.

\begin{figure}[t]
 \centering
 \includegraphics[width=0.5\textwidth]{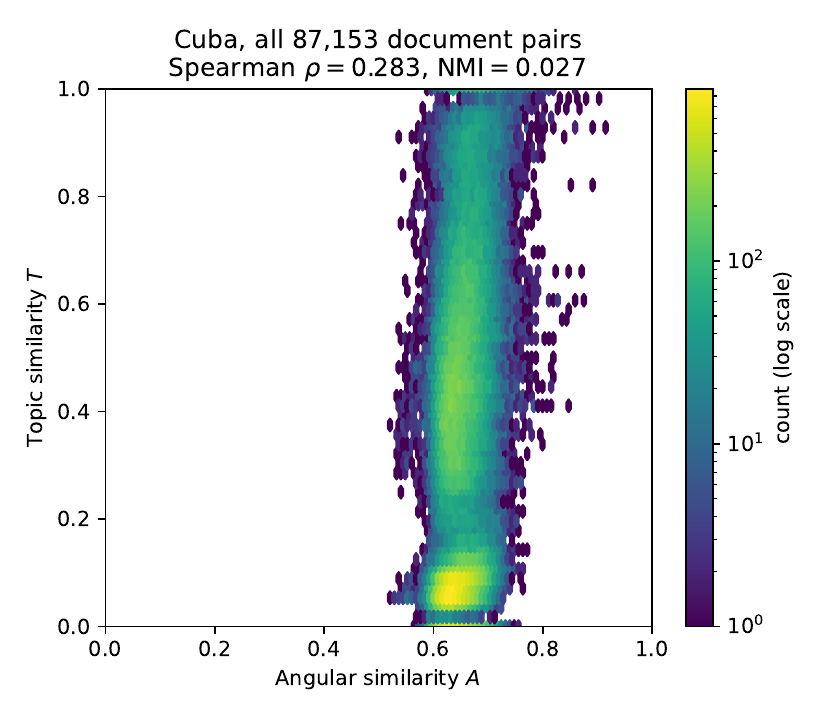}
 \caption{Joint distribution of angular similarity $A$ and topic similarity $T$ across all $87{,}153$ document pairs. The Spearman rank correlation is $\rho = 0.283$ ($\mathrm{NMI} = 0.027$): the channels are structurally coupled but capture different levels of abstraction (scale complementarity).\label{fig:correlation}}
\end{figure}

\subsection{Experiment 4: Cross-Corpus Validation}\label{sec:exp4}
Experiments~1 to 3 use a single news corpus. To verify that the information-geometric properties are not corpus-specific, we replicate the key findings across four corpora spanning different domains, sizes, and embedding models, and then test sensitivity to embedding model choice on a fixed corpus. The four corpora are: \emph{Cuba} (418 news articles on Cuban politics; GPT-4 embeddings, $d = 1536$; the primary corpus), \emph{COVID} (40 news articles on the COVID-19 pandemic; MiniLM embeddings, $d = 384$; from Narrative Maps \citep{keith2020narrative}), \emph{VisPub} (3{,}549 visualization research papers with abstracts; GPT-4 embeddings, $d = 1536$, for the metric-level analysis, and SPECTER2 embeddings, $d = 768$, over all 3{,}620 papers for downstream evaluation; from Narrative Trails \citep{german2025narrative}), and \emph{AMiner} (6{,}000 ML/AI research papers; GPT-4 embeddings, $d = 1536$; from Narrative Trails \citep{german2025narrative}).

For each corpus we compute: (1)~the Fisher identity correlation $R(d_{\mathrm{JS}}, \dFR/\sqrt{2\ln 2})$ on pairs with $C > 10^{-10}$ (matching the filter of Experiment~1), (2)~the GM rank correlation $\rho(1-C_{\GM}, d_\times)$ over all pairs, and (3)~the channel complementarity $\rho(A,T)$ (lower values indicate greater non-redundancy). We also test triangle inequality violations for $1 - C_{\GM}$. Table~\ref{tab:cross-corpus} summarizes the results.

\begin{table}[t]
 \centering
 \small
 \caption{Cross-corpus validation. Fisher identity $R \ge 0.99$, GM rank correlation $\rho \ge 0.989$, and channel correlation low (indicating scale complementarity). Bracketed values are $95\%$ bootstrap confidence intervals from $B = 200$ document-level resampling iterations on the channel correlation~$\rho_{A,T}$, excluding the same-document pairs created by resampling with replacement. The bootstrap half-widths for $R_{\mathrm{Fisher}}$ and $\rho_{\GM}$ are all within $\pm 0.01$ of the point estimates (typically $\pm 0.002$) and are omitted from the table for clarity.$^{\dagger}$\label{tab:cross-corpus}}
 \begin{tabular}{lrrrrr}
  \toprule
  \textbf{Corpus} & $n$ & $K$ & $R_{\mathrm{Fisher}}$ & $\rho_{\GM}$ & $\rho_{A,T}$ \;\; (95\% CI) \\
  \midrule
  Cuba              & 418     & 11  & 0.994 & 0.999 & 0.283 \;\; {\scriptsize $[\phantom{-}0.231,\;\phantom{-}0.337]$} \\
  COVID$^{\dagger}$ & 40      & 2   & 0.998 & 0.993 & $-$0.014 \;\; {\scriptsize $[-0.140,\;\phantom{-}0.154]$} \\
  VisPub            & 3{,}549 & 113 & 0.997 & 0.992 & 0.234 \;\; {\scriptsize $[\phantom{-}0.214,\;\phantom{-}0.255]$} \\
  AMiner            & 6{,}000 & 163 & 0.997 & 0.990 & 0.153 \;\; {\scriptsize $[\phantom{-}0.140,\;\phantom{-}0.167]$} \\
  \bottomrule
 \end{tabular}

 \smallskip
 \noindent{\footnotesize $^{\dagger}$The COVID corpus ($n=40$, $K=2$) is below the framework's recommended operating regime ($K \ge 4$); it is included to illustrate boundary behavior. The wide $\rho_{A,T}$ confidence interval on COVID reflects the small sample size; the other corpora yield half-widths $\le 0.06$. See text for discussion of its 7.6\% triangle inequality violation rate.}
\end{table}

The Fisher identity holds consistently ($R \ge 0.99$) across all corpora, indicating that the JSD-Fisher-Rao connection is not an artifact of a particular corpus or embedding model. The GM rank correlation with $d_\times$ is $\rho \ge 0.989$ in all cases (the smallest, $0.990$ to three decimals, is AMiner's), consistent with the product-metric interpretation. Channel correlation remains low across all corpora ($|\rho(A,T)| \le 0.283$ with $95\%$ bootstrap intervals bracketed in Table~\ref{tab:cross-corpus}), consistent with scale complementarity (Remark~\ref{def:complementarity}): despite the structural coupling, the channels capture different levels of abstraction. The COVID corpus shows a near-zero correlation ($\rho = -0.014$, CI $[-0.14, 0.15]$), consistent with its small size ($n=40$) and minimal cluster structure ($K=2$). With only two clusters, the coarse-grained thematic dimension is so impoverished that almost no monotonic structure couples it to the fine-grained angular dimension, and the wide confidence interval reflects the limited statistical leverage of $n = 40$ documents.

The COVID corpus ($K = 2$) exhibits $7.6\%$ triangle inequality violations for the bounded dissimilarity $1 - C_{\GM}$, while all other corpora ($K \ge 11$) show $0\%$. With only two clusters, membership vectors concentrate near the simplex vertices, producing $T \approx 0$ for documents in opposite clusters, exactly the pathological channel imbalance identified in the counterexample of Proposition~\ref{prop:bounded-metrics}. A hyperparameter sweep over HDBSCAN configurations on the COVID corpus confirms a monotonic relationship, with the violation rate dropping from $7.6\%$ at $K = 2$ to $< 0.5\%$ at $K = 4$ and $< 0.1\%$ at $K \ge 5$. For corpora with sufficient cluster structure ($K \ge 4$), the empirical metric properties hold robustly.

\begin{figure*}[t]
 \centering
 \includegraphics[width=\textwidth]{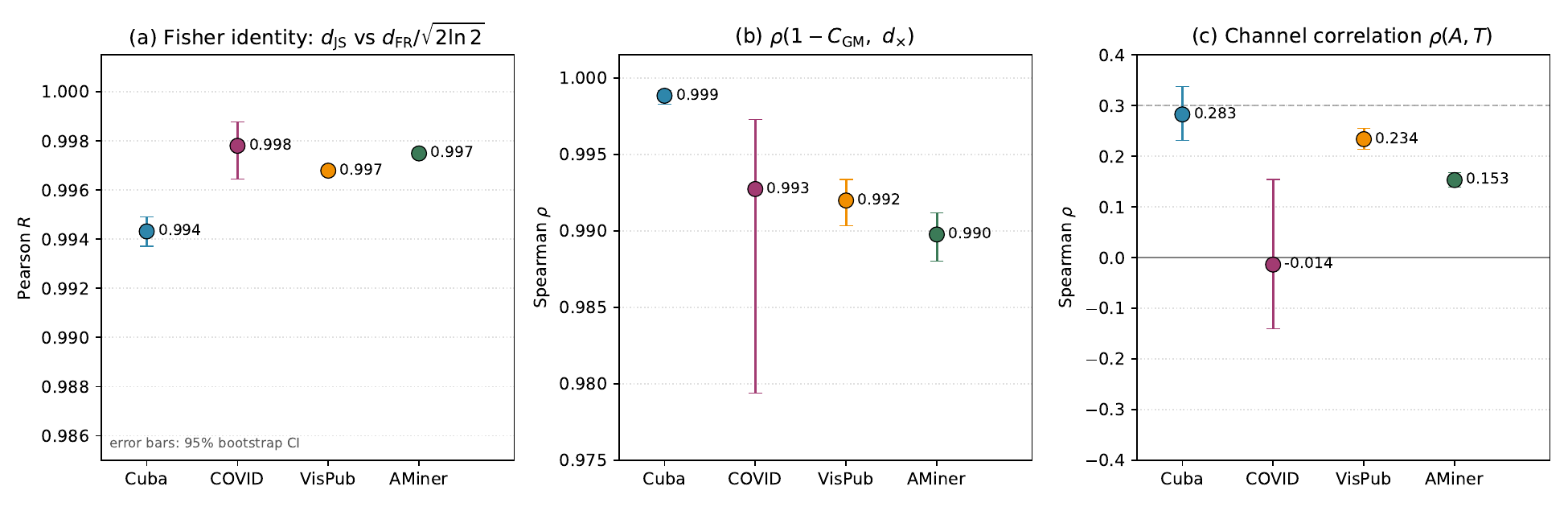}
 \caption{Cross-corpus validation across four corpora: point estimates with $95\%$ bootstrap confidence intervals ($B = 200$ document-level resamples) as error bars. Panels (a) and (b) use zoomed $\rho$/$R$ axes; the markers are points (not bars), so the zoom does not exaggerate the sub-percent spread. (\textbf{a})~Fisher identity holds consistently ($R \ge 0.99$). (\textbf{b})~GM rank correlation with $d_\times$ satisfies $\rho \ge 0.989$. (\textbf{c})~Channel correlation remains low, confirming scale complementarity.}\label{fig:cross-corpus}
\end{figure*}

\textbf{Embedding model sensitivity.} To disentangle the effect of the embedding model from the corpus, we re-embed the Cuba and VisPub corpora with two further models (MiniLM-L6, $d = 384$; MPNet, $d = 768$) and recompute the full UMAP$\to$\allowbreak{}HDBSCAN$\to$coherence pipeline. The metric-level properties are stable: GM rank correlation with $d_\times$ stays at $\rho \ge 0.99$ and triangle-inequality violations at $0\%$ across all six corpus-embedding combinations, while the channel correlation $\rho_{A,T}$ varies within $[0.23, 0.29]$, remaining in the low-correlation regime expected by scale complementarity. The full results are in the Supplementary Materials (Section~S3).

\subsection{Experiment 5: Downstream Consistency Check}\label{sec:exp5}
Experiments~1 to 4 validate the information-geometric framework; the Supplementary Materials extend the validation to alternative topic models (Section~S1) and cluster-count sensitivity (Section~S2). Here we perform a \emph{downstream consistency check} on the axiomatic recommendation of Section~\ref{sec:geomean}: we verify that the theoretically-prescribed GM is not empirically dominated by any alternative combinator or single-channel baseline on the task the metric is designed for, following the Narrative Trails framework \citep{german2025narrative}.

\textbf{Setup.} We evaluate on two corpora: the Cuba news corpus (418 documents) and the VisPub corpus (3{,}620 papers). For VisPub, we use SPECTER2 embeddings \citep{cohan2020specter,singh2023scirepeval} ($d = 768$) rather than the GPT-4 embeddings used in the metric-level experiments; SPECTER2 is trained on citation graphs, producing paths that follow research lineages rather than generic semantic similarity. The metric-level experiments use a 3{,}549-paper subset---the papers carrying abstracts, which the text-based GPT-4 embedding requires---whereas SPECTER2 represents all 3{,}620 papers, including the 71 abstract-less entries. Both corpora use a unified UMAP projection: both the angular similarity channel ($A$) and the topic similarity channel ($T$, via HDBSCAN clustering) are derived from the same UMAP($k^*$) embedding space, where $k^*$ is selected automatically via angular similarity saturation ($< 10\%$ relative change)---the smallest dimension at which the angular structure stops changing. Deriving both channels from a single projection keeps them commensurable.\footnote{This configuration preserves the metric-level guarantees established for the raw-embedding configuration in Sections~\ref{sec:product}--\ref{sec:properties}. On both downstream corpora the saturation criterion selects $k^* = 5$, and the product metric $d_\times$ (Theorem~\ref{thm:metric-gm}) remains free of triangle-inequality violations under triplet sampling; $C_{\mathrm{GM}}$ tracks $d_\times$ at Spearman $\rho = 0.98$ (Cuba) and $0.99$ (VisPub), and the angular and topic channels remain complementary (Pearson correlation $0.26$ and $0.23$). These match the raw-embedding configuration, in which $C_{\mathrm{GM}}$ tracks $d_\times$ at $\rho = 0.999$ (Experiment~2, Section~\ref{sec:exp2}) and the angular and topic channels show Pearson correlation $0.29$ on Cuba (the Pearson counterpart of the Spearman $\rho_{A,T} = 0.283$ of Experiment~3, Section~\ref{sec:exp3}).} The shared UMAP coordinates are mean-centered before the angular channel is computed, placing the corpus centroid at the origin so that $A$ measures relative direction among documents; because UMAP output sits at an arbitrary off-origin offset, omitting this step collapses the angular similarities toward~$1$. Every other angular channel in this paper---the metric-level Experiments~1--4 and the bottleneck-gap Experiment~7---is instead computed from the raw embeddings, where the UMAP step feeds only the topic clustering and the reported relational summaries are insensitive to centering. Both corpora are temporally ordered, so we build \emph{directed} sparse coherence graphs using a maximum spanning arborescence (Edmonds' algorithm \citep{edmonds1967optimum}) rooted at the earliest event, rather than the undirected MST used in the original Narrative Trails pipeline. For each of the six combinators (GM, AM, HM, Quad, Min, Max), we build a sparse graph using the arborescence-derived critical weight. Each combinator is tested with and without Dirichlet smoothing for channel balance (14~method variants total), plus two single-channel baselines ($A$-only and $T$-only).

For each corpus, we select 30 source-target pairs (source in the first temporal third, target in the last), requiring that \emph{all methods} produce a valid path of length $\geq 3$ for the same endpoints. This shared-endpoint design enables paired statistical comparisons. All methods use the same maximum-capacity-path (maximin) algorithm from Narrative Trails.

\textbf{Structural pre-filtering.} On Cuba, GM, AM, HM, and Quad form a single structural equivalence class ($>70\%$ path identity or $>85\%$ Jaccard node overlap), consistent with the narrow rank-correlation gap in Experiment~2. We use GM as the representative of this class and evaluate in three stages, with path-level deduplication across methods that produce identical paths for a given endpoint pair:
\begin{enumerate}[noitemsep,topsep=0pt]
 \item \textbf{Stage~1 (Balance):} GM vs.\ GM$_{\mathrm{bal}}$ (Dirichlet-smoothed) to determine whether channel balancing affects sequence quality.
 \item \textbf{Stage~2 (Combinator):} GM, Quad, Min, Max (using the balance variant carried forward from Stage~1) to compare structurally distinct combinator families.
 \item \textbf{Stage~3 (Channels):} The Stage~2 winner vs.\ single-channel controls ($A$-only, $T$-only) to test whether combining channels improves sequence quality over either alone.
\end{enumerate}
Path-level deduplication further reduces cost, since when two methods produce \emph{identical} paths for a given endpoint pair, the storyline is judged only once and the result is shared. This yields $458$ (Cuba) and $470$ (VisPub) LLM judge calls, roughly $35$--$45\%$ fewer than judging all method variants on every endpoint pair.

\textbf{Evaluation.} Each extracted sequence is scored independently by two frontier LLM judges, GPT-5.4 and Claude Opus 4.6, using the multi-judge protocol of Keith~\citep{keith2025judge}. The prompt presents the storyline in domain-agnostic terms (``document sequence coherence''), asks the judge whether each document follows logically from the previous one and whether the overall sequence forms a meaningful progression from source to target, and returns a single holistic coherence score on a 1 to 100 scale with a one- or two-sentence justification. The 1 to 100 scale, the absence of anchoring phrases, the single holistic score, and the brief justification are the prompt configurations that \citep{keith2025judge} finds to maximize discriminative power while controlling for the positivity bias of longer chain-of-thought protocols. Combined with the paired design (shared endpoints), this yields a score matrix of shape $30 \times k$ ($k$ methods per stage) suitable for repeated-measures analysis.

\textbf{Inter-judge agreement.} The two judges show moderate rank agreement: on Cuba, Pearson $r = 0.57$ to $0.58$ and Spearman $\rho = 0.59$ to $0.61$ across the three stages; on VisPub, $r = 0.28$ to $0.44$ and $\rho = 0.23$ to $0.49$. The Krippendorff $\alpha$ at the interval level is positive on Cuba ($\alpha \approx 0.20$) and negative on VisPub ($\alpha \approx -0.76$), but this is driven by a systematic offset between the two judges' score levels on VisPub (Claude $\mu \approx 20$ versus GPT $\mu \approx 55$) rather than by lack of rank agreement, since the Spearman correlation remains positive. For the paired Friedman/Wilcoxon analysis below the relevant signal is within-pair rank, not absolute level, so the offset does not affect the consistency-check conclusion; we treat the systematic offset as a known LLM-judge calibration effect and revisit it in Section~\ref{sec:limitations}.

\textbf{Statistical analysis.}  Since all methods share the same endpoints, the comparison is a paired (within-subjects) design. We apply the Friedman test \citep{friedman1937use}, the nonparametric analogue of repeated-measures ANOVA, with the Nemenyi post-hoc test \citep{nemenyi1963distribution} for pairwise comparisons. For two-method stages, we use the Wilcoxon signed-rank test. The Nemenyi critical difference is computed with a Bonferroni-normal approximation of the Studentized-range statistic; this is mildly conservative relative to the exact critical value (the critical difference is overstated by roughly $3\%$), which does not affect any conclusion below since no pairwise comparison is significant even under the slightly larger critical difference.

\textbf{Results.} Table~\ref{tab:method-properties} summarizes each method's theoretical profile.

\begin{table}[t]
 \centering
 \small
 \caption{Theoretical properties of each coherence model. Properties: \textbf{Veto} = $C=0$ when either channel is zero; \textbf{Metric} = $1-C$ satisfies triangle inequality; \textbf{Log-add.} = $-\log C$ decomposes additively; \textbf{Axioms} = satisfies all four axioms of Theorem~\ref{thm:axioms}; \textbf{$\rho(d_\times)$} = rank correlation with product metric.\label{tab:method-properties}}
 \begin{tabular}{lccccc}
  \toprule
  \textbf{Method} & \textbf{Veto} & \textbf{Metric} & \textbf{Log-add.} & \textbf{Axioms} & \textbf{$\rho(d_\times)$} \\
  \midrule
  GM         & \checkmark & empirical & \checkmark & \checkmark & 0.999 \\
  AM         & ---        & \checkmark & ---        & ---        & 0.997 \\
  HM         & \checkmark & ---        & ---        & ---        & 0.998 \\
  Quad       & ---        & ---        & ---        & ---        & 1.000 \\
  Min        & \checkmark & \checkmark & ---        & ---        & 0.995 \\
  Max        & ---        & ---        & ---        & ---        & 0.533 \\
  \bottomrule
 \end{tabular}
\end{table}

\textbf{Stage~1: Balance.} On both corpora, the balanced variant (Dirichlet-smoothed cluster probabilities with UMAP dimensionality reduction) ties with or narrowly exceeds the unbalanced variant: Cuba mean scores $83.18$ (GM$_{\mathrm{bal}}$) vs.\ $83.15$ (GM); VisPub $37.77$ (GM$_{\mathrm{bal}}$) vs.\ $37.38$ (GM). Wilcoxon signed-rank tests yield $p = 0.71$ (Cuba) and $p = 0.53$ (VisPub), so the balance variant has at most a negligible effect. We carry the balanced configuration forward as the default (tie-break) choice in subsequent stages. One subtlety affects the notation: Dirichlet smoothing balances the two channels only when they are not already balanced. For combinators whose channels already satisfy the balance condition ($\gamma \approx 1$), the smoothing step is a no-op and the base combinator \emph{is} the balanced configuration. This is the case for Min and Max on VisPub, where no smoothing is applied; we therefore write Min and Max (without the ``bal'' subscript) for VisPub and GM$_{\mathrm{bal}}$, Quad$_{\mathrm{bal}}$, Min$_{\mathrm{bal}}$, Max$_{\mathrm{bal}}$ for Cuba, where all four receive smoothing ($\alpha \approx 0.016$ to $0.022$).

\textbf{Stage~2: Combinator comparison.} Table~\ref{tab:downstream-stage2} reports the Friedman test results. On neither corpus do the four combinators differ significantly: Cuba $\chi^2 = 0.81$, $p = 0.846$ (all pairwise Nemenyi $p = 1.0$), VisPub $\chi^2 = 6.26$, $p = 0.0996$ (no pairwise comparison significant after Nemenyi correction). Quad$_{\mathrm{bal}}$ ranks first on Cuba (mean rank $2.40$) with GM$_{\mathrm{bal}}$ indistinguishable at $2.43$. On VisPub, Max ranks first ($2.02$) but no pairwise difference is significant.

The path-level overlap data explain these scores. On Cuba, GM$_{\mathrm{bal}}$ and Quad$_{\mathrm{bal}}$ share $60\%$ path identity and $86\%$ Jaccard node overlap, producing nearly identical narratives, while Max$_{\mathrm{bal}}$ has $0\%$ path identity with every other method. The functional equivalence of GM$_{\mathrm{bal}}$ and Quad$_{\mathrm{bal}}$ is thus structural, as identical paths receive identical scores, and the $0.03$ rank gap between them is within sampling noise. The more informative finding is that Min$_{\mathrm{bal}}$ and Max$_{\mathrm{bal}}$, which produce structurally \emph{different} narratives (Jaccard $\le 0.30$), still score equivalently. On VisPub, GM$_{\mathrm{bal}}$/Quad$_{\mathrm{bal}}$ overlap is high (Jaccard $0.82$, identity $50\%$), and Max remains a complete structural outlier ($0\%$ identity with all methods). Despite this structural divergence, no significant quality differences emerge on either corpus.

\begin{table}[t]
 \centering
 \small
 \caption{Stage~2: Combinator comparison (30 endpoint pairs per corpus). Mean scores (1 to 100 scale), Friedman mean ranks, and Friedman test statistics. Nemenyi critical difference $\mathrm{CD} = 0.879$ at $\alpha = 0.05$ (computed with the Bonferroni-normal approximation of the Studentized-range statistic; see text). Each method uses the balance configuration carried forward from Stage~1: on Cuba all four combinators are Dirichlet-smoothed, while on VisPub GM and Quad are smoothed and Min and Max are already channel-balanced and use the base combinator.\label{tab:downstream-stage2}}
 \begin{tabular}{lrrrr}
  \toprule
  & \multicolumn{2}{c}{\textbf{Cuba}} & \multicolumn{2}{c}{\textbf{VisPub}} \\
  \cmidrule(lr){2-3} \cmidrule(lr){4-5}
  \textbf{Combinator} & Mean & Rank & Mean & Rank \\
  \midrule
  GM   & 83.4 & 2.43 & 38.2 & 2.68 \\
  Quad & 83.5 & 2.40 & 37.6 & 2.72 \\
  Min  & 82.4 & 2.52 & 38.2 & 2.58 \\
  Max  & 83.2 & 2.65 & 40.1 & 2.02 \\
  \midrule
  Friedman $\chi^2$   & \multicolumn{2}{c}{$0.81$} & \multicolumn{2}{c}{$6.26$} \\
  Friedman $p$        & \multicolumn{2}{c}{$0.846$} & \multicolumn{2}{c}{$0.0996$} \\
  \bottomrule
 \end{tabular}

 \smallskip
 \noindent{\footnotesize Method column gives the combinator family; the balance configuration is corpus-dependent (Cuba: all Dirichlet-smoothed; VisPub: GM and Quad smoothed, Min and Max unsmoothed because already channel-balanced).}
\end{table}

\textbf{Stage~3: Single-channel controls.} The Stage~2 winner (Quad$_{\mathrm{bal}}$ on Cuba, Max on VisPub) is compared against single-channel baselines ($A$-only, $T$-only). On Cuba, Quad$_{\mathrm{bal}}$ achieves the best mean rank ($1.82$), followed by $A$-only ($2.07$) and $T$-only ($2.12$), but the Friedman test is not significant ($\chi^2 = 1.69$, $p = 0.429$, all pairwise Nemenyi $p \ge 0.74$). On VisPub, $A$-only achieves the best mean rank ($1.83$), followed by Max ($1.98$) and $T$-only ($2.18$), also non-significant ($\chi^2 = 1.96$, $p = 0.374$, all pairwise Nemenyi $p \ge 0.53$). Unlike Stage~2, Stage~3 compares structurally \emph{distinct} narratives: on Cuba, Quad$_{\mathrm{bal}}$ shares only $3\%$ path identity and $32\%$ Jaccard overlap with $A$-only, and $7\%$ identity / $43\%$ Jaccard with $T$-only; on VisPub, Max shares $13\%$ identity / $28\%$ Jaccard with $A$-only and $0\%$ identity / $16\%$ Jaccard with $T$-only. The two single-channel controls share $0\%$ identity with each other on both corpora. The combined metric therefore does not significantly outperform either single channel on holistic quality as measured by LLM judges, but its value lies in structural properties that a scalar quality score cannot capture, namely diagnostic decomposability (Proposition~\ref{prop:log-additive}), the veto property (Table~\ref{tab:method-properties}), and the principled reference metric $d_\times$.

\textbf{The consistency check passes.} Across all three stages, GM$_{\mathrm{bal}}$ is never significantly dominated by any alternative combinator or single-channel baseline (no stage significant; the smallest Friedman $p$ is $0.0996$, on VisPub). On Cuba it is statistically indistinguishable from Quad$_{\mathrm{bal}}$ (rank $2.43$ vs.\ $2.40$), with which it is in the same structural equivalence class ($60\%$ path identity, $86\%$ Jaccard node overlap). On VisPub the same holds relative to Max. Had GM significantly underperformed, the axiomatic recommendation of Section~\ref{sec:geomean} would be called into question. The observed parity instead confirms that its distinctions from the other compensability-spectrum members (Section~\ref{sec:candidates}) are theoretical (metric structure, log-additivity, veto behavior) and structural (the bottleneck-gap profile of Experiment~7) rather than a matter of holistic downstream quality, and that the theoretically-prescribed GM is safe to adopt in practice.

\subsection{Experiment 6: Robustness to Embedding Perturbation}\label{sec:exp6}
We test how the metric-level properties degrade when the document embeddings are perturbed by isotropic Gaussian noise. For each noise level $\sigma \in \{0.01, 0.05, 0.10\}$ (relative to the per-coordinate embedding standard deviation), we add fresh noise to the GPT-4 Cuba embeddings, run the full coherence pipeline, and compare against the clean pipeline along three axes: the metric-level correlations ($R_{\mathrm{Fisher}}$, $\rho_{\GM}$, $\rho_{A,T}$), the triangle-inequality violation rate for $1 - C_{\GM}$, and the rank stability $\rho_{\mathrm{stab}} = \rho_{\mathrm{Spearman}}\bigl(C_{\GM}^{\text{noisy}}, C_{\GM}^{\text{clean}}\bigr)$ over all pairs. Three random seeds are averaged.

We report two modes, in line with the discussion of pipeline coupling in Section~\ref{sec:dpi}. In the \emph{fixed-$K$} mode, only the angular channel is perturbed: we hold the HDBSCAN cluster labels fixed at the clean pipeline's output, so the topic channel is unaffected and only $A$ changes under the noise. This cleanly isolates the robustness of the metric structure itself. In the \emph{floating-$K$} mode, the noise is fed through the full UMAP$\to$HDBSCAN$\to$coherence pipeline, so both the cluster count $K$ and the membership vectors are allowed to shift; this is the more honest end-to-end robustness number, but it conflates metric-level effects with clustering instability. Table~\ref{tab:perturbation} summarizes the body-level results; the per-seed breakdown is in the Supplementary Materials (Section~S5).

\begin{table}[t]
 \centering
 \small
 \caption{Embedding perturbation results on Cuba ($n = 418$, GPT-4 embeddings, clean $K = 11$). Each cell is the mean over three seeds; ``stab'' is the Spearman correlation of the $C_{\GM}$ pair ranking under noise against the clean ranking. The fixed-$K$ block holds HDBSCAN output fixed at the clean labels; the floating-$K$ block re-runs the full pipeline.\label{tab:perturbation}}
 \begin{tabular}{l@{\hspace{1em}}rrrrrr}
  \toprule
  \textbf{Mode / $\sigma$} & $K$ & $R_{\mathrm{Fisher}}$ & $\rho_{\GM}$ & $\rho_{A,T}$ & stab $C_{\GM}$ & tri.\ viol.\ (\%) \\
  \midrule
  \emph{clean} (reference)       & 11   & 0.994 & 0.999 & 0.283 & ---     & 0.00 \\
  \addlinespace
  fixed-$K$, $\sigma = 0.01$    & 11   & 0.994 & 0.999 & 0.283 & 1.000   & 0.00 \\
  fixed-$K$, $\sigma = 0.05$    & 11   & 0.994 & 0.999 & 0.283 & 1.000   & 0.00 \\
  fixed-$K$, $\sigma = 0.10$    & 11   & 0.994 & 0.999 & 0.283 & 1.000   & 0.00 \\
  \addlinespace
  floating-$K$, $\sigma = 0.01$ & 11.0 & 0.996 & 0.984 & 0.223 & 0.513   & 0.18 \\
  floating-$K$, $\sigma = 0.05$ & 10.0 & 0.996 & 0.987 & 0.197 & 0.488   & 0.19 \\
  floating-$K$, $\sigma = 0.10$ & 14.3 & 0.995 & 0.998 & 0.305 & 0.610   & 0.00 \\
  \bottomrule
 \end{tabular}
\end{table}

\textbf{Takeaway.} Under fixed-$K$, the metric-level structure is exactly preserved at the rank level ($\rho_{\mathrm{stab}} = 1.000$ at all three noise levels and at the pair-Spearman precision of $10^{-4}$): the angular channel is robust to embedding noise of up to 10\% of per-coordinate standard deviation, and the channel correlation $\rho_{A,T}$, the Fisher identity correlation $R_{\mathrm{Fisher}}$, and the triangle-inequality violation rate of $1 - C_{\GM}$ are all unchanged. Under floating-$K$, the cluster count itself becomes the dominant source of variability ($K$ takes values $\{2, 11{-}16\}$ across seeds and noise levels; the standard deviation of $K$ across seeds at $\sigma = 0.01$ is $7.8$), and $\rho_{\mathrm{stab}}$ drops to roughly $0.5$ as documents are reassigned to different clusters. Even so, the within-run metric-level summaries ($R_{\mathrm{Fisher}}$, $\rho_{\GM}$, $\rho_{A,T}$, triangle violations) remain close to their clean values: the rankings change at the level of which documents are clustered together, but the geometric structure that the framework rests on is preserved at the same statistical level as in the clean pipeline.

The contrast between the two modes is informative. It shows that the robustness limit at small $\sigma$ in the floating-$K$ mode comes from clustering instability rather than from any fragility of the metric construction itself. This is consistent with the discussion in Section~\ref{sec:exp4}: the cluster count is the most impactful pipeline parameter, and any framework that relies on a soft-clustering pipeline inherits the variability of that pipeline as a downstream sensitivity. The metric-level claims of the present paper (Theorems~\ref{thm:metric-gm} and~\ref{thm:axioms}) are conditional on a fixed clustering, and the fixed-$K$ row of Table~\ref{tab:perturbation} verifies that, conditional on a fixed clustering, those claims are robust to embedding noise at the magnitudes tested.

\subsection{Experiment 7: Bottleneck-Gap Profile across Five Corpora}\label{sec:exp7}
Experiment~7 tests Proposition~\ref{prop:gap-profile} directly: we sweep the compensability exponent $\alpha$ of $M_\alpha$ and measure the bottleneck-gap profile $g(\alpha)$ of equation~\eqref{eq:gap-profile} between coherent and incoherent storylines, on five corpora.

\textbf{Setup.} On Wikispeedia \citep{west2012wikispeedia}---$3{,}928$ Wikipedia articles with $10{,}832$ human navigation paths---the coherent population is the human path and the incoherent one a length-matched random path between the same endpoints. On the four narrative corpora (Cuba, COVID, VisPub, AMiner), which carry no human navigation data, the coherent population is the maximin storyline extracted as a temporal-DAG widest path (a \emph{narrative trail} \citep{german2025narrative}) and the incoherent one a length-matched random chronological sequence between the same endpoints. For each storyline we take its bottleneck edge---the minimum-coherence edge---and score it under $M_\alpha$ across a grid of~$\alpha$; $D_{\mathrm H}$ and $D_{\mathrm R}$ are the empirical laws of these bottleneck edges. From the $\alpha=0$ bottleneck edges we also evaluate the leading-order peak prediction $\alpha^\ast=-g'(0)/g''(0)$ of Proposition~\ref{prop:gap-profile}(c).

\textbf{Results.} Figure~\ref{fig:gap-profile} and Table~\ref{tab:gap-profile} summarize the sweep; three predictions of Proposition~\ref{prop:gap-profile} hold on every corpus.

\emph{(i) The even part is maximized at the geometric mean.} On all five corpora the even part $E(\alpha)=\tfrac12(g(\alpha)+g(-\alpha))$, measured directly from the sweep, decreases monotonically away from $\alpha=0$ (plotted for Wikispeedia in Figure~\ref{fig:gap-profile}(a), red curve): the symmetric component of the coherent-versus-incoherent separation is maximized \emph{exactly} at the geometric mean. By Proposition~\ref{prop:gap-profile}(b) this is equivalent to the design-premise condition $\mathbb{E}_{\mathrm R}[Y_\alpha]\ge\mathbb{E}_{\mathrm H}[Y_\alpha]$, which therefore holds on every corpus.

\emph{(ii) The full gap has an interior maximum.} On every corpus $g(\alpha)$ has an interior maximum---never at the Min or Max endpoint (Table~\ref{tab:gap-profile}, $\alpha^\ast_{\mathrm{obs}}$). By Proposition~\ref{prop:gap-profile}(c) the peak sits at $\alpha^\ast=-g'(0)/g''(0)$. The curvature $g''(0)$ is negative on all five corpora (Table~\ref{tab:gap-profile}), so the sign of the displacement follows $g'(0)$: the peak lies above the geometric mean for Wikispeedia ($\alpha=+0.10$) and below it for the four narrative corpora. Here $g'(0)=\tfrac18(\mathbb{E}_{\mathrm H}[M_0\Delta^2]-\mathbb{E}_{\mathrm R}[M_0\Delta^2])$ compares the $M_0$-weighted channel imbalance of the coherent and random bottleneck edges (Proposition~\ref{prop:gap-profile}(c)): it is positive when the human navigation paths carry the more imbalanced bottleneck and negative when the narrative trail carries the less imbalanced one.

\emph{(iii) The predicted peak.} The leading-order estimate $\alpha^\ast_{\mathrm{pred}}=-g'(0)/g''(0)$ of Proposition~\ref{prop:gap-profile}(c) (Table~\ref{tab:gap-profile}) matches the observed peak $\alpha^\ast_{\mathrm{obs}}$ in sign on all five corpora, and in magnitude within a small factor when the peak lies near~$0$ (a factor of $3.2$, $2.1$, and $1.5$ for Wikispeedia, VisPub, and AMiner respectively); for Cuba and COVID the predicted and observed peak \emph{locations} differ more, but this reflects the flatness of $g$ on the compensatory side rather than a failure of the prediction: when the curvature at the geometric mean is small, the interior-critical-point guarantee of Proposition~\ref{prop:gap-profile}(c) lapses (its hypothesis $|g'(0)|<m\delta$ fails), so the leading-order \emph{location} is no longer pinned down. There the profile forms a broad plateau---for both corpora $g$ varies by under $4\%$ across $\alpha\in[-8,-1]$---so the point argmax $\alpha^\ast_{\mathrm{obs}}$ is only weakly identified and is itself sensitive to sampling noise, which the path-level bootstrap confirms (Section~S7). Measured instead by the quantity that matters for extraction, the gap \emph{height}, the leading-order prediction is accurate on every corpus: the gap at the predicted peak $g(\alpha^\ast_{\mathrm{pred}})$ recovers between $92\%$ and $99.7\%$ of the maximum bottleneck gap (Wikispeedia $92.1\%$, Cuba $97.8\%$, COVID $96.4\%$, VisPub $99.7\%$, AMiner $99.7\%$). The closed-form peak therefore lands on the high-gap plateau in all cases, even where its precise location differs from the noisy empirical argmax.

\emph{(iv) Separation at the geometric mean, and a non-circular anchor.} The mean gap $g(0)$ understates how cleanly coherent and incoherent storylines separate at the geometric mean. On Wikispeedia the human navigation paths separate from length-matched random paths with Cohen's $d = 0.65$ and ROC AUC $= 0.77$ on the maximin (bottleneck) objective, and more strongly ($d = 1.20$, AUC $= 0.80$) on the geometric-mean path-score (reliability) objective (both at $\alpha = 0$). On the four narrative corpora the separation is larger still ($d = 1.3$ to $4.1$, AUC $= 0.86$ to $0.99$), but these corpora are an \emph{unsupervised} setting: lacking human-annotated storylines, we take the extractor's own maximin output as the coherent population, so the four narrative corpora serve as an internal consistency check rather than independent validation, and their separation magnitudes should be read in that light. The human-grounded validation of this experiment rests on Wikispeedia, whose coherent population is revealed human navigation behaviour rather than an output of the metric; it is the non-circular anchor, and it still separates the two populations at $d = 0.65$ (bottleneck) and $d = 1.20$ (reliability). The interior, compensatory-side peak is moreover not specific to the maximin objective that Proposition~\ref{prop:gap-profile} analyses: the geometric-mean reliability gap has an interior argmax on every corpus too---negative on all four narrative corpora ($-1.5$ to $-8$) and $-0.25$ on Wikispeedia---so the same structure holds on both extraction objectives.

\textbf{Robustness.} These conclusions rest on replication across five independent corpora rather than on within-corpus resampling. For the smallest corpus, COVID ($169$ endpoint pairs), where sampling noise is most plausible, a path-level bootstrap (Supplementary Materials, Section~S7) confirms that $g(0)>0$, $g''(0)<0$, $g'(0)<0$ and $\alpha^\ast_{\mathrm{obs}}<0$ each hold across the full $95\%$ interval; on the four larger corpora the corresponding intervals are far tighter. As a cross-modal check, Supplementary Materials Section~S8 applies the same analysis to a human-curated \emph{image} narrative (the ROGER expedition-photograph corpus of \citet{german2025roger}): the expert storyline scores above its random null and the even part of $g$ is again maximized at the geometric mean, in a second modality.

\begin{table}[t]
 \centering
 \small
 \caption{Bottleneck-gap profile across five corpora (Experiment~7). Wikispeedia contrasts the human navigation path against a random path; the four narrative corpora contrast the narrative trail (the maximin storyline) against a random chronological sequence. $n$ is the number of documents (articles, for Wikispeedia) in the corpus; the gap is estimated over the corresponding endpoint-pair populations ($10{,}832$ human paths for Wikispeedia; $169$ to $2{,}000$ endpoint pairs for the narrative corpora). $\mu_{\mathrm{coh}}$ and $\mu_{\mathrm R}$ are the mean geometric-mean coherence of the coherent and random bottleneck edges, and their difference $g(0)=\mu_{\mathrm{coh}}-\mu_{\mathrm R}$ is the gap at the geometric mean. $g'(0)$ and $g''(0)$ are the first two Taylor coefficients of the profile at the geometric mean, given in closed form by Proposition~\ref{prop:gap-profile}(c); $\alpha^\ast_{\mathrm{pred}}=-g'(0)/g''(0)$ is the leading-order predicted peak and $\alpha^\ast_{\mathrm{obs}}$ the observed argmax of $g$ over the swept grid. On every corpus $g''(0)<0$, so the profile is concave at the geometric mean and has an interior maximum, and the even part of $g$ is maximized at $\alpha=0$.\label{tab:gap-profile}}
 \begin{tabular}{@{}lrrrrrrr@{}}
  \toprule
  Corpus & $n$ & $\mu_{\mathrm{coh}}$ & $\mu_{\mathrm R}$ & $g'(0)$ & $g''(0)$ & $\alpha^\ast_{\mathrm{pred}}$ & $\alpha^\ast_{\mathrm{obs}}$ \\
  \midrule
  Wikispeedia & $3{,}928$ & $0.200$ & $0.094$ & $+0.0129$ & $-0.0398$ & $+0.32$ & $+0.10$ \\
  Cuba   & $418$     & $0.702$ & $0.177$ & $-0.0695$ & $-0.0757$ & $-0.92$ & $-2.5$ \\
  COVID  & $40$      & $0.774$ & $0.588$ & $-0.0062$ & $-0.0056$ & $-1.11$ & $-5.0$ \\
  VisPub & $3{,}549$ & $0.651$ & $0.108$ & $-0.0775$ & $-0.2186$ & $-0.35$ & $-0.75$ \\
  AMiner & $6{,}000$ & $0.666$ & $0.116$ & $-0.0800$ & $-0.2367$ & $-0.34$ & $-0.5$ \\
  \bottomrule
 \end{tabular}
\end{table}

\begin{figure}[t]
 \centering
 \includegraphics[width=\textwidth]{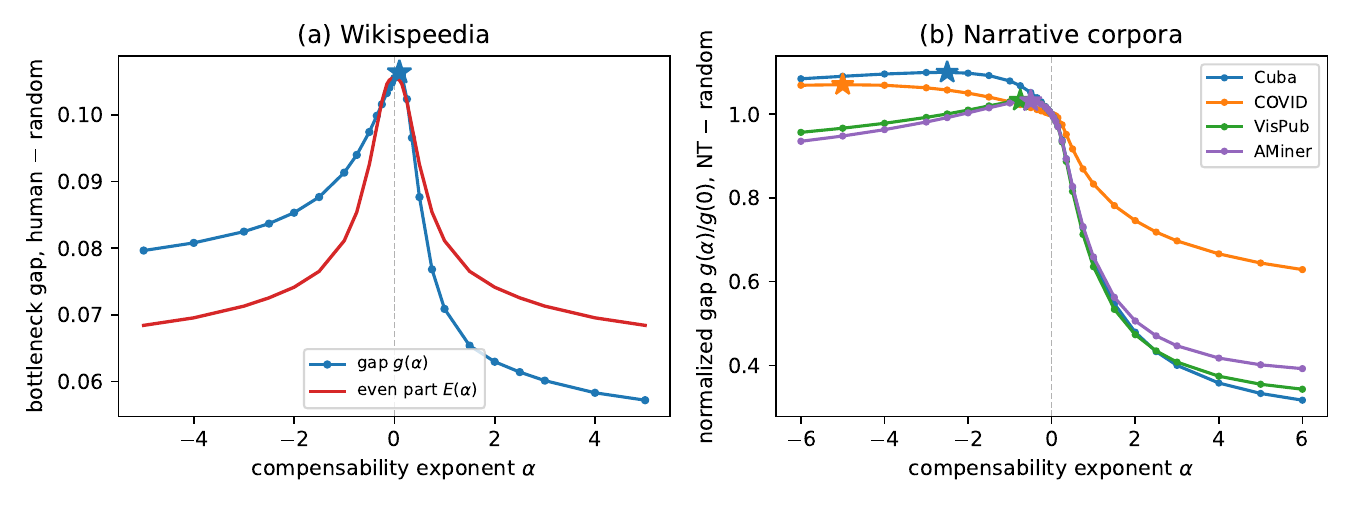}
 \caption{The bottleneck-gap profile $g(\alpha)$ (Experiment~7, Proposition~\ref{prop:gap-profile}). (\textbf{a})~Wikispeedia, human versus random navigation: the gap $g(\alpha)$ (blue, star at its peak $\alpha=+0.10$) and its even part $E(\alpha)$ (red), maximized exactly at the geometric mean $\alpha=0$. (\textbf{b})~The four narrative corpora, narrative-trail versus random chronological sequence, each gap normalized by its value $g(0)$ at the geometric mean; every profile peaks at $\alpha<0$ (stars), modestly inside the compensatory side. Dashed line: $\alpha=0$, the geometric mean.\label{fig:gap-profile}}
\end{figure}

The original Narrative Trails evaluation \citep{german2025narrative} reports the narrative-trail-versus-random bottleneck comparison at the single geometric-mean combinator; that measurement is the $\alpha=0$ point of these profiles. Experiment~7 shows that point sits in the apex region of a curve whose structural separation---the even part---is, across five corpora, pinned to the geometric mean. This is an extraction-output counterpart to the axiomatic argument of Theorem~\ref{thm:axioms}: the geometric mean is not only the unique combinator consistent with the four axioms, it is also where the symmetric component of the storyline-versus-random separation is maximized.

\section{Discussion}\label{sec:discussion}

\subsection{Interpretation}
The product-manifold decomposition (Proposition~\ref{thm:geodesic}) and Chentsov's theorem (Theorem~\ref{thm:chentsov}) together provide an information-geometric reading of the composite coherence metric as a distance on the product space $\Sphere^{d-1}\!\times\!\Simplex^{K-1}_+$ equipped with the round metric on the sphere factor and a Fisher-compatible metric on the simplex factor. Chentsov's theorem singles out the Fisher-Rao metric tensor (up to a positive scale) among Riemannian metrics on the simplex that are invariant under sufficient statistics; because $d_{\mathrm{JS}}$ induces exactly this metric tensor (identity~\eqref{eq:jsd-fisher-base2}), the topic component of the coherence metric is locally consistent with the Fisher-Rao geometry singled out by Chentsov's theorem. The axiomatic characterization (Theorem~\ref{thm:axioms}) is complementary: within the compensability spectrum, the geometric mean is the unique combinator consistent with the four stated axioms, and Section~\ref{sec:axioms} discusses which axioms are load-bearing for this conclusion. Together, the geometric reading and the axiomatic characterization provide an information-theoretic justification for an existing operational construction \citep{keith2020narrative, german2025narrative}.
\subsection{Robustness in Practice}
The mathematical pathologies of Section~\ref{sec:properties}, in particular the non-metric status of $1 - C_{\GM}$ from Proposition~\ref{prop:bounded-metrics}, require extreme channel imbalance ($A \approx 1$ with $T \approx 0$) to manifest. That configuration would need two documents to be nearly identical in embedding space yet assigned to completely disjoint topic clusters, which the UMAP$\to$HDBSCAN coupling structurally prevents. Scale complementarity (Remark~\ref{def:complementarity}) is thus doing double duty: it justifies the product-manifold model (the channels are non-redundant) and it rules out the worst-case configurations that would break metricity.

\subsection{Combinator Selection}
Three properties favor $C_{\GM}$ as the operational tool despite $d_\times$ being a proper metric and Quad ($= 1 - d_\times^2$) being its functionally equivalent similarity-domain counterpart. First, the \emph{veto property} $C_{\GM} = 0$ when $T = 0$ (Section~\ref{sec:properties}) is redundant for the unconstrained maximin (Narrative Trails) extraction used in our experiments (Section~\ref{sec:exp2}), under which every combinator already avoids disjoint-topic edges, but becomes load-bearing under the cardinality-constrained linear-programming extraction of Narrative Maps \citep{keith2020narrative}, whose storyline-size requirement can force the path through otherwise-avoidable edges. Because $C_{\GM} = \sqrt{A\cdot T}$ vanishes on disjoint-topic ($T = 0$) edges, those edges are pruned from the geometric mean's graph and its storylines can never traverse one, whereas the veto-less combinators ($d_\times$, AM) are routed through such transitions at a rate that grows with storyline length; the Supplementary Materials (Section~S9) quantify this across an extraction-length sweep. Because the composite metric is shared by both systems, this is a practical concern rather than a hypothetical one. Second, the log-additive decomposition (Proposition~\ref{prop:log-additive}) keeps both channels diagnostically visible along extracted paths, whereas the squared product metric $d_\times^2$ concentrates $78$ to $92\%$ of the path budget in the angular channel and makes the topic channel nearly invisible. Third, the geometric mean is the established coherence metric in Narrative Maps \citep{keith2020narrative} and Trails \citep{german2025narrative}. Our framework justifies this existing choice rather than proposing a replacement.

Each combinator induces a distinct coherence landscape, a different weighting of the angular and topic channels, which extraction algorithms traverse differently, yielding structurally different narrative paths at statistically indistinguishable aggregate quality (Section~\ref{sec:exp5}). The choice of combinator is therefore a \emph{modeling} decision about which narrative structures to foreground, not a free parameter to be tuned for quality. Table~\ref{tab:method-properties} makes this decision concrete: practitioners who want the axiomatic characterization and log-additive diagnostics should use GM; those who want a formal metric-space guarantee at the similarity level should use AM; those who want the product-manifold metric in similarity form should use Quad; and those who specifically want the veto property should use GM or Min. Our default recommendation of GM rests on its combination of three properties simultaneously: consistency with the four axioms of Theorem~\ref{thm:axioms}, the log-additive decomposition of Proposition~\ref{prop:log-additive}, and the veto property. It is the only combinator in the compensability spectrum (Section~\ref{sec:candidates}) that satisfies all three together under the assumptions stated.

\subsection{Limitations}\label{sec:limitations}

\textbf{LLM-as-judge evaluation, and the absence of a human anchor for the combinator comparison.} The downstream evaluation in Section~\ref{sec:exp5} uses two LLM judges (GPT-5.4 and Claude Opus 4.6) rather than human annotators. The evaluation protocol (brief justification with a single holistic score on a 1 to 100 scale, no anchoring) is designed to maximize discriminative power based on the findings of \citep{keith2025judge}, and the judge prompt uses a domain-agnostic framing (``document sequence coherence'') to avoid privileging literary narrative criteria. Even so, LLM judges are known to exhibit biases that differ from human preferences (verbosity preference, positivity bias, sensitivity to surface form; \citep{zheng2024judging}), and we did not run a human-evaluation study of combinator choices. Human grounding in this paper enters elsewhere: the bottleneck-gap analysis of Section~\ref{sec:exp7} anchors the metric's design premise to human navigation traces via the Wikispeedia corpus.

Three mitigations partially address this gap, and one limitation remains. First, the multi-judge design \citep{zheng2024judging} reduces single-model bias. Second, the role of the downstream evaluation in this paper is a \emph{consistency check} on the axiomatic recommendation rather than a quality comparison: the relevant question is whether the geometric mean is empirically dominated, and a null result is more robust to judge bias than a small positive effect would be, because systematic biases would have to coincidentally exactly cancel across all combinators and baselines to produce a flat ranking. Establishing exact equivalence rather than non-domination would instead require an equivalence test such as the two one-sided tests (TOST) \citep{schuirmann1987comparison,lakens2017equivalence} against a pre-specified margin, which the consistency check does not attempt. Third, the inter-judge agreement statistics reported in Section~\ref{sec:exp5} confirm that the two judges agree on the relative ordering of sequences even where their absolute score levels differ; the level offset that drives the negative Krippendorff $\alpha$ on VisPub does not affect the within-pair Friedman analysis, which is invariant to monotone rescaling of each judge's scores.

The remaining limitation is that all three mitigations operate within the LLM-judge family: agreement between two strong LLM judges does not establish agreement with human readers, and the absolute level of LLM scores is not a calibrated quality measurement. Two considerations bound the scope of this limitation. First, this paper is a \emph{justification} of an existing coherence metric, not a proposal of a new one; the composite metric $C = \sqrt{A \cdot T}$ is already in use in the Narrative Maps \citep{keith2020narrative} and Narrative Trails \citep{german2025narrative} systems, and the Narrative Maps representation has been assessed in human user studies, from the Amazon Mechanical Turk evaluation in the original work \citep{keith2020narrative} to subsequent sensemaking studies \citep{keith2022design,keith2026semantic}, so the metric's downstream usefulness has independent human-grounded support---although those studies evaluated the representation as a whole and did not compare combinator choices against one another. Second, the comparison most relevant here---between combinators within a single structural equivalence class---is one where the candidates are provably close: members of the class produce identical or near-identical paths (Section~\ref{sec:exp5}), so there is little mathematical room for a quality difference. Whether a human-perceptible difference nonetheless exists between mathematically near-equivalent combinators is a well-posed question in its own right, and a controlled human study of \emph{that} question is the natural next step; a full human-evaluation study of narrative quality is otherwise beyond the scope of a paper whose contribution is the formalization of the metric rather than a new extraction method.

\textbf{Prompt-domain trade-off.} The domain-agnostic prompt (``document sequence coherence'') is a deliberate choice: it keeps Cuba and VisPub comparable under one criterion, which is essential for the cross-corpus consistency check, but may under-reward qualities specific to a domain (e.g., story-arc structure in news, logical-progression structure in research paper sequences). A domain-specific prompt per corpus would break this comparability, and since the metric itself is domain-agnostic ($C = \sqrt{A \cdot T}$), an agnostic yardstick matches what the metric is trying to optimize. A per-domain evaluation study is natural future work for benchmark-style comparisons.

\textbf{Single coherence definition.} We analyze the specific composite metric $C = \sqrt{A \cdot T}$ from the Narrative Maps framework \citep{keith2020narrative}. The information-geometric framework applies whenever document similarity combines an embedding-space component with a probabilistic topic component, but we do not empirically compare against coherence definitions from other narrative extraction systems (e.g., word-influence \citep{shahaf2010connecting}, KL-based \citep{li2013evolutionary}).

\textbf{Dependence on upstream representations.} The theoretical results hold for any embedding and any soft topic assignment. The embedding sensitivity analysis (Supplementary Materials, Section~S3) shows that GM rank correlation and triangle inequality violations are stable across three embedding architectures on two corpora, though channel correlation varies with embedding choice ($\rho_{A,T} \in [0.23, 0.29]$). The number of topic clusters has a stronger effect: the COVID corpus ($K = 2$) exhibits $7.6\%$ triangle inequality violations for $1 - C_{\GM}$, with the rate dropping monotonically as the cluster count increases. The practical recommendation is $K \ge 4$ clusters, which was sufficient in all other corpora tested. The clustering pipeline (UMAP hyperparameters, HDBSCAN configuration) was not varied independently of the cluster count.

\textbf{Corpus scope.} While the corpora used span news, academic, and encyclopedic text (40 to 6{,}000 documents) and three embedding architectures, all are English-language. Generalization to other languages or modalities is untested.

\section{Conclusions}\label{sec:conclusions}

We have given an information-geometric reading of composite coherence metrics in event-based narrative extraction. On the product Riemannian manifold $\Sphere^{d-1}\!\times\!\Simplex^{K-1}_+$ the negative log-coherence decomposes additively into an angular and a topic cost; because the Jensen-Shannon distance induces a Riemannian tensor proportional to the Fisher information matrix, the topic component is locally consistent with the Fisher-Rao geometry that Chentsov's theorem singles out, and the same construction motivates the proper product metric $d_\times$ used as a reference distance. The central conclusion is a convergence: two arguments resting on different stated premises---the four operational axioms of Theorem~\ref{thm:axioms} and the design premise that incoherent transitions are the more channel-imbalanced (Proposition~\ref{prop:gap-profile})---independently single out the geometric mean within the compensability spectrum, while the remaining combinators (AM, HM, Quad, Min, Max) each occupy a distinct, characterized position and the Quad combinator $1-d_\times^2$ is recovered as the similarity-domain face of $d_\times$. The empirical study is consistent with this picture: across four corpora the metric-level experiments (Section~\ref{sec:experiments}) confirm the first-order Fisher approximation, the geometric mean's tracking of the product metric, and the non-redundancy of the two channels, while the supporting analyses establish robustness to the embedding model, the topic model, and the cluster count (Supplementary Materials). The downstream consistency check finds no alternative combinator or single-channel baseline that dominates the geometric mean on holistic quality, and the bottleneck-gap profile (Section~\ref{sec:exp7})---extended to a human-navigation corpus and, in a cross-modal case study, to a human-curated image narrative (Supplementary Materials, Section~S8)---shows the design premise holding throughout.

These results together provide an information-geometric justification for the composite coherence metric of \citep{keith2020narrative, german2025narrative} and articulate the conditions under which the geometric-mean combinator is the natural choice. The framework applies whenever document similarity combines a fine-grained embedding-space component with a coarse-grained probabilistic topic component under the design principle of \emph{scale complementarity}, and it illustrates how information-geometric tools (Chentsov's theorem, the Fisher-Rao metric, product Riemannian manifolds, and the Jensen-Shannon distance) can clarify the structure of existing constructions in the information sciences without claiming optimality beyond what the stated assumptions support.

Several questions remain open. The downstream comparison shows that combinators within a single structural equivalence class are indistinguishable to current LLM judges; whether a human-perceptible quality difference nonetheless exists between mathematically near-equivalent combinators is a well-posed question that a controlled human study could settle. Generalization beyond English-language text---to other languages, to modalities beyond the single image-narrative case study, and to coherence definitions from other extraction systems---likewise remains untested. More broadly, scale complementarity as a design principle, and the bottleneck-gap profile as a diagnostic of where a combinator sits on the compensability spectrum, may transfer to other settings in which a similarity combines a fine-grained geometric component with a coarse-grained distributional one; characterizing that transfer is natural future work.


\section*{Supplementary Materials}
The following supporting analyses are available in the Supplementary Materials appended to this preprint (Sections~S1--S9): Section~S1, validation on alternative topic models (LDA, soft $k$-means, GMM); Section~S2, cluster-count sensitivity; Section~S3, embedding-model sensitivity; Section~S4, a component ablation isolating the contributions of soft cluster membership and of the Jensen-Shannon distance; Section~S5, the per-seed breakdown of the embedding-perturbation experiment (Section~\ref{sec:exp6}); Section~S6, a heuristic rate-distortion analogy for the geometric mean; Section~S7, a path-level bootstrap quantifying the sampling uncertainty of the Experiment~7 bottleneck-gap profile on the smallest corpus; Section~S8, a cross-modal case study applying the bottleneck-gap analysis to a human-curated image narrative (the ROGER corpus); and Section~S9, a validation of the geometric-mean veto property under the cardinality-constrained Narrative Maps linear-programming extraction.

\section*{Funding}
This research is funded by the ANID FONDECYT 11250039 Project. The corresponding author is also supported by Project 202311010033-VRIDT-UCN.

\section*{Data Availability}
This preprint contains the manuscript and its supporting analyses
(Sections~S1--S9) only. The data and code used in the experiments are provided
with the version of record published in \emph{Entropy} (MDPI), as its
Supplementary Materials.

\section*{Acknowledgments}
During the preparation of this work, the author used Claude to refine writing and support literature review activities. Additionally, Writefull integrated in Overleaf was used to improve writing quality and readability. After using these tools/services, the author reviewed and edited the content as needed and takes full responsibility for the content of the article.

\section*{Conflicts of Interest}
The author declares no conflicts of interest.

\section*{Abbreviations}
The following abbreviations are used in this manuscript:
\\

\noindent
\begin{tabular}{@{}ll}
JSD & Jensen-Shannon divergence\\
KL & Kullback-Leibler divergence\\
FR & Fisher-Rao\\
GM & Geometric mean\\
AM & Arithmetic mean\\
HM & Harmonic mean\\
NMI & Normalized mutual information\\
LLM & Large language model\\
LDA & Latent Dirichlet Allocation\\
GMM & Gaussian Mixture Model\\
UMAP & Uniform Manifold Approximation and Projection\\
HDBSCAN & Hierarchical Density-Based Spatial Clustering of Applications with Noise\\
DPI & Data processing inequality\\
LP & Linear program\\
MST & Minimum spanning tree
\end{tabular}

\bibliographystyle{unsrtnat}
\bibliography{references}

\clearpage
\setcounter{section}{0}
\setcounter{table}{0}
\setcounter{figure}{0}
\setcounter{equation}{0}
\renewcommand{\thesection}{S\arabic{section}}
\renewcommand{\thesubsection}{S\arabic{section}.\arabic{subsection}}
\renewcommand{\thetable}{S\arabic{table}}
\renewcommand{\thefigure}{S\arabic{figure}}
\renewcommand{\theequation}{S\arabic{equation}}

\begin{center}
{\Large\bfseries Supplementary Materials}\\[2pt]
{\normalsize An Information-Geometric Justification for Composite Coherence in Event-Based Narrative Extraction}
\end{center}
\vspace{0.5em}

\noindent
This document collects supporting analyses for the manuscript
``An Information-Geometric Justification for Composite Coherence in
Event-Based Narrative Extraction.'' These analyses extend the empirical
validation of the main text but are not required for its central
argument; they are provided here to keep the main manuscript focused.
Sections~\ref{sec:s-alt-topic} and~\ref{sec:s-kdep} verify that the
information-geometric properties are robust to the choice of topic
model and to the cluster count. Section~\ref{sec:s-embed} reports the
embedding-model sensitivity analysis. Section~\ref{sec:s-ablation}
gives a component ablation isolating the contribution of soft cluster
membership and of the Jensen-Shannon \emph{distance}.
Section~\ref{sec:s-perturb} gives the per-seed breakdown of the
embedding-perturbation experiment (Experiment~6 of the main text).
Section~\ref{sec:s-ratedist} records a heuristic rate-distortion
analogy for the geometric mean. Section~\ref{sec:s-gapboot} reports a
path-level bootstrap quantifying the sampling uncertainty of the
Experiment~7 bottleneck-gap profile. Section~\ref{sec:s-roger} presents
a cross-modal case study, applying the bottleneck-gap analysis to a
human-curated image narrative (the ROGER corpus).
Section~\ref{sec:s-veto} validates the geometric-mean veto property under
the cardinality-constrained Narrative Maps linear program, supporting the
combinator-selection discussion of the main text. Cross-references of
the form ``the main text'' point to the main manuscript.

\section{Alternative Topic Models}\label{sec:s-alt-topic}
The theoretical results of the main text apply to any soft topic assignment, not just HDBSCAN. Any method that produces a probability vector $\hat{e}_i\in\Simplex^{K-1}_+$ for each document (e.g., LDA topic proportions, soft $k$-means memberships, neural topic model outputs) yields a topic similarity $T = 1-d_{\mathrm{JS}}$ with the same information-geometric properties. To verify this empirically, we replace the UMAP$\to$HDBSCAN pipeline with three alternative topic models (LDA, soft $k$-means, and Gaussian Mixture Models (GMM)) at $K \in \{3, 5, 6, 12, 24\}$ on the Cuba corpus, and evaluate the same metric-level properties. Table~\ref{tab:alt-topic} summarizes the results.

\begin{table}[h]
 \centering
 \small
 \caption{Alternative topic models on Cuba ($n = 418$). The Fisher identity ($R_{\mathrm{Fisher}}$), GM rank correlation with $d_\times$ ($\rho_{\GM}$), triangle inequality violations, and scale complementarity ($\rho_{A,T}$, NMI; lower values indicate greater non-redundancy) are reported for each method and cluster count. The $K \in \{3, 5\}$ rows test whether smoother (lower-$K$) GMM posteriors preserve $\rho_{\GM}$.\label{tab:alt-topic}}
 \begin{tabular}{llrrrrr}
  \toprule
  \textbf{Method} & $K$ & $R_{\mathrm{Fisher}}$ & $\rho_{\GM}$ & Viol.\ (\%) & $\rho_{A,T}$ & NMI \\
  \midrule
  HDBSCAN (baseline) & 11 & 0.994 & 0.999 & 0.0 & 0.283 & 0.027 \\
  \addlinespace
  LDA        &  3 & 0.996 & 0.999 & 0.01 & 0.165 & 0.010 \\
  LDA        &  5 & 0.996 & 1.000 & 0.03 & 0.179 & 0.011 \\
  LDA        &  6 & 0.995 & 1.000 & 0.01 & 0.227 & 0.016 \\
  LDA        & 12 & 0.995 & 1.000 & 0.0  & 0.142 & 0.009 \\
  LDA        & 24 & 0.997 & 1.000 & 0.0  & 0.198 & 0.017 \\
  \addlinespace
  Soft $k$-means &  3 & 1.000 & 0.969 & 0.0 & 0.512 & 0.059 \\
  Soft $k$-means &  5 & 1.000 & 0.974 & 0.0 & 0.485 & 0.058 \\
  Soft $k$-means &  6 & 1.000 & 0.975 & 0.0 & 0.536 & 0.071 \\
  Soft $k$-means & 12 & 1.000 & 0.984 & 0.0 & 0.610 & 0.096 \\
  Soft $k$-means & 24 & 1.000 & 0.984 & 0.0 & 0.585 & 0.090 \\
  \addlinespace
  GMM        &  3 & 0.997 & 0.980 & 2.01 & 0.215 & 0.027 \\
  GMM        &  5 & 0.997 & 0.912 & 1.30 & 0.159 & 0.019 \\
  GMM        &  6 & 0.997 & 0.838 & 0.78 & 0.161 & 0.023 \\
  GMM        & 12 & 0.998 & 0.677 & 0.24 & 0.457 & 0.042 \\
  GMM        & 24 & 1.000 & 0.434 & 0.01 & 0.378 & 0.037 \\
  \bottomrule
 \end{tabular}
\end{table}

\textbf{Takeaway.} LDA and HDBSCAN, which produce smooth (high-entropy) membership vectors, preserve the product-manifold properties across the full $K$ range. GMM exhibits a $K$-dependent trade-off: at low $K$ ($K = 3$) its $\rho_{\GM} = 0.980$ approaches the LDA / soft $k$-means level, but with $2.0\%$ triangle inequality violations on $1 - C_{\GM}$; as $K$ increases, the violations vanish ($0.01\%$ at $K = 24$) while $\rho_{\GM}$ degrades monotonically to $0.43$. The two failure modes bracket the usable GMM range to a narrow neighborhood of $K = 3$--$5$; LDA and HDBSCAN remain clean across the full range.

The Fisher identity holds across all topic models tested ($R \ge 0.99$, and $R \ge 0.995$ for the three alternatives), indicating that this property depends only on the JSD-Fisher connection. The GM rank correlation with $d_\times$ is near-perfect for LDA at every $K$ tested ($\rho \ge 0.999$) and high for soft $k$-means ($\rho \ge 0.969$), but for GMM the $K$-dependence is monotonic and steep: $\rho_{\GM}$ falls from $0.980$ at $K = 3$ to $0.434$ at $K = 24$. The degradation occurs because Gaussian mixture posteriors at high $K$ produce sharply peaked membership vectors that concentrate near simplex vertices, approaching the hard-assignment regime where the Fisher-Rao metric tensor degenerates (cf.\ the soft-membership remark of the main text). Low-$K$ GMM posteriors are smoother and preserve $\rho_{\GM}$ at the cost of higher triangle-violation rates on $1 - C_{\GM}$: at $K = 3$, $2.01\%$ of triplets violate, vs.\ $0.01\%$ at $K = 24$. The framework's practical utility therefore requires membership distributions with sufficient entropy: models whose membership vectors have low entropy (effectively hard clustering at large $K$) violate the smooth-manifold assumption underlying the Fisher-Rao connection. LDA's Dirichlet prior naturally smooths the membership vectors, and HDBSCAN's density-based soft assignments achieve similar smoothness. We recommend topic models that produce smooth membership distributions (LDA, HDBSCAN) across the full $K$ range; GMM is acceptable only at low $K$ where its posteriors remain smooth, and even there with elevated triangle-violation rates. Scale complementarity varies across these models: LDA maintains strong non-redundancy ($\rho_{A,T} \le 0.23$, NMI $\le 0.017$) at every $K$, while soft $k$-means shows moderate correlation ($\rho_{A,T} \approx 0.48$ to $0.61$), reflecting its tendency to retain more fine-grained information.

\section{Cluster-Count Sensitivity}\label{sec:s-kdep}
The number of topic clusters controls the granularity of the topic component. A hyperparameter sweep varying HDBSCAN's \texttt{min\_cluster\_size} on both Cuba and VisPub confirms that the cluster count is the most impactful pipeline parameter. Table~\ref{tab:k-dependence} reports the results.

\begin{table}[h]
 \centering
 \small
 \caption{Cluster count sensitivity. HDBSCAN sweep varying \texttt{min\_cluster\_size} on Cuba and VisPub (VisPub subsampled to 500 documents for the sweep). $w^\ast$ is the optimal angular-topic weight on the weighted geometric mean $C_w = A^w T^{1-w}$ ($0.5$ = equal weighting; $w$ is the channel weight, not the compensability exponent $\alpha$ of the main text); Cohen's $d$ measures the effect size of deviating from $0.5$; CI is the $95\%$ bootstrap confidence interval for $w^\ast$. Cuba at \texttt{mcs}${}=50$ is omitted because HDBSCAN fails to produce a valid clustering ($K < 2$).\label{tab:k-dependence}}
 \begin{tabular}{llrrrr}
  \toprule
  \textbf{Corpus} & \texttt{mcs} & $K$ & $w^\ast$ & CI & $d$ \\
  \midrule
  Cuba &  3 & 31 & 0.45 & [0.40, 0.45] & 0.25 \\
  Cuba &  5 & 14 & 0.50 & [0.50, 0.50] & 0.00 \\
  Cuba &  8 & 12 & 0.45 & [0.40, 0.55] & 0.07 \\
  Cuba & 12 &  2 & 0.55 & [0.55, 0.60] & 0.32 \\
  Cuba & 20 &  4 & 0.55 & [0.45, 0.55] & 0.06 \\
  Cuba & 30 &  2 & 0.55 & [0.55, 0.60] & 0.51 \\
  \addlinespace
  VisPub &  3 & 41 & 0.50 & [0.45, 0.55] & 0.00 \\
  VisPub &  5 & 21 & 0.45 & [0.45, 0.50] & 0.08 \\
  VisPub &  8 & 11 & 0.50 & [0.45, 0.55] & 0.00 \\
  VisPub & 12 &  2 & 0.50 & [0.45, 0.60] & 0.00 \\
  VisPub & 20 &  2 & 0.50 & [0.50, 0.65] & 0.00 \\
  VisPub & 30 &  2 & 0.55 & [0.55, 0.60] & 0.28 \\
  VisPub & 50 &  2 & 0.60 & [0.55, 0.65] & 0.42 \\
  \bottomrule
 \end{tabular}
\end{table}

\textbf{Takeaway.} The optimal weight $w^\ast$ is centered on the equal-weight value $0.5$ across all but the $K = 2$ boundary configurations, where the topic channel collapses to a binary same/different signal.

Across the sweep, $w^\ast$ stays within $[0.45, 0.60]$ on both corpora, with the $95\%$ bootstrap CIs covering $0.5$ in $8$ of the $13$ configurations. Four of the five exceptions are the degenerate $K = 2$ configurations, and the fifth (Cuba \texttt{mcs}${}=3$, $K = 31$) misses \emph{low} ($w^\ast = 0.45$); the deviations from $0.5$ thus occur exactly where the cluster count departs most from the mid-range, consistent with the $\rho(K, w^\ast) = -0.85$ trend. Effect sizes are modest ($d \le 0.51$) and strongest at the degenerate $K = 2$ regime. Cuba at \texttt{mcs}${}=30$ ($K=2$) gives $w^\ast = 0.55$ with $d = 0.51$, and VisPub at \texttt{mcs}${}=50$ ($K=2$) gives $w^\ast = 0.60$ with $d = 0.42$. Away from the $K=2$ boundary, $w^\ast$ tracks $0.5$ closely on both corpora (Cuba $K > 2$: $d \le 0.25$; VisPub $K > 2$: $d \le 0.08$), consistent with the equal-weight structure of the geometric mean. The Spearman correlation $\rho(K, w^\ast) = -0.85$ on Cuba ($p = 0.034$) and $-0.61$ on VisPub ($p = 0.147$) confirms that cluster count is the dominant sweep parameter.

\section{Embedding-Model Sensitivity}\label{sec:s-embed}
The cross-corpus validation of the main text (Experiment~4) uses different corpora with a single embedding model each. To disentangle the effect of the embedding model from the corpus, we re-embed both the Cuba ($n = 418$) and VisPub ($n = 3{,}549$) corpora using two additional models, MiniLM-L6 ($d = 384$) and MPNet ($d = 768$), and recompute the full pipeline (UMAP$\to$HDBSCAN$\to$coherence) for each. Table~\ref{tab:embedding-sensitivity} reports the key metric-level properties across all three embedding models on both corpora.

\begin{table}[h]
 \centering
 \small
 \caption{Embedding model sensitivity on Cuba and VisPub. The metric-level properties (GM rank correlation with $d_\times$, triangle inequality violations) are stable across three embedding models of different dimensionality and architecture. $K$ is the per-config cluster count (an emergent output of HDBSCAN, not a chosen value). Bracketed values are $95\%$ bootstrap confidence intervals on $\rho_{A,T}$ from $B = 200$ document-level resampling iterations (seed $42$, excluding same-document self-pairs); $\rho_{\GM}$ half-widths are within $\pm 0.002$ of the point estimates. Triangle violation counts are out of $10^5$ sampled triplets; the Wilson $95\%$ rate intervals are reported separately below.\label{tab:embedding-sensitivity}}
 \begin{tabular}{llrrrrr}
  \toprule
  \textbf{Corpus} & \textbf{Embedding} & $d$ & $K$ & $\rho_{\GM}$ & Viol.\ (\%) & $\rho_{A,T}$ \;\; (95\% CI) \\
  \midrule
  Cuba & GPT-4 (original) & 1536 & 11 & 0.999 & 0.000 & 0.283 \;\; {\scriptsize $[0.231,\;0.337]$} \\
  Cuba & MiniLM-L6        & 384  & 12 & 0.999 & 0.000 & 0.294 \;\; {\scriptsize $[0.223,\;0.355]$} \\
  Cuba & MPNet            & 768  & 18 & 0.998 & 0.000 & 0.232 \;\; {\scriptsize $[0.174,\;0.290]$} \\
  \addlinespace
  VisPub & GPT-4 (original) & 1536 & 113 & 0.992 & 0.000 & 0.234 \;\; {\scriptsize $[0.214,\;0.255]$} \\
  VisPub & MiniLM-L6        & 384  & 100 & 0.990 & 0.000 & 0.263 \;\; {\scriptsize $[0.240,\;0.281]$} \\
  VisPub & MPNet            & 768  & 101 & 0.990 & 0.000 & 0.267 \;\; {\scriptsize $[0.246,\;0.290]$} \\
  \bottomrule
 \end{tabular}

 \smallskip
 \noindent{\footnotesize Wilson $95\%$ intervals for the triangle-violation rates: all six configurations $[0.000,\,0.004]\%$ ($0$ violations in $10^5$ sampled triplets).}
\end{table}

\textbf{Takeaway.} The GM rank correlation with $d_\times$ and the triangle inequality violations are stable across embedding architectures; only the channel correlation $\rho_{A,T}$ varies, and it stays in the low-correlation regime expected by scale complementarity.

GM rank correlation with $d_\times$ remains $\rho \ge 0.99$ across all six corpus-embedding combinations, and the triangle-inequality violation rate of $1 - C_{\GM}$ is $0\%$ in every case ($0$ of $10^5$ sampled triplets; Wilson $95\%$ upper bound $0.004\%$). Every re-embedding clusters well above the degenerate regime ($K$ ranging from $11$ to $113$), so the only substantial violation rate ($7.6\%$) remains that of the degenerate $K = 2$ COVID configuration. Channel correlation varies across configurations ($\rho_{A,T} \in [0.232, 0.294]$) but remains low in all cases, consistent with scale complementarity across embedding architectures. The highest value ($\rho = 0.294$ for Cuba with MiniLM-L6) still corresponds to less than $9\%$ shared variance ($\rho^2 = 0.086$), indicating that even with smaller embedding dimensions the channels remain non-redundant. The framework does not require low channel correlation, so this variation across embeddings does not affect the validity of the geometric-mean construction. The Max combinator remains the clear outlier ($\rho = 0.46$ to $0.62$), confirming that its failure to track the product geometry is not an artifact of a particular embedding space.

\section{Component Ablation}\label{sec:s-ablation}
The combinator comparison of the main text (Experiment~2) ablates the choice of combinator. This section ablates two further design choices that the framework relies on, holding the Cuba GPT-4 corpus and the angular channel fixed:
\begin{itemize}[leftmargin=*, nosep, topsep=2pt]
 \item \textbf{(C1) Soft vs.\ hard cluster membership.} We replace the soft HDBSCAN membership vectors with their hard (one-hot) counterparts, assigning each document entirely to its most probable cluster.
 \item \textbf{(C2) Jensen-Shannon distance vs.\ divergence.} We replace the topic dissimilarity $d_{\mathrm{JS}} = \sqrt{\JSD}$ with the divergence $\JSD$ itself.
\end{itemize}
For each variant we recompute the Fisher identity correlation $R_{\mathrm{Fisher}}$, the GM rank correlation with the product metric $\rho_{\GM}$, the triangle-inequality violation rate of $1 - C_{\GM}$ on $10^5$ sampled triplets, and the mean base-2 Shannon entropy of the membership vectors. Table~\ref{tab:component-ablation} reports the results.

\begin{table}[h]
 \centering
 \small
 \caption{Component ablation on Cuba GPT-4 ($n = 418$, $K = 11$). The baseline is the framework as specified in the main text (soft membership, Jensen-Shannon distance). C1 hardens the membership vectors; C2 substitutes the Jensen-Shannon divergence for the distance. $R_{\mathrm{Fisher}}$ is degenerate under hard membership: the Fisher-Rao distance collapses to a binary signal at the simplex vertices, so the Fisher-identity correlation is not meaningful (rather than literally undefined).\label{tab:component-ablation}}
 \begin{tabular}{lrrrr}
  \toprule
  \textbf{Variant} & $R_{\mathrm{Fisher}}$ & $\rho_{\GM}$ & Viol.\ (\%) & $\bar H$ (bits) \\
  \midrule
  Baseline (soft membership, $d_{\mathrm{JS}}$) & 0.994 & 0.999 & 0.00 & 1.92 \\
  C1: hard one-hot membership                   & ---   & 0.650 & 0.00 & 0.00 \\
  C2: $\JSD$ divergence instead of $d_{\mathrm{JS}}$ & 0.990 & 0.998 & 7.49 & 1.92 \\
  \bottomrule
 \end{tabular}
\end{table}

\textbf{Takeaway.} Both design choices are load-bearing. Hardening the membership vectors collapses the GM rank correlation with the product metric from $0.999$ to $0.650$; using the Jensen-Shannon divergence in place of the distance raises the triangle-inequality violation rate from $0\%$ to $7.5\%$.

\textbf{C1: soft membership.} Hard one-hot assignment drives the mean membership entropy to exactly $0$ bits: every document sits at a simplex vertex. This is the regime where the Fisher-Rao metric tensor degenerates ($1/p_k \to \infty$ as $p_k \to 0$), so $R_{\mathrm{Fisher}}$ is degenerate rather than informative. The topic similarity collapses to a binary same-cluster / different-cluster signal, and the rank correlation of $C_{\GM}$ with the product metric $d_\times$ falls from $0.999$ to $0.650$. The soft membership of the main pipeline is therefore not a cosmetic choice: it is what places the topic channel in the interior of the statistical manifold where the product-manifold interpretation holds. The $0\%$ triangle-violation rate under one-hot membership is an empirical rather than a forced property: a cross-cluster pair has $d_{\mathrm{JS}} = 1$ and hence $D_{\GM} = 1$, so any triangle containing a cross-cluster edge satisfies the inequality automatically, and only same-cluster triples can violate it---whether they do is a corpus-contingent property of the angular channel, not a theoretical guarantee.

\textbf{C2: Jensen-Shannon distance.} Substituting the divergence $\JSD$ for the distance $d_{\mathrm{JS}} = \sqrt{\JSD}$ leaves the rank correlations almost unchanged ($\rho_{\GM}$ from $0.999$ to $0.998$), because squaring is a monotone transformation and rank-based summaries are nearly insensitive to it. The metric structure, however, is not: $\JSD$ is not a metric (only its square root is), so the triangle-inequality violation rate of $1 - C_{\GM}$ rises from $0\%$ to $7.5\%$. This is the empirical counterpart of the design argument in the main text for using the Jensen-Shannon \emph{distance}: the distance is what makes the topic channel a proper metric and keeps the product metric $d_\times$ well-defined.

\section{Per-Seed Embedding-Perturbation Results}\label{sec:s-perturb}
This section complements Experiment~6 of the main text. Table~\ref{tab:perturbation-detail} reports the per-seed values of $K$, $R_{\mathrm{Fisher}}$, $\rho_{\GM}$, the stability $\rho_{\mathrm{stab}}(C_{\GM}^{\text{noisy}}, C_{\GM}^{\text{clean}})$, and the triangle-inequality violation rate for $1 - C_{\GM}$ on Cuba GPT-4 ($n = 418$). The fixed-$K$ block is deterministic up to the precision of the floating-point pipeline: the angular-channel noise leaves the rank ordering of $C_{\GM}$ exactly preserved at all three magnitudes. The floating-$K$ block exposes the clustering instability discussed in the main text: at $\sigma \in \{0.01, 0.05\}$ one of three seeds collapses the cluster count to $K = 2$, the regime where the topic channel degenerates to a binary signal; the other seeds produce $K$ in the range $13$ to $16$, close to the clean $K = 11$.

\begin{table}[h]
 \centering
 \small
 \caption{Per-seed embedding-perturbation results on Cuba GPT-4 ($n = 418$). Three seeds per cell. Clean reference: $K = 11$, $R_{\mathrm{Fisher}} = 0.994$, $\rho_{\GM} = 0.999$, $\rho_{A,T} = 0.283$, $0.00\%$ triangle violations. The final column gives the per-seed channel correlation $\rho_{A,T}$ whose seed means are reported for Experiment~6 in the main text. The fixed-$K$ angular-ranking stability---stab$(A) = 0.999995, 0.999886, 0.999543$ at $\sigma = 0.01, 0.05, 0.10$ (mean over seeds)---confirms the perturbation reaches the angular channel even though the $C_{\GM}$ ranking is held at stab $= 1.0000$.\label{tab:perturbation-detail}}
 \begin{tabular}{l@{\hspace{0.6em}}lrrrrrrr}
  \toprule
  \textbf{Mode} & $\sigma$ & seed & $K$ & $R_{\mathrm{Fisher}}$ & $\rho_{\GM}$ & stab $C_{\GM}$ & Viol.\ (\%) & $\rho_{A,T}$ \\
  \midrule
  fixed-$K$    & 0.01 & 42 & 11 & 0.9943 & 0.9988 & 1.0000 & 0.000 & 0.2826 \\
  fixed-$K$    & 0.01 & 43 & 11 & 0.9943 & 0.9988 & 1.0000 & 0.000 & 0.2826 \\
  fixed-$K$    & 0.01 & 44 & 11 & 0.9943 & 0.9988 & 1.0000 & 0.000 & 0.2826 \\
  fixed-$K$    & 0.05 & 42 & 11 & 0.9943 & 0.9988 & 1.0000 & 0.000 & 0.2824 \\
  fixed-$K$    & 0.05 & 43 & 11 & 0.9943 & 0.9988 & 1.0000 & 0.000 & 0.2822 \\
  fixed-$K$    & 0.05 & 44 & 11 & 0.9943 & 0.9988 & 1.0000 & 0.000 & 0.2826 \\
  fixed-$K$    & 0.10 & 42 & 11 & 0.9943 & 0.9988 & 1.0000 & 0.000 & 0.2821 \\
  fixed-$K$    & 0.10 & 43 & 11 & 0.9943 & 0.9988 & 1.0000 & 0.000 & 0.2817 \\
  fixed-$K$    & 0.10 & 44 & 11 & 0.9943 & 0.9988 & 1.0000 & 0.000 & 0.2824 \\
  \addlinespace
  floating-$K$ & 0.01 & 42 & 16 & 0.9954 & 0.9986 & 0.6003 & 0.001 & 0.2553 \\
  floating-$K$ & 0.01 & 43 &  2 & 0.9970 & 0.9555 & 0.2788 & 0.534 & 0.1163 \\
  floating-$K$ & 0.01 & 44 & 15 & 0.9955 & 0.9983 & 0.6599 & 0.006 & 0.2985 \\
  floating-$K$ & 0.05 & 42 &  2 & 0.9963 & 0.9626 & 0.2380 & 0.563 & 0.0896 \\
  floating-$K$ & 0.05 & 43 & 14 & 0.9952 & 0.9989 & 0.6210 & 0.001 & 0.2440 \\
  floating-$K$ & 0.05 & 44 & 14 & 0.9950 & 0.9987 & 0.6039 & 0.003 & 0.2585 \\
  floating-$K$ & 0.10 & 42 & 13 & 0.9951 & 0.9984 & 0.6306 & 0.001 & 0.2958 \\
  floating-$K$ & 0.10 & 43 & 14 & 0.9952 & 0.9987 & 0.6082 & 0.000 & 0.3293 \\
  floating-$K$ & 0.10 & 44 & 16 & 0.9958 & 0.9979 & 0.5908 & 0.004 & 0.2891 \\
  \bottomrule
 \end{tabular}
\end{table}

The pattern is consistent with the discussion in the main text: where $K$ stays near the clean value, all metric-level quantities are close to the clean reference; the only seeds that produce visible degradation in $\rho_{\GM}$ are the two cases where the perturbation pushes HDBSCAN into the $K = 2$ regime, and these are isolated to the smaller noise magnitudes $\sigma \in \{0.01, 0.05\}$ where the clustering boundary apparently sits closer to the clean configuration. These same two seeds also carry the lowest channel correlation ($\rho_{A,T} = 0.116$ at $\sigma = 0.01$ and $0.090$ at $\sigma = 0.05$, the degenerate two-cluster value), so the floating-$K$ seed-mean $\rho_{A,T}$ reported in the main text (Experiment~6) dips at those two magnitudes and is highest at $\sigma = 0.10$---the only magnitude at which no seed collapses to $K = 2$. The framework's robustness statement is therefore conditional: it holds at the metric level, given a stable clustering.

\section{A Heuristic Rate-Distortion Analogy for the Geometric Mean}\label{sec:s-ratedist}
This section records the rate-distortion analogy referenced in the main text's case for the geometric mean (Section~4, on its log-additivity and scale-invariance properties). It is a heuristic information-theoretic reading of the log-additive cost structure, not a derivation: the geometric mean is already characterized in the main text as the unique combinator consistent with the four stated axioms, and the analogy below merely offers a parallel information-theoretic motivation for the same conclusion.

Suppose, hypothetically, that the transition from event $d_i$ to event $d_j$ is modeled as the joint transmission of two source components, one carrying semantic content and one carrying topical structure, with two simplifying assumptions:
\begin{enumerate}[leftmargin=*, nosep, topsep=2pt]
 \item[(i)] The two source components are independent, so the total rate decomposes as $R = R_A + R_T$.
 \item[(ii)] The per-component rate-distortion functions take the logarithmic form $R_c(D_c) = -\log D_c + \mathrm{const}$.
\end{enumerate}
Under these assumptions, the joint rate-distortion function is $R(D_A, D_T) = R_A(D_A) + R_T(D_T)$, and the iso-rate curves in $(D_A, D_T)$ space are level sets of $-\log D_A - \log D_T = \mathrm{const}$, i.e., $D_A \cdot D_T = \mathrm{const}$. A scalar summary of the distortion along such a curve must therefore be a function of $D_A \cdot D_T$ alone, and the choice $D = (D_A \cdot D_T)^{1/2}$ is the unique such summary satisfying the calibration $D = a$ when $D_A = D_T = a$, which is the same calibration as the normalization axiom of the main text. Because $A$ and $T$ are \emph{similarities}, the distortion variable $D_c$ here is fidelity-like and $R_c(D_c) = -\log D_c$ is \emph{decreasing} in it: the analogy has the form of the Gaussian rate-distortion curve with the distortion axis reversed.

Neither assumption is a factual claim about the real channels. Assumption~(i) requires strict statistical independence, which is structurally impossible in our setting because the topic channel is a deterministic function of the embedding channel via the UMAP$\to$HDBSCAN pipeline. Scale complementarity shows that the channels are nevertheless \emph{approximately} independent in practice ($\mathrm{NMI} = 0.027$), and this is the only sense in which we use Assumption~(i). Assumption~(ii) is motivated by consistency with the log-additive cost structure rather than by first principles. The exact log-additive identity $-\log C_{\GM} = -\tfrac{1}{2}\log A - \tfrac{1}{2}\log T$ holds at any correlation level and at any choice of underlying distributions, so the analogy above is best read as a conceptual motivation for the log-additive structure rather than as an independent argument for the geometric mean. The primary justification for the geometric mean in the main text is the axiomatic characterization.

\section{Bootstrap Uncertainty for the Bottleneck-Gap Profile}\label{sec:s-gapboot}
This section complements Experiment~7 of the main text. The bottleneck-gap profile is validated by replication across five corpora; within each corpus we additionally quantify sampling uncertainty with a path-level bootstrap: the paired endpoint-pair index is resampled with replacement ($B = 2{,}000$), and on each resample we recompute the gap at the geometric mean $g(0)$, the Taylor coefficients $g'(0)$ and $g''(0)$, the leading-order predicted peak $\alpha^\ast_{\mathrm{pred}} = -g'(0)/g''(0)$, and the observed argmax $\alpha^\ast_{\mathrm{obs}}$. Table~\ref{tab:s-gapboot-all} reports the $2.5$/$97.5$ percentile intervals for the gap $g(0)$, its first two Taylor coefficients, and the observed peak on all four narrative corpora (Wikispeedia, with $10{,}832$ human paths, is by far the largest sample and its point estimates are correspondingly the tightest; it is omitted here only because the bootstrap was run on the four narrative corpora). On every corpus both $g(0) > 0$ and $g''(0) < 0$ hold across the entire $95\%$ interval, so ``the narrative trail is more coherent than random'' and ``the profile is concave at the geometric mean, hence has an interior maximum'' are not point-estimate artifacts. The smallest corpus, COVID ($40$ documents, $169$ endpoint pairs), is the only one where the magnitudes are loosely determined, and we detail it in Table~\ref{tab:s-gapboot}.

\begin{table}[h]
 \centering
 \small
 \caption{Path-level bootstrap ($B = 2{,}000$) of the bottleneck-gap profile on the four narrative corpora of Experiment~7. Intervals are $2.5$/$97.5$ percentiles over the resamples; point estimates match the Experiment~7 table of the main text. The sign of $g(0)$ (trail more coherent than random) and of $g''(0)$ (concavity, hence an interior maximum) is stable across the full interval on every corpus.\label{tab:s-gapboot-all}}
 \begin{tabular}{lrrrr}
  \toprule
  Corpus & $g(0)$ CI & $g'(0)$ CI & $g''(0)$ CI & $\alpha^\ast_{\mathrm{obs}}$ CI \\
  \midrule
  Cuba   & $[+0.518, +0.532]$ & $[-0.072, -0.067]$ & $[-0.078, -0.074]$ & $[-3.0, -2.5]$ \\
  COVID  & $[+0.159, +0.213]$ & $[-0.008, -0.004]$ & $[-0.009, -0.003]$ & $[-6.0, -4.0]$ \\
  VisPub & $[+0.534, +0.551]$ & $[-0.081, -0.074]$ & $[-0.225, -0.212]$ & $[-0.75, -0.5]$ \\
  AMiner & $[+0.542, +0.559]$ & $[-0.083, -0.077]$ & $[-0.243, -0.230]$ & $[-0.5, -0.5]$ \\
  \bottomrule
 \end{tabular}
\end{table}

\begin{table}[h]
 \centering
 \small
 \caption{Path-level bootstrap ($B = 2{,}000$) for the narrative-trail-versus-random bottleneck-gap profile on COVID ($n = 40$, $169$ endpoint pairs), the smallest corpus of Experiment~7. Point estimates are those reported for COVID in the Experiment~7 table of the main text; intervals are $2.5$/$97.5$ percentiles over the resamples.\label{tab:s-gapboot}}
 \begin{tabular}{lrr}
  \toprule
  Quantity & Point estimate & $95\%$ bootstrap CI \\
  \midrule
  $g(0)$                        & $0.186$   & $[\,0.159,\ 0.213\,]$ \\
  $g'(0)$                       & $-0.0062$ & $[-0.0083,\ -0.0042]$ \\
  $g''(0)$                      & $-0.0056$ & $[-0.0085,\ -0.0031]$ \\
  $\alpha^\ast_{\mathrm{pred}}$ & $-1.11$   & $[-2.20,\ -0.60]$ \\
  $\alpha^\ast_{\mathrm{obs}}$  & $-5.0$    & $[-6.0,\ -4.0]$ \\
  \bottomrule
 \end{tabular}
\end{table}

Every sign-level conclusion of Experiment~7 holds across the entire $95\%$ interval, even on this smallest corpus: $g(0) > 0$ (the narrative trail is more coherent than a random chronological sequence), $g''(0) < 0$ (the profile is concave at the geometric mean and so has an interior maximum), and $g'(0) < 0$ together with $\alpha^\ast_{\mathrm{obs}} < 0$ (the peak lies on the compensatory side of the geometric mean). At $169$ endpoint pairs the \emph{magnitudes} are only loosely determined---the observed peak is bracketed to $[-6, -4]$---but the qualitative structure is stable. On the three larger narrative corpora ($418$ to $6{,}000$ documents) the same intervals are far tighter (Table~\ref{tab:s-gapboot-all}); AMiner, the largest, has an observed peak of $-0.5$ in every one of the $2{,}000$ resamples. The five-corpus replication reported in the main text is thus the principal robustness evidence, and the bootstrap confirms that the sign-level conclusions---$g(0) > 0$ and $g''(0) < 0$ on every corpus---survive within-corpus resampling, with only the smallest corpus showing wide magnitude intervals. The wide COVID interval is structural as well as statistical: with its small curvature $g''(0)$ the profile is nearly flat on the compensatory side, so the interior-critical-point guarantee of Proposition~5(c) lapses and the peak is weakly identified independently of sample size.

\section{A Cross-Modal Case Study: Image Narratives (ROGER)}\label{sec:s-roger}
The five corpora of Experiment~7 are all textual. To probe whether the
bottleneck-gap signature of the main text (the Proposition of Section~4.4) is
specific to text or reflects the geometry of the composite metric itself, we
apply the same analysis to a human-curated \emph{image} narrative. We use the
ROGER corpus of German et al.\ (\emph{Semi-Supervised Image-Based Narrative
Extraction: A Case Study with Historical Photographic Records}, ECIR 2025):
$501$ photographs of the 1928 Sacambaya Expedition (Robert Gerstmann Fonds,
Universidad Cat\'olica del Norte), released with expert-curated baseline
timelines---the same human-grounded resource and the same narrative-extraction
task as Experiment~7, in a different modality.

\textbf{Pipeline.} The angular channel $A$ is the cosine similarity of the
repository's DETR ResNet-50 image embeddings, mapped to $1-\arccos(\cdot)/\pi$
exactly as for the text corpora. The topic channel $T$ is the
UMAP$\to$HDBSCAN soft membership on the \emph{same} embeddings (the paper's own
topic channel), with a fixed UMAP seed ($K = 20$ clusters). We deliberately do
\emph{not} build $T$ from the corpus's expert theme labels: semi-supervised
label propagation collapses those labels to a near one-hot membership, which
sits at the simplex vertices where the Fisher--Rao tensor degenerates (cf.\ the
component ablation of Section~\ref{sec:s-ablation}) and vetoes the
theme-crossing transitions that a curated narrative is built from---it is not
the soft simplex geometry the topic factor assumes. Images are ordered into a
temporal DAG by their (label-propagated) expedition month, and storylines are
forward paths in that DAG, as in the main-text bottleneck-gap analysis (Experiment~7).

\textbf{Choice of storylines.} The ROGER corpus provides its expert narrative at
six nested lengths ($5$ to $30$) because the original study compared against
Narrative \emph{Maps}, whose linear program lets one fix the extracted storyline
to a target size and so match it to each expert length. Narrative \emph{Trails}
optimizes a different objective and returns the widest path whose length the
graph dictates---it cannot be length-regulated---so the matched-length design no
longer applies. We therefore adopt the natural alternative: extract the trail
once and compare it against the expert baselines closest to its length. Here the
trail has length $13$ (bottleneck coherence $0.876$, reliability $0.893$), so we
report the length-$15$ and length-$10$ expert timelines, keeping the human
reference and the extracted trail of comparable size. This proximity criterion
is independent of the outcome; as a bonus the two timelines have \emph{distinct}
bottleneck edges---the length-$15$ narrative's weakest link is a theme-crossing,
channel-imbalanced transition ($A = 0.84$, $T = 0.07$), the length-$10$
narrative's is channel-balanced ($A = 0.84$, $T = 0.47$)---so they exercise two
contrasting regimes of the metric.

\textbf{Both expert storylines score above chance.} Between the fixed
(source,\,target) endpoints of each timeline we draw a one-sided Monte Carlo
permutation null of $R = 5{,}000$ length-matched random forward sequences (the
random count is the resolution of the null estimate, not a sample size). The
expert timelines exceed their random null on every comparison---three of the four significant at $0.05$, the length-$15$ bottleneck marginal ($p = 0.053$): at length $10$,
bottleneck coherence $0.629$ versus a null mean of $0.180$ (permutation
$p = 0.021$) and reliability $0.760$ versus $0.399$ ($p = 0.007$); at length
$15$, bottleneck $0.243$ versus $0.112$ ($p = 0.053$) and reliability $0.675$
versus $0.360$ ($p = 0.025$). The extracted maximin trail scores higher still
than either expert timeline, so the ordering is \textsc{Trail} $>$
\textsc{Expert} $>$ \textsc{Random}; that the coherence-optimal extractor
outscores the human curator is expected, since the expert curated for thematic
and historical progression rather than for bottleneck coherence, and the
load-bearing comparison is the non-circular \textsc{Expert}-versus-\textsc{Random}
one.

\textbf{The bottleneck-gap signature transfers, and tracks channel balance.}
Sweeping the compensability exponent and decomposing each gap $g(\alpha)$ into
its even and odd parts (the bottleneck-gap Proposition of the main text)
reproduces the Experiment~7 pattern on both timelines: the even (symmetric) part
is maximized at the geometric mean (swept argmax $\alpha = 0$; closed-form
curvature $g''(0) < 0$), and the design-premise condition
$\mathbb{E}_{\mathrm R}[Y_\alpha] \ge \mathbb{E}_{\mathrm H}[Y_\alpha]$ holds at
every swept $\alpha$. The two regimes differ in \emph{how pronounced} the peak
is, exactly as the channel balance of the bottleneck predicts. For the
balanced-bottleneck length-$10$ narrative the peak is sharp ($g''(0) = -0.16$,
leading-order peak $\alpha^\ast = -0.6$, even part down $22\%$ by
$|\alpha| = 1$); for the imbalanced-bottleneck length-$15$ narrative it is
shallow ($g''(0) = -0.02$, $\alpha^\ast = +3.3$, even part down $60\%$ by
$|\alpha| = 1$ but very flat near the apex). Here the quadratic term alone predicts only an $\approx 8\%$ drop by $|\alpha| = 1$; the remaining decline is carried by higher-order even terms, large because the bottleneck imbalance ($\Delta \approx 2.5$) shrinks the Taylor radius, so the near-flatness holds only for $|\alpha| \lesssim 0.3$. The geometric mean is the symmetric
optimum in both cases; the strength of that optimum scales with how balanced the
narrative's weakest transition is.

\textbf{Scope.} ROGER provides a single expert narrative, so the two reported
lengths overlap and are not statistically independent; this is an illustrative
cross-modal case study, not a powered population test like Wikispeedia
($10{,}832$ independent human paths). It is reported as corroboration that the
even-part-at-the-geometric-mean structure is a property of the composite metric
rather than an artifact of the text corpora. As noted above, the
\textsc{Trail}--\textsc{Expert} comparison is between fixed endpoints rather than
at matched length, since Narrative Trails cannot be length-regulated.

\section{The Veto under Cardinality-Constrained Extraction}\label{sec:s-veto}
The veto property of the geometric mean---$C_{\GM}=\sqrt{A\cdot T}=0$ whenever
$T=0$, i.e.\ a transition between disjoint topic clusters---is, by the
combinator-selection argument of the main text (Discussion), redundant for
unconstrained maximin (Narrative Trails) extraction but load-bearing under the
cardinality-constrained linear program of Narrative Maps. This section quantifies
that contrast.

\textbf{Setup.} On the Cuba corpus ($n=418$, canonical $K=11$ soft clustering) we
draw $30$ source--target pairs (source in the first temporal third, target in the
last; seed $42$) and extract a storyline for each under two regimes and three
combinators: the geometric mean $C_{\GM}$, the product-metric complement
$1-d_\times$, and the arithmetic mean $C_{\AM}$. The first regime is the maximin
(widest-path) extractor of Narrative Trails; the second is the Narrative Maps
linear program, whose \emph{cardinality} constraint
$\sum_i \mathrm{node\_act}_i = K_{\mathrm{LP}}$ fixes the storyline to
$K_{\mathrm{LP}}$ active nodes---the ``expected-length'' constraint of the Maps
LP, and \emph{not} its separate topic-coverage constraint (see the caveats). We
sweep $K_{\mathrm{LP}}$ from $6$ to $25$. The veto acts at graph construction:
edges are instantiated only where coherence is positive, and since $C_{\GM}=0$ on
every disjoint-topic edge, all $6{,}203$ of the $87{,}153$ document pairs with
$T=0$ are absent from the geometric mean's graph. We report how many of the $30$
extracted storylines contain an edge with $T=0$ exactly---the veto's precise
boundary.

\textbf{Maximin: the veto is redundant.} Under the unconstrained widest-path
extractor, \emph{all three} combinators avoid disjoint-topic edges entirely
($0/30$ each, Table~\ref{tab:s-veto}): the bottleneck objective discards the
lowest-coherence edges regardless of combinator, so the veto adds nothing. This
is the ``redundant for maximin'' half of the claim, and it holds for the
veto-less combinators ($1-d_\times$, $C_{\AM}$) as well.

\textbf{Cardinality-constrained LP: the veto is load-bearing.}
Table~\ref{tab:s-veto} reports the sweep. The geometric mean traverses a
disjoint-topic edge in $0$ of $30$ storylines at \emph{every} length---a
structural guarantee, since its graph contains no such edge---whereas the
veto-less combinators are routed through disjoint transitions at a rate that
\emph{grows with storyline length}, from $1$--$4$ of $30$ at $K_{\mathrm{LP}}=6$
to $15$--$18$ of $30$ (half or more) at $K_{\mathrm{LP}}=25$. The cardinality
requirement forces a longer chain; without the veto that chain is increasingly
routed onto disjoint edges, whereas the geometric mean's pruned graph cannot
supply one. The result is deterministic for a fixed pipeline---the flow-pruned
extraction and the linear-programming solver are both deterministic---so the only
source of run-to-run variability is the upstream UMAP$\to$HDBSCAN clustering,
characterized in Section~\ref{sec:s-perturb} and Experiment~6 of the main text.

\begin{table}[h]
 \centering
 \small
 \caption{Veto under cardinality-constrained extraction (Cuba, $n=418$, $K=11$,
 $30$ endpoint pairs, seed $42$). Each cell is the number of extracted storylines
 containing an edge with $T=0$ exactly (a disjoint-topic transition; the veto
 boundary). Maximin is length-invariant; the linear program is swept over its
 cardinality (storyline-size) constraint $K_{\mathrm{LP}}$. The geometric mean is
 $0/30$ at every length because $C_{\GM}=0$ prunes disjoint edges from its graph,
 while the veto-less combinators traverse them at a rate rising with
 length.\label{tab:s-veto}}
 \begin{tabular}{lrrr}
  \toprule
  \textbf{Extractor} & $C_{\GM}$ & $1-d_\times$ & $C_{\AM}$ \\
  \midrule
  Maximin (Trails), any length & 0/30 & 0/30 & 0/30 \\
  \midrule
  LP, $K_{\mathrm{LP}}=6$  & 0/30 & 1/30  & 4/30 \\
  LP, $K_{\mathrm{LP}}=8$  & 0/30 & 5/30  & 5/30 \\
  LP, $K_{\mathrm{LP}}=10$ & 0/30 & 6/30  & 6/30 \\
  LP, $K_{\mathrm{LP}}=12$ & 0/30 & 11/30 & 8/30 \\
  LP, $K_{\mathrm{LP}}=15$ & 0/30 & 10/30 & 12/30 \\
  LP, $K_{\mathrm{LP}}=20$ & 0/30 & 14/30 & 14/30 \\
  LP, $K_{\mathrm{LP}}=25$ & 0/30 & 15/30 & 18/30 \\
  \bottomrule
 \end{tabular}
\end{table}

\textbf{Caveats.} Two points fix the scope. (i)~\emph{Cardinality, not
coverage.} The swept constraint is the Maps LP's node-count constraint, which
sets storyline size; it is \emph{not} the LP's separate topic-coverage
constraint. We separately implemented the actual coverage constraint (the Maps
``MinCover'' requirement, whose per-cluster credit is the Bhattacharyya
coefficient $\sum_k\sqrt{\hat e_{ik}\,\hat e_{jk}}$ of the endpoints' membership
vectors) and swept its threshold: because that credit rewards within-cluster
(high-$T$) edges and assigns disjoint-topic edges essentially zero credit,
coverage steers the path \emph{toward} topical cohesion and only reduces
disjoint-edge traversal. The cardinality-only result reported here is therefore
conservative---adding coverage strengthens, not weakens, the veto's relevance.
(ii)~\emph{The metric is the exact boundary $T=0$.} The veto is the statement
$C_{\GM}=0\iff T=0$; relaxing the threshold to $T<0.05$ admits \emph{near}-disjoint
edges ($0<T<0.05$, where $C_{\GM}=\sqrt{A\cdot T}>0$ and the veto does not act),
on which the geometric mean is small but nonzero ($0$ to $6$ of $30$ across the
sweep). $T=0$ is thus the faithful metric for the veto, and it is there that the
geometric mean is uniformly $0$.

\end{document}